\newtheorem{definition}{\bf Definition}
\newtheorem{assumption}{\bf Assumption}
\newtheorem{remark}{\bf Remark}
\newtheorem{theorem}{\bf Theorem}
\newtheorem{lemma}{\bf Lemma}
\newtheorem{proposition}{\bf Proposition}
\newcolumntype{"}{@{\hskip\tabcolsep\vrule width 2pt\hskip\tabcolsep}}
\newcommand{\mr}{\mathrm}
\newcommand{\mc}{\mathcal}
\newcommand{\mbf}{\mathbf}
\newcommand{\mbb}{\mathbb}
\newcommand{\mbZ}{\mathbb Z}
\newcommand{\mbR}{\mathbb R}
\newcommand{\mbC}{\mathbb C}
\newcommand{\mcR}{\mathcal R}
\newcommand\numberthis{\addtocounter{equation}{1}\tag{\theequation}}
\DeclarePairedDelimiter{\diagfences}{(}{)}
\newcommand{\blockdiag}{\operatorname{blockdiag}\diagfences}
\title{\LARGE \bf
Data-driven control on encrypted data
}
\author{Andreea B. Alexandru, 
Anastasios Tsiamis and 
George J. Pappas%
\thanks{The authors are with the Department of Electrical and Systems Engineering, University of Pennsylvania, Philadelphia, PA 19104.
        {\tt\small \{aandreea,atsiamis,pappasg\}@seas.upenn.edu}}%
}
\begin{document}

\maketitle
\thispagestyle{plain}
\pagestyle{plain}

\begin{abstract}
We provide an efficient and private solution to the problem of encryption-aware data-driven control. We investigate a Control as a Service scenario, where a client employs a specialized outsourced control solution from a service provider. The privacy-sensitive model parameters of the client's system are either not available or variable. Hence, we require the service provider to perform data-driven control in a privacy-preserving manner on the input-output data samples from the client. To this end, we co-design the control scheme with respect to both \emph{control performance} and \emph{privacy specifications}. 
First, we formulate our control algorithm based on recent results from the behavioral framework, and we prove closeness between the classical formulation and our formulation that accounts for noise and~precision errors arising from encryption. Second, we use a state-of-the-art leveled homomorphic encryption scheme to enable the service provider to perform high complexity computations on the client's encrypted data, ensuring privacy. 
Finally, we streamline our solution by exploiting the rich structure of data, and meticulously employing ciphertext batching and rearranging operations to enable parallelization. This solution achieves more than twofold runtime and memory improvements compared to our prior work. 
\end{abstract}

\IEEEpeerreviewmaketitle

\section{Introduction}
\label{sec:introduction}

As cloud services become pervasive, inspiring the term ``Everything as a Service"~\cite{Xaas}, privacy of the underlying data and algorithms becomes indispensable. Distributed systems coordination and process control enterprises can benefit from such services by commissioning cloud servers to aggregate and manipulate the increasing amounts of data or implement reference governors. 
For example, in the context of smart building automation, Control as a Service (CaaS) businesses~\cite{smart-buildings}~\mbox{offer} high-performance algorithms to optimize a desired cost, while achieving the required control goals. Clients like hospitals, factories, commercial and residential buildings might prefer such specialized outsourced solutions rather than investing~in local control solutions. 
However, the data that is outsourced, stored and processed at a CaaS provider's cloud server consists of privacy-sensitive measurements such as user patterns, energy and temperature measurements, occupancy information etc. Nowadays, data can be monetized and used to disrupt the normal functionality of a system, leading to frequent database leakages and cyberattacks. This compels researchers and practitioners to design decision algorithms that are optimized for both efficiency and privacy. In this work, we provide a proof of concept for such a private cloud-based control algorithm.

In a client-server context, homomorphic encryption provides an appealing tool that allows one server to evaluate a functionality over the encrypted data of the client without the need to decrypt the data locally, at independently specified precision and security levels. In contrast, secure multi-party computation tools require collaboration between multiple cloud servers to execute the encrypted computations, and differential privacy introduces a trade-off between the accuracy and privacy levels. As discussed in~\cite{vanDijk2010impossibility}, private single client computing can be enabled using homomorphic encryption schemes. Moreover, this technology has matured enough for it to be deployed~in~practice, especially in the healthcare~\cite{iDash} and financial~\cite{Masters2019towards} sectors. Since its genesis in~\cite{GentryPhD}, fully homomorphic encryption (FHE) has been developed substantially, in terms of more theoretically efficient leveled constructions~\cite{Brakerski11,Gentry2013homomorphic,Cheon2017homomorphic}, bootstrapping methods~\cite{Cheon2018bootstrapping, Chen2019improved}, computational and hardware optimizations. There are many libraries that implement various FHE schemes and capabilities~\cite{Halevi2014algorithms,sealcrypto,PALISADE,TFHE} and a tremendous number of papers that build on them.

\subsection{Related work}\label{subsec:related}
Recently, private control algorithms operating on encrypted data have received a lot of interest~\cite{Kim16encrypting,Farokhi2017secure,Darup18towards,Alexandru2018cloud,Alexandru2019encrypted,Suh2020sarsa}. These works~can be classified with respect to the type of control algorithms they implement and the cryptographic tools they use. Linear control algorithms with public model and gains are considered in most works and are implemented using partially homomorphic encryption (PHE) schemes. 
Nonlinear control algorithms with public model are considered in~\cite{Darup18towards,Alexandru2018cloud}, where PHE is combined either with secure multiparty computation schemes or the client partakes in the computation. In~\cite{Kim16encrypting,Alexandru2019encrypted}, somewhat or fully homomorphic encryption schemes are preferred, either to guarantee more privacy for the system parameters (that are still known at the client) or to achieve more complex control algorithms. 
Like our current work,~\cite{Suh2020sarsa} is based on data-driven techniques and uses leveled homomorphic encryption. There, the cloud computes the value function in a reinforcement learning task over multiple time steps in a one-shot way, while we collect encrypted data from the client at every time step and compute on it iteratively. 
Finally, our current work enhances and improves~\cite{Alexandru2020towards} in the following aspects: we propose a new encrypted solution that reduces more than twofold the time and the memory requirements of the previous solution, allowing us to simulate more realistic and more secure scenarios; we prove the close relationship between the standard data-driven LQR and our reformulation; and we present many more extensions.

As mentioned before, most private cloud-based control schemes assume that the model of the system is known to the client or public. However, the model might not be always available and has to be computed from data. 
Identification and data-driven control of unknown systems have been extensively studied in classical and recent literature \cite{Ljung1999system,van2012subspace,Favoreel1999model,Willems2005note,markovsky2008data,de2019persistency,Salvador2019data,Coulson2019data,Baggio2019data,Berberich2020data}. These methods are usually designed with the objective of sample-efficiency and control performance, without considering a private implementation. 
For example, the standard system identification-certainty equivalence control architecture might require solving non-convex problems~\cite{yu2018identification} or might involve Singular Value Decomposition (SVD)~\cite{van2012subspace}, which are prohibitive from an encrypted evaluation perspective.
On the other hand, the behavioral framework~\cite{Willems2005note,markovsky2008data,de2019persistency,Coulson2019data,Berberich2020data,VanWaarde2020data}, which has received renewed interest recently, is more compatible with modern encryption tools and allows us to use less costly encrypted operations. 
In the behavioral framework, an alternative data-driven representation of linear systems is considered: if the collected input-output data are rich enough (input is persistently exciting), then the model is implicitly characterized by the collected data~\cite{Willems2005note}.

\subsection{Contributions}
\label{subsec:contributions}

Our goal is to perform online data-driven control on encrypted data, while maintaining the privacy of the client's uploaded input-output data, desired setpoint and control actions. In the context of private implementation, we need to depart from the typical techniques used in the non-private versions and make compromises between tracking performance and privacy. 
Our controller is based on the behavioral framework and the control performance is captured by the LQR cost. We consider two versions and adapt them to account for an encrypted implementation: i) an offline version, where precollected data from the system is encrypted and handed to the cloud to compute an offline feedback control law, and ii) an online version, where the cloud server computes the control at every time step based on both the offline precollected data and on the encrypted measurements received from the client. 
We employ a new leveled homomorphic encryption scheme optimized for efficiency and precision as the powerful base tool for encrypting and evaluating the private data on the untrusted cloud machine. At the same time, we rewrite the computation of the control input as a low-depth arithmetic circuit by using mathematical tools such as the Schur complement to facilitate matrix inversions, and carefully batching the private values into ciphertexts. 
Specifically, our contributions are the following:
\begin{itemize}
	\item Formulate a setpoint tracking problem that learns the control action from the available data at every time step and is amenable to encrypted implementation.
	\item Prove the closeness between this formulation and a data-driven LQR problem.	
	\item Extend this approximation to allow the online collection and incorporation of samples to improve robustness and adaptation to new data and to errors arising from the use of encryption. 
	\item Design storage and computation efficient encrypted versions of the offline and online data-driven problems, by carefully encoding the values to facilitate parallelization and re-engineering the way the operations are performed. 
	\item Implement the algorithms for a standard security parameter and realistic client and server machines. Present the results for a temperature control problem and showcase the storage and complexity improvements compared to the previous algorithm in~\cite{Alexandru2020towards}. 
\end{itemize}

We emphasize that the techniques for performing the encrypted computations efficiently are one of main contributions of the paper. These tools can easily be generalized and used in implementing other related problems in an encryption-friendly way, such as learning and adaptive control algorithms, recursive least squares, sequential update of inverses etc.%

\subsection{Notation}
For a vector $\mbf x\in\mbb R^n$, we denote by $\mbf x_{[i]}$ the $i$'th element of the vector. We denote the $L^2$ norm of a vector $\mbf x$ by $\lVert \mbf x\rVert_2$ and the spectral norm of a matrix $\mbf{M}$ by $\lVert\mbf{M}\rVert_{2}$. For a matrix $\mbf P\in \mbb S^n_{++}$ the $\mbf P$-quadratic norm is $\lVert\mbf x\rVert^2_{\mbf P} := \mbf x^\intercal \mbf P \mbf x$. The vector $\mbf e_i$ is the vector of all zeros, except for position $i-1$ where it has 1. For a set of consecutive integers, we define $[n]:=\{0,\ldots,n-1\}$. The notation $\mr{E}(\cdot)$ denotes the encryption of a quantity, with more parameters to be specified in the text. 

\section{Problem formulation}
\label{sec:formulation}
Let $\mbf x\in\mbb R^n, \mbf u \in \mbb R^m, \mbf y \in \mbb R^p$ be the state, control input and measurement of a linear system in~\eqref{eq:system}. 
A client owns system~\eqref{eq:system} and contracts a cloud service to provide the control for this system: 
\begin{align}\label{eq:system}
\begin{split}
	\mbf x_{t+1} &= \mbf A \mbf x_t + \mbf B \mbf u_t, \quad \mbf y_t = \mbf C \mbf x_t, 
\end{split}
\end{align}
as depicted in Figure~\ref{fig:diagram}.
A private CaaS should solve the control problem that minimizes the cost associated to setpoint tracking while satisfying the unknown system's dynamics~\eqref{eq:system}. 

\begin{figure}[ht]
	\centering
	\includegraphics[width=0.35\columnwidth]{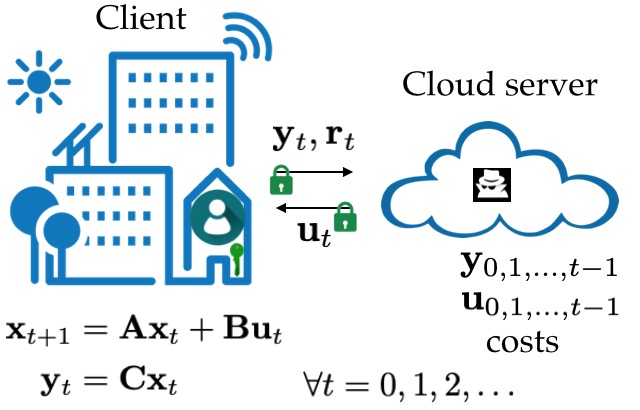}
	\caption{Client-server diagram for data-driven control.}
	\label{fig:diagram}
\end{figure}

\noindent\textbf{Control requirements. }
Assume we are given a batch of offline input-output data for system~\eqref{eq:system}. 
Then, at every time $t$, given $\mbf u_{0:t-1}$ and $\mbf y_{0:t-1}$ we want to compute in a private and receding horizon fashion $\mbf u^{\ast,t}\in\mbb R^{Nm}$ in order to track the reference $\mbf r_t$, for some costs $\bar{\mbf Q}, \bar{\mbf R}$, which is the solution of the LQR optimization problem:
\begin{align}\label{eq:basicLQR}
\begin{split}
\min_{\mbf u,\mbf y}~&~\frac{1}{2} \sum_{k=t}^{N+t-1} \left( || \mbf y_k - \mbf r_{k}||_{\bar{\mbf Q}}^2 + || \mbf u_k ||_{\bar{\mbf R}}^2 \right) \\
s.t.~&~\mbf x_{k+1}=\mbf A \mbf x_{k}+ \mbf B \mbf u_k,\, \mbf y_k=\mbf C \mbf x_k,
\end{split}
\end{align}
with no prior knowledge about the system model $\mbf A, \mbf B, \mbf C$. 
We also want~\eqref{eq:basicLQR} to perform well under unknown small process and measurement noise and bias. 

In this work, we focus on investigating an approximation of problem~\eqref{eq:basicLQR} without hard constraints on the inputs or outputs of the system. In Section~\ref{sec:reformulation}, we describe a data-driven reformulation of~\eqref{eq:basicLQR} that is encryption-friendly. 

\noindent\textbf{Privacy requirements. }
In the scenario we consider, the cloud service should not be able to infer anything about the client's private data, which consists of the input signals $\mbf u$, the output signals $\mbf y$, the model $\mbf A,\mbf B,\mbf C$ and state $\mbf x$ (the last four quantities being unknown in a data-driven control problem), and any intermediate values. The costs $\bar{\mbf Q}, \bar{\mbf R}$ can be chosen by the cloud, as part of the CaaS service or chosen by the client. 
The cloud service is considered to be \textit{semi-honest}, which means that it does not deviate from the client's specifications. We assume this to be the case because the cloud server is under contract. 

The formal privacy definition is provided in Section~\ref{sec:privacy} and can be written in the standard simulation paradigm \cite[Ch.~7]{Goldreich04foundationsII}, i.e., all information obtained by the server after the execution of the protocol (while also keeping a record of the intermediate computations) can be obtained solely from the inputs and outputs available to the server.

\noindent\textbf{Efficiency requirements. }
The computation time of the control action at the current time step should not exceed the sampling time, which is the time until the next measurement is fetched. In the private case, we expect a large overhead for complex controllers and consider applications with sampling time of the order of minutes. 
Moreover, we require the client to be exempt from heavy computation and communication, and the bulk of the computation (in the allowable time) should be performed at the server side. 

\section{Encryption-aware learning formulation}
\label{sec:reformulation}

We can formulate this control problem inspired by the behavioral and subspace identification frameworks explored in\cite{Willems2005note,markovsky2008data,de2019persistency,Coulson2019data,Berberich2020data}.
First, we introduce some preliminary concepts. 
A \textit{block-Hankel matrix} for the input signal $\mbf u = \left[ \begin{matrix}  \mbf u_0^\intercal & \mbf u_2^\intercal &\ldots& \mbf u_{T-1}^\intercal  \end{matrix} \right]^\intercal \in\mathbb R^{mT}$ is given by the following, for a positive integer $L$:
\begin{equation}\label{eq:HankelU}
	H_L\mbf U := \left[\begin{matrix}	\mbf u_0		& \mbf u_1 	& \ldots & 	\mbf u_{T-L}	\\
						   			\mbf u_1		& \mbf u_2 	& \ldots & \mbf	u_{T-L+1}	\\ 
						  			\vdots		& 			&\ddots &	\vdots	\\
						   			\mbf u_{L-1}	& \mbf u_{L}	& \ldots & \mbf u_{T-1}\end{matrix} \right].
\end{equation} 
Furthermore, the signal $\mbf u$ is persistently exciting of order $L$ if $H_L\mbf U\in\mathbb R^{mL \times (T-L+1)}$ is full row rank.

Let us construct block-Hankel matrices for the ``past" and ``future" input and output data, $\mbf u \in \mathbb R^{mT}$, respectively, $\mbf y \in \mathbb R^{pT}$, for $M$ samples for the past data and $N$ samples for the future data, and $S:=T-M-N+1$:
\begin{align}\label{eq:past+future}
\begin{split}
\mbf U^p :=& \left[\begin{matrix}\mbf I_{mM} & \mbf 0_{mM\times mN}\end{matrix}\right] \cdot H_{M+N} \mbf U\in\mathbb R^{mM \times S },\qquad \mbf U^f := \left[\begin{matrix}\mbf 0_{mN\times mM} & \mbf I_{mN}\end{matrix}\right] \cdot  H_{M+N}\mbf U\in\mathbb R^{mN \times S },\\
\mbf Y^p :=& \left[\begin{matrix}\mbf I_{pM} & \mbf 0_{pM\times pN}\end{matrix}\right] \cdot H_{M+N}\mbf Y\in\mathbb R^{pM \times S },\hspace{1.2cm}
\mbf Y^f := \left[\begin{matrix}\mbf 0_{pN\times pM} & \mbf I_{pN}\end{matrix}\right] \cdot H_{M+N}\mbf Y \in\mathbb R^{pN \times S }.
\end{split}
\end{align}

\subsection{Offline feedback data-driven control problem}
\label{subsec:one_shot}

Assume that we are given precollected input and output data, with $\mbf U^p, \mbf Y^p, \mbf U^f, \mbf Y^f$ the respective past and future Hankel matrices, for some past and future horizons  $M,N$. Assume also that the precollected input is persistently exciting. 
\begin{assumption}[Data richness]
\label{assum:richness}
We assume the offline precollected input trajectory $\mbf u$ is persistently exciting of order $M+N+n$, where $n$ is the order of the system~\cite{Willems2005note}. 
\end{assumption}

Fix a time $t$ and let $\bar{\mbf u}_t=\mbf{u}_{t-M:t-1}$ be the batch vector of the last $M$ inputs. The batch vector of the last $M$ outputs $\bar{\mbf y}_t$ is defined similarly. If $M\ge n$, then the standard LQR problem~\eqref{eq:basicLQR} can be re-formulated as
the following data-driven control problem~\cite{Coulson2019data}:
\begin{align}\label{eq:optimization0}
\begin{split}
\min_{\mbf g,\mbf u,\mbf y}~&~\frac{1}{2} \sum_{k=t}^{N+t-1} \left( || \mbf y_k - \mbf r_{k}||_{\bar{\mbf Q}}^2 + || \mbf u_k ||_{\bar{\mbf R}}^2 \right) \\
s.t.~&~\left[\begin{matrix} \mbf U^p \\ \mbf Y^p \\ \mbf U^f \\ \mbf Y^f \end{matrix}\right] \cdot \mbf g = \left[\begin{matrix} \bar{\mbf u}_t \\ \bar{\mbf y}_t \\ \mbf u \\ \mbf y \end{matrix}\right],
\end{split}
\end{align}
where the state representation has been replaced with the precollected data and $g$ represents the preimage of the system's trajectory with respect to the precollected block-Hankel matrices.
Note that $\mbf u^{\ast,t}_{[1:m]}$, the first $m$ elements of $\mbf u^{\ast,t}$, will be input into the system in a receding horizon fashion, and $\mbf y^{\ast,t}$ is the predicted output.

We now depart from the behavioral control problem~\eqref{eq:optimization0} considered in the existing literature and explore a more encryption-friendly form.

First, we rewrite~\eqref{eq:optimization0} as a minimization problem depending only on $\mbf g$ by enforcing $\mbf u = \mbf U^f \mbf g$ and $\mbf y = \mbf Y^f \mbf g$.  
Second, in practice, there will be noise affecting the output measurement, as well as precision errors induced by encryption, which might prevent an exact solution to the equality constraint of~\eqref{eq:optimization0}. 
Hence, we prefer a least-squares penalty approach to the equality constraint in~\eqref{eq:optimization0} with regularization weights $\lambda_y$ and $\lambda_u$.
Finally, to reduce overfitting and precision errors due to encryption, we also penalize the magnitude of $\mbf g$ through two-norm regularization.~We opt for two-norm regularizations to obtain better efficiency for the encrypted algorithm, as well as more robustness with respect to noise, and uniqueness of the solution $\mbf g^{\ast,t}$.  This yields the following formulation, where $\mbf Q=\blockdiag{\bar{\mbf Q},\dots,\bar{\mbf Q}}$ and $\mbf R=\blockdiag{\bar{\mbf R},\dots,\bar{\mbf R}}$ and we reuse the notation $\mbf r_t$ for the batch reference signal: 
\begin{align}\label{eq:optimization2}
\begin{split}
\min_{\mbf g}~& \frac{1}{2} \left( ||\mbf Y^f\mbf g-\mbf r_t||_{\mbf Q}^2 + ||\mbf U^f\mbf g||_{\mbf R}^2 + \lambda_y || \mbf Y^p \mbf g - \bar{\mbf y}_t ||_2^2 + \lambda_u || \mbf U^p \mbf g - \bar{\mbf u}_t ||_2^2 + \lambda_g || \mbf g ||_2^2 \right).
\end{split}
\end{align}
Note that the resulting problem~\eqref{eq:optimization2} is an approximation of~\eqref{eq:optimization0}. Finally,~\eqref{eq:optimization2} can be written as a quadratic program:
\begin{align}\label{eq:optimization3}
\begin{split}
\min_{\mbf g}~&~ \frac{1}{2} \mbf g^\intercal \mbf M \mbf g - \mbf g^\intercal \left({\mbf Y^f}^\intercal \mbf Q \mbf r_t + \lambda_y {\mbf Y^p}^\intercal \bar{\mbf y}_t +\lambda_u {\mbf U^p}^\intercal \bar{\mbf u}_t   \right),
\end{split}
\end{align}
where $\mbf M\in\mbR^{S\times S}$ is:
\begin{equation}\label{eq:M}
\mbf M:= {\mbf Y^f}^\intercal \mbf Q\mbf Y^f + {\mbf U^f}^\intercal \mbf R\mbf U^f + \lambda_y{\mbf Y^p}^\intercal \mbf Y^p + \lambda_u{\mbf U^p}^\intercal \mbf U^p + \lambda_g \mbf I.
\end{equation}

Since~\eqref{eq:optimization3} is a strongly convex optimization problem (ensured by the regularization term $\lambda_g \mbf I$), we can find the optimal value for $\mbf g$ by zeroing the gradient of the objective function:
\begin{equation}\label{eq:opt_g}
\mbf g^{\ast,t} = \mbf M^{-1} \left ( {\mbf Y^f}^\intercal \mbf Q \mbf r_t + \lambda_y{\mbf Y^p}^\intercal \bar{\mbf y}_t + \lambda_u{\mbf U^p}^\intercal \bar{\mbf u}_t  \right ).
\end{equation}

Going back to the control input for the current time step we want to compute, we obtain from~\eqref{eq:opt_g}:
\begin{equation}\label{eq:opt_u}
\mbf u^{\ast,t} = \mbf U^f\mbf M^{-1} \left ( {\mbf Y^f}^\intercal \mbf Q \mbf r_t + \lambda_y{\mbf Y^p}^\intercal \bar{\mbf y}_t + \lambda_u{\mbf U^p}^\intercal \bar{\mbf u}_t  \right ),
\end{equation}
from which we select the first $m$ elements and input them to the unknown system.

As seen from~\eqref{eq:opt_u}, the controller has the form of a dynamic output-feedback law, where the feedback terms are computed using only the offline precollected data. 
More details about the choices we made in assembling this data-driven control problem formulation are given in Section~\ref{sec:FHE} and Appendix~\ref{app:theory}.

\subsection{Closeness between approximate and original problems}
We prove that if we select small enough regularization coefficient $\lambda_g$ and large enough penalty coefficients $\lambda_y,\lambda_u$, then the solution of the approximate problem~\eqref{eq:optimization2} is very close to the minimum norm solution of the behavioral control problem~\eqref{eq:optimization0}.
The main challenge in the proof is that the past and future Hankel matrices are rank-deficient. Hence the behavioral control problem~\eqref{eq:optimization0} involves singular matrices in both the objective and the constraints, and has multiple optimal solutions. This requires an involved analysis, where we deal with pseudo-inverses and subspaces, shown in Appendix~\ref{app:theory}. 
To formally state the result we need two definitions.
The set of optimal solutions of~\eqref{eq:optimization0} is denoted by $ \mathcal{G}_{\mathrm{opt}}$:
\begin{equation}\label{eq:set_optimal_solutions}
    \mathcal{G}_{\mathrm{opt}}:= \left\{\mbf{g}:\:\mbf{g} \text{ solves~\eqref{eq:optimization0}}\right\}.
\end{equation}
The minimum norm element of $ \mathcal{G}_{\mathrm{opt}}$ is defined as
\begin{align}
    \mbf{g}_{\min}&:= \arg\min_{\mbf{g}\in\mathcal{G}_{\mathrm{opt}}} ||\mbf{g}||_2 \label{eq:minimum_norm_g}.
\end{align}

\begin{theorem}\label{thm:closeness}
Consider the original behavioral control problem~\eqref{eq:optimization0} and its approximation~\eqref{eq:optimization2}, with penalty coefficient $\lambda_y=\lambda_u=\lambda>0$ and regularization coefficient $\lambda_g>0$. Let $\mbf g_{\min}$ be the minimum norm solution of the behavioral problem~\eqref{eq:optimization0} as defined in~\eqref{eq:minimum_norm_g}. Let $\mbf g^*$ be the optimal solution of the approximate problem~\eqref{eq:optimization2}. Then:
\begin{equation}
    \left\lVert\mbf{g}_{\min}-\mbf{g}^*\right\rVert_2\rightarrow 0,
\end{equation}
as $(\lambda_g,\lambda)\rightarrow (0,\infty)$ restricted on the set $\lambda_g>0,\lambda>0$.
\end{theorem}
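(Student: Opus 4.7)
The plan is to rewrite both problems in a common constrained quadratic-program form, isolate a subspace on which the rank deficiencies of the Hankel data vanish, and then bound $\mathbf{g}^* - \mathbf{g}_{\min}$ by directly subtracting the first-order optimality conditions of the two problems. The first step is to eliminate the auxiliary variables from~\eqref{eq:optimization0} via $\mathbf{u}=\mathbf{U}^f\mathbf{g}$ and $\mathbf{y}=\mathbf{Y}^f\mathbf{g}$, reducing the behavioral problem to $\min_{\mathbf{g}}J(\mathbf{g})$ subject to $\mathbf{A}\mathbf{g}=\mathbf{b}$, with $\mathbf{A}:=[(\mathbf{U}^p)^\intercal\;(\mathbf{Y}^p)^\intercal]^\intercal$, $\mathbf{b}:=[\bar{\mathbf{u}}_t^\intercal\;\bar{\mathbf{y}}_t^\intercal]^\intercal$, and $J(\mathbf{g}):=\tfrac{1}{2}\|\mathbf{Y}^f\mathbf{g}-\mathbf{r}_t\|_{\mathbf{Q}}^2+\tfrac{1}{2}\|\mathbf{U}^f\mathbf{g}\|_{\mathbf{R}}^2$. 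Willems' fundamental lemma under Assumption~\ref{assum:richness} ensures $\mathbf{b}\in\mathrm{range}(\mathbf{A})$, so $\mathcal{F}:=\{\mathbf{g}:\mathbf{A}\mathbf{g}=\mathbf{b}\}$ is nonempty and $\mathcal{G}_{\mathrm{opt}}$ is a nonempty affine subspace.

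Let $\mathbf{H}:=(\mathbf{Y}^f)^\intercal \mathbf{Q}\mathbf{Y}^f+(\mathbf{U}^f)^\intercal \mathbf{R}\mathbf{U}^f$ denote the Hessian of $J$ and define the common null space $\mathcal{N}:=\ker(\mathbf{A})\cap\ker(\mathbf{H})$. A direct calculation shows that the approximate objective is invariant along $\mathcal{N}$ up to the Tikhonov term: writing $\mathbf{g}=\mathbf{g}_\perp+\mathbf{g}_{\mathcal{N}}$ with $\mathbf{g}_\perp\in\mathcal{N}^\perp$, one obtains $F_{\lambda,\lambda_g}(\mathbf{g})=F_{\lambda,\lambda_g}(\mathbf{g}_\perp)+\tfrac{\lambda_g}{2}\|\mathbf{g}_{\mathcal{N}}\|_2^2$, which forces the unique minimizer $\mathbf{g}^*$ to lie in $\mathcal{N}^\perp$. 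The same decomposition yields $\mathcal{G}_{\mathrm{opt}}=\mathbf{g}_{\min}+\mathcal{N}$ with $\mathbf{g}_{\min}\in\mathcal{N}^\perp$, so both vectors sit on the subspace where $\mathbf{H}+\mathbf{A}^\intercal\mathbf{A}$ is positive definite. Subtracting the KKT condition $\mathbf{H}\mathbf{g}_{\min}+\mathbf{A}^\intercal\boldsymbol{\nu}^*=(\mathbf{Y}^f)^\intercal\mathbf{Q}\mathbf{r}_t$ from the normal equation~\eqref{eq:opt_g} gives
\begin{equation*}
(\mathbf{H}+\lambda\mathbf{A}^\intercal\mathbf{A}+\lambda_g\mathbf{I})(\mathbf{g}^*-\mathbf{g}_{\min})=\mathbf{A}^\intercal\boldsymbol{\nu}^*-\lambda_g\mathbf{g}_{\min},
\end{equation*}
with both sides in $\mathcal{N}^\perp$.

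The error estimate then follows by inverting on $\mathcal{N}^\perp$. Decompose $\mathcal{N}^\perp=\mathrm{range}(\mathbf{A}^\intercal)\oplus\mathcal{W}$ with $\mathcal{W}:=\ker(\mathbf{A})\cap\mathcal{N}^\perp$. By construction $\mathbf{A}^\intercal\mathbf{A}$ is strictly positive on $\mathrm{range}(\mathbf{A}^\intercal)$, while $\mathbf{H}$ is strictly positive as a quadratic form on $\mathcal{W}$ (any $\mathbf{w}\in\mathcal{W}$ with $\mathbf{H}\mathbf{w}=0$ would satisfy $\mathbf{w}\in\ker(\mathbf{A})\cap\ker(\mathbf{H})=\mathcal{N}$, forcing $\mathbf{w}=0$ since $\mathbf{w}\in\mathcal{N}^\perp$). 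In block form along this decomposition, the top-left block of $\mathbf{H}+\lambda\mathbf{A}^\intercal\mathbf{A}+\lambda_g\mathbf{I}$ scales like $\lambda$, so a Schur complement w.r.t.\ it shows the $\mathrm{range}(\mathbf{A}^\intercal)$-component of the inverse applied to $\mathbf{A}^\intercal\boldsymbol{\nu}^*$ is $O(1/\lambda)$, while the $\mathcal{W}$-component of the inverse applied to $\lambda_g\mathbf{g}_{\min}$ is $O(\lambda_g)$; together these yield $\|\mathbf{g}^*-\mathbf{g}_{\min}\|_2\to 0$ as $(\lambda,\lambda_g)\to(\infty,0)$.

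The main obstacle is that $\mathbf{H}$ and $\mathbf{A}^\intercal\mathbf{A}$ need not be simultaneously block-diagonal along $\mathrm{range}(\mathbf{A}^\intercal)\oplus\mathcal{W}$, so the Schur complement carries off-diagonal $\mathbf{H}$-blocks whose contributions must be shown to vanish uniformly as $\lambda\to\infty$. A conceptually cleaner alternative is a $\Gamma$-convergence argument: for fixed $\lambda_g>0$ the penalty family epi-converges as $\lambda\to\infty$ to $J+\tfrac{\lambda_g}{2}\|\cdot\|_2^2$ constrained to $\mathcal{F}$, whose minimizer in turn converges to $\mathbf{g}_{\min}$ by the standard Tikhonov limit, and a diagonal extraction then promotes these into the joint limit. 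Either route pivots on the restriction to $\mathcal{N}^\perp$, which absorbs the rank deficiency of the Hankel matrices that the authors flag as the principal difficulty.
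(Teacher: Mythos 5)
Your argument is sound in its main line but follows a genuinely different route from the paper. The paper proceeds constructively: it takes the SVD of the past-data matrix $\mbf D_p=\mbf E\mbf\Sigma\mbf F^{\intercal}$, derives closed-form expressions for the whole solution set $\mathcal G_{\mathrm{opt}}$ and for $\mbf g_{\min}=\mbf g_1+\mbf g_2$ via first-order conditions and pseudo-inverses, introduces an explicit intermediate point $\bar{\mbf g}$ (the limit of $\mbf g^*$ as $\lambda\to\infty$ for fixed $\lambda_g$), bounds $\lVert\mbf g^*-\bar{\mbf g}\rVert_2=O(\lambda^{-1})$ with the Woodbury identity, and then sends $\lambda_g\to 0^+$ using the pseudo-inverse limit $\lim_{\delta\to 0^+}\mbf A^{*}(\delta\mbf I+\mbf A\mbf A^{*})^{-1}=\mbf A^{\dagger}$. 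You instead subtract the KKT system of the constrained problem from the normal equations of the penalized one and work entirely with the subspaces $\mathcal N=\ker(\mbf A)\cap\ker(\mbf H)$ and $\mathcal N^{\perp}=\mathrm{range}(\mbf A^{\intercal})\oplus\mathcal W$. Your structural claims check out: $\mathcal G_{\mathrm{opt}}$ is indeed a coset of $\mathcal N$ (this matches the paper's $\mbf g_3$ term, whose direction set is exactly $\mbf F_{\perp}\ker(\mbf F_{\perp}^{\intercal}\hat{\mbf M}_0\mbf F_{\perp})=\mathcal N$), both $\mbf g^*$ and $\mbf g_{\min}$ lie in $\mathcal N^{\perp}$, and the error equation $(\mbf H+\lambda\mbf A^{\intercal}\mbf A+\lambda_g\mbf I)(\mbf g^*-\mbf g_{\min})=\mbf A^{\intercal}\bm\nu^*-\lambda_g\mbf g_{\min}$ is correct. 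Your approach buys a direct quantitative bound $O(\lambda^{-1})+O(\lambda_g)$ without ever computing $\mbf g_{\min}$ or an intermediate solution explicitly; the paper's approach buys explicit formulas for all the objects involved, which it reuses elsewhere (e.g., Lemma~\ref{lem:solution_set_original} b) is invoked again in the precision analysis).

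Two caveats on your closing step. First, the ``main obstacle'' you flag is not actually an obstacle: after eliminating the $\mathcal W$-component, the relevant operator on $\mathrm{range}(\mbf A^{\intercal})$ is $\mbf H_{11}-\mbf H_{12}(\mbf H_{22}+\lambda_g\mbf I)^{-1}\mbf H_{21}+\lambda\mbf S+\lambda_g\mbf I$, and the Schur complement of the positive-definite block $\mbf H_{22}+\lambda_g\mbf I$ inside the positive-semidefinite $\mbf H|_{\mathcal N^{\perp}}$ is itself positive semidefinite, so the off-diagonal contributions need only be bounded, not vanishing; the whole operator is $\succeq\lambda\,\sigma_{\min}(\mbf S)\mbf I$ with $\sigma_{\min}(\mbf S)>0$, which gives $\lVert\bm\delta_1\rVert_2=O(\lambda^{-1})$ and then $\lVert\bm\delta_2\rVert_2=O(\lambda_g+\lambda^{-1})$ since $(\mbf H_{22}+\lambda_g\mbf I)^{-1}$ is uniformly bounded. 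You should carry this out rather than leave it conditional. Second, the $\Gamma$-convergence fallback with ``diagonal extraction'' would only establish convergence along particular sequences $(\lambda_g^{(k)},\lambda^{(k)})$, whereas the theorem asserts the joint limit over the full quadrant $\lambda_g>0,\lambda>0$; the Schur-complement route (like the paper's triangle-inequality split through $\bar{\mbf g}$) does give the joint limit, so it is the one to keep.
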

Although the behavioral problem has infinite solutions, due to the regularization term the approximate problem~\eqref{eq:optimization2} will only return a solution close to the minimum norm one $\mbf{g}_{\min}$. 

\subsection{Online feedback data-driven control problem}
\label{subsec:online}
To satisfy Assumption~\ref{assum:richness}, it is necessary that the precollected input signal has length at least $(m+1)(M+N+n)-1$,~cf.~\cite{Coulson2019data}. In practice, Assumption~\ref{assum:richness} might be violated if less precollected data is available. 
Another issue with the precollected data is that it can be affected by perturbations, e.g., measurement noise.
To alleviate these issues, we prefer an online algorithm, where the Hankel matrices $H_{M+N}\mbf U$ and $H_{M+N}\mbf Y$ are updated at each time step for another $\bar T$ steps with the used control input and the corresponding output measured (while keeping the block Hankel form). This approach empirically ensures richness of the data and robustness to perturbations. However, it is important that $\bar T$ is not too large, in order to prevent overfitting.

Note that the precollected and the online data in this~online algorithm belong to different trajectories. For this reason, we must compute the Hankel matrix for each data set separately, then append them in a single matrix~\cite{van2020willems}. This means that the online adaptation of the matrices $\mbf{U}^p,\mbf{U}^f,\mbf{Y}^p,\mbf{Y}^f$ can only start at time $t=M+N-1$, when we obtain enough data $\mbf u_0,\mbf y_0,\dots, \mbf u_{M+N-1}, \mbf y_{M+N-1}$ to fill the first Hankel matrix column for the online data set. We call this phase between $t=M-1$ and $t=M+N-1$ trajectory concatenation.

The data-driven LQR algorithm is given in Algorithm~\ref{alg:online}.

\begin{algorithm}[ht]
	\caption{Online data-driven control algorithm}
	\label{alg:online}
	\begin{algorithmic}[1] {}
		\Require $\bar{\mbf u}_t$, $\bar{\mbf y}_t$, $\mbf U^p$, $\mbf U^f$, $\mbf Y^p$, $\mbf Y^f$, $\mbf Q$, $\mbf R$, $\lambda_y$, $\lambda_u$, $\lambda_g$, $S = T - M - N + 1$, $\bar T$.
		\Ensure $\mbf u_t$ for $t=0,1,\ldots$.
		\For{$t=0,1,\ldots,M-1$}
		    \State Randomly select and input $\mbf u_t$ and measure the output $\mbf y_t$.%
		\EndFor
		\State Construct $\bar{\mbf u}_M= \mbf u_{0:M-1}$ and $ \bar{\mbf y}_M = \mbf y_{0:M-1}$.
		\For{$t=M,M+1,\ldots,M+N+\bar T-1$}
			\State Solve~\eqref{eq:optimization3} for $\mbf g^{\ast,t}$ and obtain~\eqref{eq:opt_g}.
			\State Compute $\mbf u^{\ast,t} = \mbf U^f\mbf g^{\ast,t}$ and obtain~\eqref{eq:opt_u}.
			\State Input to the system $\mbf u_t = \mbf u^{t,\ast}_{[1:m]}$ and measure the output $\mbf y_t$.
			\State Update $\bar{\mbf u}_t$ and $\bar{\mbf y}_t$ to be the last $M$ components of $\left[ \begin{matrix} \bar{\mbf u}_t^\intercal & \mbf u_t^\intercal\end{matrix} \right]^\intercal$ and $\left[ \begin{matrix} \bar{\mbf y}_t^\intercal & \mbf y_t^\intercal \end{matrix} \right]^\intercal$, respectively.
		 \If{$t=M+N-1$}
			    \State Add $\mbf u_{0:t}$ and $\mbf y_{0:t}$ to the $S+1$'th column of the trajectory Hankel matrices $H_{M+N}\mbf U$ and $H_{M+N}\mbf Y$. 
    		\ElsIf{$t>M+N-1$}  
    		    \State Set $S=S+1$. Use $\mbf u_t$, respectively $\mbf y_t$, to add a new column to $H_{M+N}\mbf U$, respectively $H_{M+N}\mbf Y$, while keeping the block Hankel matrix form.   		    
    		\EndIf
		\EndFor
		\While{$t \geq M+N+\bar T$}
		    \State Do lines 6--9.
		\EndWhile
	\end{algorithmic}%
\end{algorithm}%
%
\section{Homomorphic Encryption Preliminaries}
\label{sec:FHE}
The decryption primitive of a homomorphic encryption scheme is a homomorphism from the space of encrypted messages, or \textit{ciphertexts}, to the space of unencrypted messages, or \textit{plaintexts}. Homomorphic encryption allows one server to evaluate multivariate polynomial functionalities over the encrypted data of the client. 
An encryption scheme is called partially homomorphic if it supports the encrypted evaluation of either a linear polynomial or a monomial, somewhat or leveled homomorphic if it supports the encrypted evaluation of a polynomial with a finite degree and fully homomorphic if it supports the encrypted evaluation of arbitrary polynomials. 
Leveled homomorphic schemes can be turned into fully homomorphic schemes by a bootstrapping operation. 
We direct the interested reader to the following surveys on homomorphic encryption~\cite{Halevi2017homomorphic,Brakerski2019fundamentals}. 

The common term for such a multivariate polynomial functionality is \textit{arithmetic circuit} and the logarithm of the degree of the polynomial is the \textit{multiplicative depth} of the circuit. 
Each operation evaluated on ciphertexts introduces some noise, which, if it overflows, can prevent correct decryption. Multiplications introduce the most noise, so we focus on the number of sequential multiplications allowed in an instance of a leveled homomorphic scheme, which we call \textit{multiplicative budget}. 

In this paper, we work with a leveled homomorphic encryption scheme. Specifically, we use the version of the CKKS scheme~\cite{Cheon2017homomorphic}, optimized to run on machine word~size of 64-bit integer arithmetic~\cite{Cheon2018full,Halevi2019improved} instead of multiprecision integer arithmetic. We give more details with respect to the primitives of this scheme in Appendix~\ref{app:CKKS}. 
We chose this scheme because it can perform operations on encrypted real numbers with a smaller error than other leveled homomorphic schemes. Each real number is multiplied by a positive integer scaling factor and truncated, as commonly done, but the real advantage of this scheme is that one can remove the extra scaling factor occurring in the result after a multiplication, through a rescaling procedure, at very little error. This manages the magnitude of the underlying plaintexts, which could otherwise cause overflow in a large depth circuit. 

A ciphertext's size grows with the number of sequential multiplications it supports. Each ciphertext, respectively plaintext, is characterized by a \textit{level}, a \textit{scaling depth} and a number of \textit{moduli}. A new ciphertext (freshly encrypted, rather than obtained as the result of operations on other ciphertexts) is on level $L$, scaling depth 1 and has as many moduli as the multiplicative budget minus one. The number $L$ minus the current level corresponds to the number of rescaling operations previously performed on the ciphertext and scaling depth corresponds to the number of multiplications without rescaling that have been performed. 
After a multiplication followed by a rescaling procedure, the number of levels decreases by one, the scaling depth remains the same and one modulus is dropped. 
To avoid some technicalities and parallel the notion of circuit depth, in the rest of the manuscript, when we refer to the \textit{depth} of a ciphertext, we refer to the multiplicative depth of the arithmetic circuit used in order to obtain that ciphertext. If rescaling is done after every multiplication, then the depth will be the scaling depth minus one.

In the rest of the paper, we will analyze the multiplicative depth of a quantity $x$, obtained after evaluating an arithmetic circuit. 
Out of all circuits that output $x$ on the same inputs, we choose to evaluate the one with the minimum depth $d(x)$. 
In turn, this means that the ciphertext encrypting the resulting quantity $\mr{E}(x)$ will have $d(x)+1$ consumed levels. Note that the fewer the number of levels necessary in the computation, the cheaper the operations on ciphertexts are.

In the CKKS scheme, each plaintext is a polynomial in the ring of integers of a cyclotomic field with dimension $\mr{ringDim}$, which enables us to encode multiple scalar values in a plaintext and a ciphertext. Hence, we \textit{obtain ciphertexts that contain the encryption of a vector}. This is a standard practice in lattice-based homomorphic encryption that allows performing single instruction multiple data ($\mr{SIMD}$) operations, and can bring major computation and storage improvements when evaluating an arithmetic circuit. We can pack up to $\mr{ringDim}/2$ values in one plaintext/ciphertext using a Discrete Fourier Transform~\cite{Cheon2017homomorphic}. 
Packing can be thought of as the ciphertext having $\mr{ringDim}/2$ independent data slots. Abstracting the details away, the $\mr{SIMD}$ operations that can be supported~are addition, element-wise multiplication by a plaintext or ciphertext and data slot permutations that can achieve ciphertext rotations (e.g., used for summing up the values in every slot). In what follows, we will use $+$ and $\odot$ for $\mr{SIMD}$ addition and multiplication and $\rho(\mbf x,i)$ to denote the row vector $\mbf x$ rotated to the left by $i$ positions ($i<0$ means rotation to the right). 
However, once the values are packed, we should try to avoid extractions of individual values from the ciphertexts to improve computation time and circuit multiplicative depth.

In what follows, we denote by $\mr{e_{v0}}(\mbf x)$ the encoding of the vector $\mbf x$ followed by trailing zeros in one plaintext and by $\mr{e_{v\ast}}(\mbf x)$ the encoding of the vector $\mbf x$ followed by junk elements (elements whose value we do not care about, usually obtained through partial operations). We denote by $\mr{e_{vv}}(\mbf x)$ the encoding of the vector $\mbf x$ repeated $\left[ \mbf x \, \mbf x \, \mbf \, \ldots\right]$ in one plaintext. Finally, we also define the encoding of a vector repeated element-wise: for $\mbf x = \left[ \begin{matrix} x_{[1]} & x_{[2]} & \ldots & x_{[n]}\end{matrix}\right]$, this encoding is $\mr{e_{vr0}}(\mbf x) = \left[ x_{[1]} \, \ldots \, x_{[1]} \, x_{[2]} \, \ldots \, x_{[2]} \, \ldots \, x_{[n]} \, \ldots \, x_{[n]} \, 0 \, 0 \, \ldots \right]$, where each element is repeated for a specific number of times. Similarly, we use the notation $\mr{e_{vr\ast}}(\cdot)$ when the repeated elements are followed by junk elements. 
When constructing ciphertexts through encryption, we use similar notations for the encryption of the corresponding encodings of a vector: $\mr{E_{v0}}(\mbf x)$, $\mr{E_{v\ast}}(\mbf x)$, $\mr{E_{vv}}(\mbf x)$, $\mr{E_{vr0}}(\mbf x)$ and $\mr{E_{vr\ast}}(\mbf x)$. 

A scheme has a \textit{security parameter} $\kappa$ if all known attacks against the scheme take $2^\kappa$ bit operations. For CKKS, the security parameter is determined according to the Decisional  Ring  Learning  with  Errors problem hardness~\cite{Albrecht15}. 
We are also interested in the \textit{semantic security} of this encryption scheme, which, at a high level, states that any two ciphertexts are computationally indistinguishable to an adversary that does not have the secret key. 
\section{Co-design of Encrypted Controller}
\label{sec:main_idea}

According to the requirements in Section~\ref{sec:formulation}, the challenges for the encrypted data-driven control can be summarized as:
\begin{itemize}
	\item The computations are iterative and not readily formulated as (low-depth) arithmetic circuits. 
	\item The problem is computationally intensive: it requires large storage, large matrix inversions and many consecutive matrix multiplications.
	\item The precision loss due to the private computations might affect the control performance.
\end{itemize}

To deal with these challenges, we design an encrypted version of the closed-form solution~\eqref{eq:opt_g} of the control~problem stated in Section~\ref{sec:formulation} and manipulate the computations involving matrix inverses to reduce the multiplicative depth needed. We employ the CKKS homomorphic scheme described in Section~\ref{sec:FHE} to address the precision challenge.

First, we approximate problem~\eqref{eq:optimization0} into optimization problem~\eqref{eq:optimization3}, as shown in Section~\ref{sec:reformulation}, to simplify the encrypted computations and to avoid infeasibility due to noise and encryption errors. Second, we aim to write the computation of the solution of~\eqref{eq:optimization3} as a low-depth arithmetic circuit. 
Despite being preferable to an iterative algorithm for solving the optimization problem~\eqref{eq:optimization3} (that would increase the depth at every iteration), the closed-form solution~\eqref{eq:opt_g} involves the encrypted inversion of a matrix, which cannot be generally written as a low-depth arithmetic circuit. 
This is not a problem in the offline feedback version of the problem~\eqref{eq:optimization3}, because this matrix inversion is required only~once and such complex computations can be all performed offline, leaving only three encrypted matrix-vector multiplications to be performed at each time step, as shown in Section~\ref{sec:naSolution}. 
However, in the online feedback version of the problem, this matrix inversion and many consecutive matrix-vector multiplications are required at every time step. 
Nevertheless, we leverage the special structure of the matrix to be inverted by using Schur's complement~\cite[Ch.~0]{ben2003generalized} to reformulate the inverse computation as some lower multiplicative depth matrix-vector multiplications and one scalar division. Furthermore, to avoid performing the division, which is costly on encrypted data, the server sends to the client the denominator such that the client can return its inverse to the server, turning division into multiplication.

We also perform further valuable optimizations in terms of ciphertext packing, adding redundancy in the encoded values, and manipulating the computations to be performed in a $\mr{SIMD}$ mode, which are crucial to the tractability of the solution. 

The multiplicative depth of the arithmetic circuit computing $\mbf u^{\ast,t}_{[0:m-1]}$ can be reduced through reordering of the intermediate operations.
Given a circuit that computes $x$ and a circuit that computes $y$, with multiplicative depth $d(x)$ and $d(y)$, the multiplicative depth of the circuit that evaluates them in parallel and then computes their product is: 
$d({xy}) = \max(d(x),d(y)) + 1$.  
The multiplicative depth should not be confounded with the number of multiplications. Judiciously choosing the order in which to perform the multiplications can reduce the multiplicative depth of the result. For example, consider we want to compute $y = x_1x_2x_3x_4$, where $d({x_i}) = 0$. If we sequentially perform the multiplications, we obtain $d(y) = 3$. However, if we perform $y = (x_1x_2)(x_3x_4)$, we obtain $d(y)=2$. We will use this trick in Section~\ref{sec:solution}. 

Finally, as explained in Section~\ref{sec:FHE}, each ciphertext is created with a number of moduli corresponding to the depth of multiplications it can support before the underlying plaintext is corrupted by noise. After each multiplication and rescaling, a modulus is removed from the ciphertext, leading to more efficient computations. We exploit this fact in our solution, by making sure we compute each operation on ciphertexts that have the minimum required number of moduli. This happens by directly encrypting the message at a lower number of moduli and by removing the number of moduli on the fly. 
\section{Offline feedback encrypted solution}
\label{sec:naSolution}
Recall the closed-form solution of the optimization problem~\eqref{eq:optimization3}. 
We now compute the multiplicative depth of the quantities of interest for consecutive time steps. 
First, the quantities $\bar{\mbf u}_0$, $\bar{\mbf y}_0$, the reference signal and measurements are freshly encrypted at every time step $t$, hence they are inputs to the circuit that computes $\mbf u_t$ and have a multiplicative depth of 1. This means, $ \forall t \geq 0$, 
$d(\mbf y_t) = d(\mbf r_t) = d(\bar{\mbf u}_0) = d(\bar{\mbf y}_0) = 0$. 
Second, we assume that all the quantities obtained offline will have multiplicative depth 0. We do not address the offline computations in this paper; since these computations depend only on the offline data and are one-time, expensive secure solutions can be used, e.g., the cloud could perform the encrypted computations, then perform bootstrapping to refresh the ciphertexts~\cite{Cheon2018bootstrapping,Chen2019improved}. 
Hence, the cloud has fresh encryptions of the following products: $\mbf A_r := \left[\begin{matrix} \mbf I_m & \mbf 0_{(N-1)m} \end{matrix}\right] \mbf U^f \mbf M^{-1}{\mbf Y^f}^\intercal \mbf Q\in\mathbb R^{m\times pN}$, $\mbf A_y := \left[\begin{matrix} \mbf I_m & \mbf 0_{(N-1)m} \end{matrix}\right]\mbf U^f \mbf M^{-1}\lambda_u{\mbf Y^p}^\intercal\in\mathbb R^{m\times pM}$, $\mbf A_u := \left[\begin{matrix} \mbf I_m & \mbf 0_{(N-1)m} \end{matrix}\right]\mbf U^f \mbf M^{-1}\lambda_u{\mbf U^p}^\intercal \in \mathbb R^{m \times mM}$, and the underlying messages are inputs to the circuit and they have multiplicative depth 0. Then:
\begin{equation}~\label{eq:u_offline}
\mbf u_t = \mbf A_r \mbf r_t + \mbf A_y \bar{\mbf y}_t + \mbf A_u\bar{\mbf u}_t.
\end{equation}

For $t=0$, we obtain $d(\mbf u_0) = \max(d(\mbf r_0) + 1, d(\bar{\mbf y}_0) + 1, d(\bar{\mbf u}_0) + 1)=1$. Generalizing:
\begin{equation}\label{eq:non-adaptiveU}
	d(\mbf u_t) = d(\bar{\mbf u}_t) + 1, ~~t\geq 0.
\end{equation}
At time $t+1$, $\bar{\mbf u}_{t+1}$ will be updated by $\mbf u_t$.

We next present three encrypted methods of computing~\eqref{eq:u_offline}, that exhibit trade-offs in depth, memory and communication. 

If we individually encrypt each element of the quantities in~\eqref{eq:u_offline} in a separate ciphertext (the inputs to the circuit computing $\mbf u_t$ will be all elements, rather than three vectors and three matrices), $\bar{\mbf u}_{t+1}$ will have the multiplicative depth of $\mbf u_t$ and $d(\mbf u_t) = t+1$, but memory-wise, a step will require $(m+1)(pN+pM+mM)$ ciphertexts. 

It is more efficient from both storage and computation points of view to encode a vector instead of a scalar in a ciphertext and perform the matrix-vector multiplications by a diagonal method. This vector encoding requires a different analysis. 

\textbf{Diagonal method }~\cite{Juvekar2018gazelle,Alexandru2020private}. We explore a method for efficient matrix-vector multiplication where the diagonals of the matrix are encrypted in separate ciphertexts (the notion of diagonal is extended for rectangular matrices).

\subsubsection{Tall matrix}
\label{appsub:tall}
Consider we want to multiply a tall matrix $\mbf S\in\mbR^{u\times v}$, $u\geq v$, by a vector $\mbf p\in \mbR^v$. To this end, we extract the extended diagonals $\mbf d_i$ of $\mbf S$ such that $\mbf d_i$ will have as many elements as the number of rows $u$, for $i = 0,\ldots,v-1$. The corresponding result $\mbf q = \mbf S \mbf p$ is computed as follows, where $\tilde{\mbf p}$ is $\mbf p$ concatenated with itself as many times such that $\tilde{\mbf p}\in\mbb R^u$:
\[\mbf q = \sum_{i=0}^{v-1} \mbf d_i \odot \rho(\tilde{\mbf p}, i).\]

This way of computing the encrypted matrix-vector multiplication is highly efficient when using ciphertext batching. Specifically, we encrypt each extended diagonal of $\mbf S$ in one ciphertext, which gives us a storage of only $v\leq u$ ciphertexts and we encrypt $\mbf p$ repeatedly in one ciphertext. 
The encrypted result is then computed as:
\begin{equation}
\label{eq:tall}
\mr{E_{v0}}(\mbf q) = \sum_{i=0}^{v-1} \mr{E_{v0}}(\mbf d_i) \odot \rho(\mr{E_{vv}}(\mbf p), i).
\end{equation}

\subsubsection{Wide matrix}
\label{appsub:wide}
In the case of a wide matrix $\mbf S\in\mbR^{u\times v}$ with $u \leq v$, instead of using extended diagonals, we use reduced diagonals. One reduced diagonal $\tilde{\mbf d}_i$, for $i=0,\ldots,v-1$ will have as many elements as the number of rows $u$. The storage is not as efficient since we have to encrypt $v \geq u$ ciphertexts for the diagonals. Apart from this, the computation is identical to~\eqref{eq:tall} and produces $\mr{E_{v0}}(\mbf q)$. 

We can improve the storage (at the cost of some extra rotations) when the number of columns divides the number of rows. In that case, we again extract the extended diagonals $\mbf d_i\in\mbR^v$ for $i=0,\ldots,u-1$. The computation of the matrix-vector multiplication takes the following form:
\begin{align*}
    \mbf z &= \sum_{i=0}^{u-1} \mbf d_i \odot \rho(\mbf p, i )\\
    \mbf q &= \left[\mbf z + \sum_{j=1}^{\lceil\log(v/u)\rceil} \rho\left(\mbf z, u\cdot2^{\lceil\log(v/u)\rceil-j}\right)\right]_{[0:u-1]},
\end{align*}
and produces $\mr{E_{v\ast}}(\mbf q)$ instead of $\mr{E_{v0}}(\mbf q)$.

When $u << v$, it is useful to append rows of zeros such that $u$ divides $v$, and then use the above method.

Let us return to~\eqref{eq:u_offline}. In general, $\mbf A_r,\mbf A_y,\mbf A_u$ are wide matrices, but there can exist particular cases when $m >> p$ or when we want to compute all $\mbf u^{\ast,t}$, leading to tall matrices.

We assume that at the onset of time step $t$, the cloud server has $\mr{E_{v0}}(\bar{\mbf u}_t),\mr{E_{v0}}(\bar{\mbf y}_t), \mr{E_{v0}}(\mbf r_t)$. The cloud server also has $\mr{E_{v0}}(\mr{diag}_i\mbf A_r)$, $\mr{E_{v0}}(\mr{diag}_i\mbf A_y)$, $\mr{E_{v0}}(\mr{diag}_i\mbf A_u)$, i.e., each extended diagonal (whose exact definition varies with the shape of the matrix: tall or wide) of the matrices $\mbf A_r, \mbf A_y, \mbf A_u$ is encrypted in a separate ciphertext. In order to be able to use the diagonal methods, the cloud server first obtains $\mr{E_{vv}}(\bar{\mbf u}_t),\mr{E_{vv}}(\bar{\mbf y}_t), \mr{E_{vv}}(\mbf r_t)$ by rotating and adding $\mr{E_{v0}}(\bar{\mbf u}_t),\mr{E_{v0}}(\bar{\mbf y}_t), \mr{E_{v0}}(\mbf r_t)$. This vector packing substantially reduces memory: $3+\max(m,pN)+\max(m,pM)+mM$ ciphertexts in the worst case and $3+\min(m,pN)+\min(m,pM)+m$ in the best case, depending on the shape of the matrix and the type of diagonal chosen, but will require an extra level. From these quantities, the cloud server computes and sends back to the client one ciphertext containing $\mr{E_{v0}}(\mbf u_t)$ or $\mr{E_{v\ast}}(\mbf u_t)$, depending on the type of diagonals chosen. Assume the cloud server obtained $\mr{E_{v0}}(\mbf u_t)$. 
After that, the cloud server has to create $\mr{E_{vv}}(\bar{\mbf u}_{t+1})$ from $\mr{E_{v0}}(\bar{\mbf u}_{t})$ and $\mr{E_{v0}}(\mbf u_t)$, so it:
\begin{itemize}
    \item rotates $\mr{E_{v0}}(\bar{\mbf u}_{t})$ by $m$ positions to the left;
    \item rotates $\mr{E_{v0}}(\mbf u_t)$ to the right by $(M-1)m$ positions, then adds it to $\rho(\mr{E_{v0}}(\bar{\mbf u}_{t}), m)$. This yields $\mr{E_{v0}}(\bar{\mbf u}_{t+1})+\mr{E_{v0}}([0\,0\,\ldots\,0\,
    (\bar{\mbf u}_t)_{[0:m-1]}])$, 
    the last part appearing because of the rotation to the left;
    \item masks the result in order to truly obtain $\mr{E_{v0}}(\bar{\mbf u}_{t+1})$ (otherwise, the summing and rotation would not produce the correct result);
    \item repeatedly rotates and adds to obtain $\mr{E_{vv}}(\bar{\mbf u}_{t+1})$. 
\end{itemize}  
Because of the CKKS encoding through the Discrete Fourier Transform, (which enables SIMD multiplications, unlike other possible encodings), the masking operation needs to consume a level in order to preserve precision~\cite{Cheon2017homomorphic}. 
Obtaining $\mr{E_{v\ast}}(\mbf u_t)$ instead of $\mr{E_{v0}}(\mbf u_t)$ does not change the analysis, since the same masking applied in the third step also removes the junk elements in $\mr{E_{v\ast}}$. Furthermore, the analysis (not shown here) of other less efficient matrix-vector multiplication methods, such as row and column methods, also require an extra masking.

For $t \geq 1$, equation~\eqref{eq:non-adaptiveU} becomes:
$
	d(\mbf u_t) = d(\bar{\mbf u}_t) + 1= d(\mbf u_{t-1}) + 2 = 2t+1
$.
$\mr{E_{vv}}(\bar{\mbf y}_{t+1})$ can be updated the same way $\mr{E_{vv}}(\bar{\mbf u}_{t+1})$ is updated. However, it is the same cost for the client to encrypt $\mr{E_{v0}}({\mbf y}_{t})$ and $\mr{E_{vv}}(\bar{\mbf y}_{t+1})$, so it can encrypt and send the latter. 
Whenever the allocated multiplicative budget is exhausted, the server can ask the client to send a fresh encryption of $\bar{\mbf u}_t$, at little extra cost. 

Nevertheless, a reasonable and inexpensive option is to ask the client to send along with the encryption of $\bar{\mbf y}_t$~a fresh encryption of $\bar{\mbf u}_t$ at every time step. This implies that $d(\mbf u_t) = 1$, i.e., a multiplicative budget of only~$1$ is required for computing the control input for no matter how many time steps. In constructing the algorithm in Section~\ref{sec:solution}, we prefer this option for the trajectory concatenation phase.
\section{Online feedback encrypted solution}
\label{sec:solution}

\subsection{Computing the solution using arithmetic circuits}
\label{subsec:computation}
Being capable of evaluating polynomials via homomorphic encryption theoretically gives us the possibility of evaluating any function arbitrarily close (using Taylor series for example). In practice, the multiplicative depth and loss of precision of these approximations limit the types of functions we can evaluate. Consequently, division is still prohibitive. Furthermore, large matrix inversion should be judiciously performed, in order to avoid computing products of as many factors as the number of rows.

In this section, we will use the following shorter notation: 
\begin{align*}
&H\mbf U := \left[ \begin{matrix} \mbf U^p \\ \mbf U^f \end{matrix}\right], \,
H\mbf Y := \left[ \begin{matrix} \mbf Y^p \\ \mbf Y^f \end{matrix}\right], \,
h\mbf u := \left[ \begin{matrix} \mbf u^p \\ \mbf u^f \end{matrix}\right], \,
h\mbf y := \left[ \begin{matrix} \mbf y^p \\ \mbf y^f \end{matrix}\right],
\end{align*}
where $\mbf u^{p,f}, \mbf y^{p,f}$ are vectors added at the end of the Hankel matrices $\mbf U^{p,f}, \mbf Y^{p,f}$. We will drop the time subscripts $t$ for conciseness, using prime to denote the next time step.

Let us investigate the inversion of matrix $\mbf M$. 
For step $t=0$, all the values involved in computing $\mbf u^\ast$ in~\eqref{eq:opt_u}, are precollected and can be computed offline. This includes the inverse $\mbf M^{-1}$ and other matrix products. However, at the next time step, cf. line 13 in Algorithm~\ref{alg:online}:
\begin{equation}\label{eq:update}
	{H\mbf U}' := \left[ \begin{matrix}H\mbf U & h\mbf u \end{matrix}\right], \quad
	{H\mbf Y}' := \left[ \begin{matrix}H\mbf Y & h\mbf y \end{matrix}\right]
\end{equation}
The last $m$ elements on the last column of ${H\mbf U}'$, respectively of $h\mbf u$, are the values of $\mbf u_t$ at the previous time step. 

Then, the matrix $\mbf M'\in\mbR^{(S+1)\times (S+1)}$:
\begin{equation}\label{eq:M'}
\mbf M':= {H\mbf Y'}^\intercal \left[\begin{matrix} \lambda_y \mbf I & \\  & \mbf Q   \end{matrix}\right] {H\mbf Y'} + {H\mbf U'}^\intercal \left[\begin{matrix} \lambda_u \mbf I  & \\  & \mbf R \end{matrix}\right] {H\mbf U'} + \lambda_g \mbf I
\end{equation}
is a \textit{rank-1 update} of matrix $\mbf M$. Let
\begin{equation}\label{eq:m}
\mu:= h\mbf y^\intercal \left[\begin{matrix} \lambda_y \mbf I & \\  & \mbf Q   \end{matrix}\right] h\mbf y + h\mbf u^\intercal \left[\begin{matrix} \lambda_u \mbf I  & \\  & \mbf R \end{matrix}\right] h\mbf u + \lambda_g,
\end{equation}
\begin{equation}\label{eq:mbfm}
\mbf m := h\mbf y^\intercal \left[\begin{matrix} \lambda_y \mbf I & \\  & \mbf Q   \end{matrix}\right] H\mbf Y + h\mbf u^\intercal \left[\begin{matrix} \lambda_u \mbf I  & \\  & \mbf R \end{matrix}\right] H\mbf U.
\end{equation}
Specifically, $\mbf M'$ will have the following form:
\begin{equation}\label{eq:rank1}
\mbf M' = \left[ \begin{matrix} 	\mbf M & \mbf m^\intercal \\
						\mbf m & \mu \end{matrix} \right].
\end{equation}

Schur's complement~\cite[Ch.~0]{ben2003generalized} gives an efficient way of computing ${\mbf M'}^{-1}$ from $\mbf M^{-1}$ (assuming $\mbf M^{-1}$ exists), by inverting a scalar $s$ and computing a few matrix-vector multiplications:
\begin{equation}\label{eq:s}
s := \mu - \mbf m {\mbf M}^{-1} \mbf m^\intercal,
\end{equation}
\begin{align}
\begin{split}\label{eq:schur}
	{\mbf M'}^{-1} &= \left[ \begin{matrix} 	\mbf M^{-1} + \frac{1}{s}\mbf M^{-1} \mbf m^\intercal\mbf m \mbf M^{-1} & - \frac{1}{s}\mbf M^{-1} \mbf m^\intercal \\
								-\frac{1}{s}\mbf m\mbf M^{-1} & \frac{1}{s} \end{matrix} \right]. 
\end{split}						
\end{align}

To avoid performing the division $1/s$ on encrypted data, the server will ask the client to perform it on plaintext and send back an encryption of result.

\subsection{Reducing memory and depth of the arithmetic circuit}
\label{subsec:reducing}

In this paper, for simplicity, we consider diagonal cost matrices $\mbf Q$ and $\mbf R$. Otherwise, the multiplication by these matrices requires more complicated encrypted operations.

At a given time, the cloud server has (in an encrypted form) $H\mbf U, H\mbf Y ,\bar{\mbf u}, \bar{\mbf y}$ and $\mbf M^{-1}$, along with the unencrypted costs $\mbf Q, \mbf R$, penalties $\lambda_g,\lambda_u,\lambda_y$ and reference signal $\mbf r$. 
The cloud has to compute the equivalent formulation of~\eqref{eq:opt_u}:
\begin{align}\label{eq:uZ}
\begin{split}
&\mbf u =  \left[\begin{matrix} \mbf 0_{mM} &\mbf I_m & \mbf 0_{(N-1)m} \end{matrix}\right]{ H\mbf U'}\, {\mbf M'}^{-1} \mbf Z,\\
&\mbf Z =  {H\mbf Y'}^\intercal \left[\begin{matrix} \lambda_y \mbf I & \\  & \mbf Q \end{matrix}\right] \left[ \begin{matrix} \bar{\mbf y}\\ \mbf r \end{matrix}\right] + {H\mbf U'}^\intercal \left[\begin{matrix} \lambda_u \mbf I & \\ & \mbf R \end{matrix}\right] \left[ \begin{matrix} \bar{\mbf u}\\ \mbf 0 \end{matrix}\right].
\end{split}
\end{align}

There is a vast number of parameters to tune in the encrypted implementation, such as the packing method from the messages into the plaintexts, the storage redundancy, the order of performing the operations and choice of refreshing some ciphertexts. The trade-offs that these different versions bring are in terms of ciphertext storage, key storage (especially permutation key storage), type and number of operations, precision at the end of the operations and total multiplicative depth of the resulting circuit; these goals are intricately intertwined. 
For example, designing the circuit to have a lower multiplicative depth reduces the size of the ciphertexts and reduces the encryption load at the client, but might involve storing more ciphertexts and performing more computations at the server, compared to a version with a higher multiplicative depth. 
Thus, the main difficulty in making the computations tractable is to astutely batch-encode the vectors and matrices into ciphertexts in order to reduce the memory, and manipulate the operations in order to reduce depth, number of operations and storage. 

We describe here how to change the flow of operations in~\eqref{eq:mbfm}--\eqref{eq:schur} in order to minimize the number of ciphertexts representing the relevant quantities: $H\mbf U$, $H\mbf Y$, $\mbf M^{-1}$, $\mbf m$, $\mu$, $\bar{\mbf u}$, $\bar{\mbf y}$, $\mbf u$, and the number of operations, while keeping the depth to a minimum. We make use of the feature of lattice-based homomorphic encryption schemes of encoding a vector of values into a single plaintext, which is then encrypted in a single ciphertext. 
Since we are also dealing with matrices, we explored several options of how to encode the matrices in ciphertexts: columns, rows, (hybrid) diagonals, vectorized matrix (see~\cite{Alexandru2020private}). Note that the same linear algebraic operation, implemented for different encodings, leads to a different number of stored ciphertexts, multiplicative depth of circuit, and number of SIMD operations. For the problem we tackle in this paper, we found that encoding each column of a matrix into a separate ciphertext minimizes the number of stored ciphertexts and number of operations, while keeping the same depth as when each element is encoded in a different ciphertext and the usual (unencrypted) method of performing the operations is used, as stated in Proposition~\ref{prop:complexity}. However, not all the matrices' columns are encoded in the same fashion, as we will see next. In summary, \textit{having some redundancy inside the ciphertexts, i.e., having some values encoded multiple times, helps optimize the depth and complexity}.

The following proposition summarizes the complexity of the online feedback algorithm, when carrying out the steps in this section. The proof of Proposition~\ref{prop:complexity} is constructive and is given in the remaining of this subsection and Appendix~\ref{app:online}.

\begin{proposition}\label{prop:complexity}
We evaluate the arithmetic circuit for the online feedback algorithm on encrypted data corresponding to Algorithm~\ref{alg:online} for one time step with $O((S+t)^2)$ operations, $O(S+t)$ ciphertexts and $O(S+t)$ rotation keys, at depth $2t+4$, where $S$ is the number of columns of the offline generated block-Hankel matrix and $t$ is the number of online samples accumulated so far.
\end{proposition}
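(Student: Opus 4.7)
The plan is to prove Proposition~\ref{prop:complexity} constructively by walking through every step of the online feedback algorithm's inner loop and separately tallying ciphertext count, number of SIMD operations, rotation keys, and multiplicative depth. First, I would fix the packing scheme: each column of the Hankel matrices $H\mbf U, H\mbf Y$ (which at time $t$ have $S+t$ columns of height $(M+N)m$ and $(M+N)p$, respectively) is encoded in one ciphertext, and similarly each column of $\mbf M^{-1}\in\mbR^{(S+t)\times(S+t)}$ occupies one ciphertext; the vectors $\bar{\mbf u},\bar{\mbf y},\mbf r$ together with any repeated-packing variants fit in $O(1)$ ciphertexts. This already pins down the $O(S+t)$ storage bound.

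Next, I would analyze the computation of $\mbf u$ via~\eqref{eq:uZ} together with the Schur-complement decomposition in~\eqref{eq:s}--\eqref{eq:schur}, in the order in which the server would actually perform them: (i) assemble $\mu,\mbf m$ from~\eqref{eq:m}--\eqref{eq:mbfm}, which are inner products and a matrix--vector product with $H\mbf U,H\mbf Y$ costing $O(S+t)$ SIMD operations and $O(\log(S+t))$ rotations; (ii) form the key intermediate $\mbf v:=\mbf M^{-1}\mbf m^\intercal$ by a single column-encoded matrix--vector multiplication, at cost $O(S+t)$; (iii) compute the scalar $s=\mu-\mbf m\mbf v$ as an inner product and send it to the client, who returns an encryption of $1/s$; (iv) assemble the dense top-left block $\mbf M^{-1}+\tfrac{1}{s}\mbf v\mbf v^\intercal$ of ${\mbf M'}^{-1}$ by broadcasting $\mbf v$ and scaling, which is the dominant cost of $O((S+t)^2)$ SIMD operations and explains the quadratic bound; (v) compute $\mbf Z$ in~\eqref{eq:uZ} at cost $O(S+t)$ and then ${\mbf M'}^{-1}\mbf Z$, followed by left-multiplication by the selector row of $H\mbf U'$, at an additional $O(S+t)$ cost per output component.

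For the multiplicative depth I would track each branch separately and combine them using the reordering identity $d(xy)=\max(d(x),d(y))+1$ emphasized in Section~\ref{sec:main_idea}. The update rule for $\bar{\mbf u}$ analyzed in Section~\ref{sec:naSolution} already contributes depth $2t$ at step $t$; forming $\mbf v=\mbf M^{-1}\mbf m^\intercal$ consumes one extra level on top of $\mbf m$, the outer product $\mbf v\mbf v^\intercal$ and its scaling by $1/s$ consume one further level, and the last two levels are absorbed by computing $\mbf Z$ and the final multiplication by $H\mbf U'{\mbf M'}^{-1}$. Arranging the products so that the $\bar{\mbf u}$-dependent branch and the ${\mbf M'}^{-1}$-dependent branch meet at the latest possible point yields total depth $2t+5$. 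The rotation-key count follows because the column-based matrix--vector multiplications, the outer-product broadcast step, and the Hankel updates each demand rotation amounts spanning $\{0,1,\dots,\Theta(S+t)\}$, and in the worst case several phases use distinct rotation sets, bounded crudely by $O((S+t)^2)$ distinct key slots.

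The main obstacle will be step~(iv): a naive implementation of $\mbf M^{-1}+\tfrac{1}{s}\mbf M^{-1}\mbf m^\intercal\mbf m\mbf M^{-1}$ could either require a quadratic-depth chain of multiplications or trigger extra maskings when converting between the $\mr{E_{v0}}$, $\mr{E_{vv}}$, and $\mr{E_{vr0}}$ encodings needed to align slots for the outer product. The delicate part of the proof is showing that, by pre-computing $\mbf v$ once as a column-packed ciphertext and then broadcasting it to all $S+t$ columns using the repeated encoding, the outer product can be realized with $O(S+t)$ plaintext-ciphertext scalings per column and without any additional masking-induced depth, so the $O((S+t)^2)$ operation count holds while the $2t+5$ depth bound is preserved.
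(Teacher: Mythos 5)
Your overall strategy is the same as the paper's: the proof is constructive, fixing a column-wise packing of $H\mbf U$, $H\mbf Y$ and $\mbf M^{-1}$ (which gives the $O(S+t)$ ciphertext count), walking through $\mu,\mbf m \to \mbf v=\mbf M^{-1}\mbf m^\intercal \to s \to {\mbf M'}^{-1} \to \mbf Z \to \mbf u$, and identifying the outer-product block $\tfrac{1}{s}\mbf M^{-1}\mbf m^\intercal\mbf m\mbf M^{-1}$ as the dominant $O((S+t)^2)$ cost. Your storage, operation, and rotation-key tallies are consistent with the paper's (the paper realizes the outer product via the $\mbf z,\mbf w_i$ construction in~\eqref{eq:mMMm}, which is the ``broadcast without extra masking'' device you anticipate as the delicate step, costing $O(S)$ per column and two levels total on top of $\mbf M^{-1}$).

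However, your depth argument contains a genuine error: you attribute the $2t$ term to the $\bar{\mbf u}$ update rule of Section~\ref{sec:naSolution}. In the online algorithm the paper deliberately avoids exactly that accumulation: the client re-encrypts $\mbf u_t$ and $\mbf y_t$ at every step, so that $d(\bar{\mbf u}_t)=d(\bar{\mbf y}_t)=d(\mathrm{col}_i H\mbf U_t)=d(\mathrm{col}_i H\mbf Y_t)=1$ for all $t$ (see Appendix~\ref{app:online}). The $2t$ growth lives entirely in $\mbf M_t^{-1}$, via the recursion $d(\mbf M_t^{-1})=d(\mbf M_{t-1}^{-1})+2$: one level for $\mbf v_t=\mbf M_{t-1}^{-1}\mbf m_t^\intercal$ and one for the outer product, giving $d(\mbf M_t^{-1})=2t+4$ and $d(\mbf u_t)=2t+5$ as in~\eqref{eq:vector_packing_depth}. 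If the $\bar{\mbf u}$ chain really carried depth $2t$, then $\mbf m_t$ and $\mu_t$ (which are built from the Hankel columns) would also deepen with $t$, the two branches would not meet where you claim, and the constant $+5$ would not come out. Relatedly, your step (v) as ordered (first ${\mbf M'}^{-1}\mbf Z$, then left-multiply by the selector row of $H\mbf U'$) spends an extra level; the paper instead multiplies $\mathrm{row}_i{\mbf U^f}'\odot\mbf Z_{[j]}$ first and only then by the columns of ${\mbf M'}^{-1}$ (equation~\eqref{eq:u_encoded}), which is precisely the $d(xy)=\max(d(x),d(y))+1$ reordering you invoke but do not apply here. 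To repair the proof, replace the $\bar{\mbf u}$-based accounting with the Schur-recursion accounting on $\mbf M_t^{-1}$ and reorder the final product accordingly.
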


Figure~\ref{fig:inner_vv} shows how to obtain an inner product between two encoded vectors using SIMD multiplications, rotations and additions. Notice that due to rotations, the resulting vector will have the relevant scalar in its first slot, and junk (partial sums) in the following slots. 
Figure~\ref{fig:inner_mv} shows the most efficient way (in terms of required number of operations) to obtain the product between a matrix and a vector when the matrix is encoded as separate columns. Specifically, we need the elements of the vector to be repeated and separately encoded. 

\begin{figure}[ht]
    \centering
    \begin{subfigure}[ht]{0.25\textwidth}
        \includegraphics[width=\textwidth]{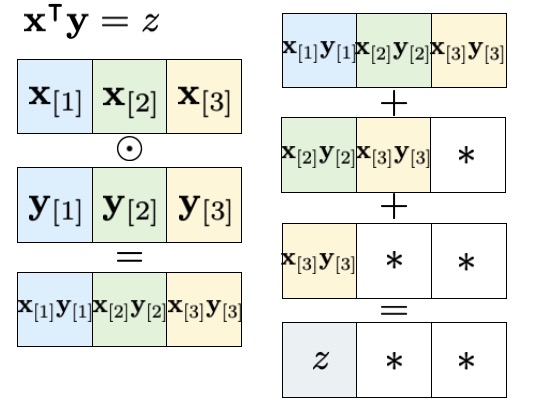}
        \caption{Inner product of two vectors.}
        \label{fig:inner_vv}
    \end{subfigure}
    \quad
    \begin{subfigure}[ht]{0.37\textwidth}
        \includegraphics[width=\textwidth]{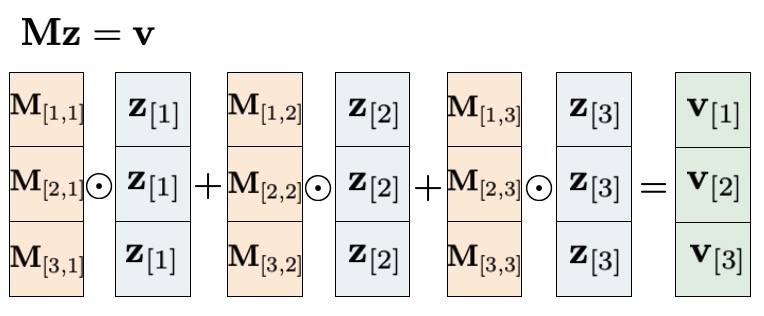}
        \caption{Inner product of a matrix and a vector.}
        \label{fig:inner_mv}
    \end{subfigure}
    ~
    \begin{subfigure}[ht]{0.48\textwidth}
        \includegraphics[width=\textwidth]{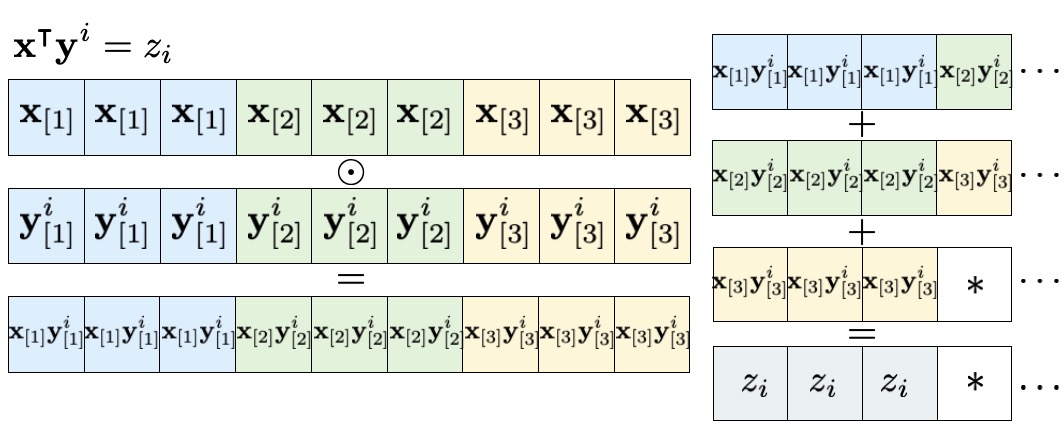}
        \caption{Inner product of two vectors with repeated result.}
        \label{fig:inner_vvr}
    \end{subfigure}    
    \label{fig:inner}
    \caption{Inner product methods for encrypted data.}
\end{figure}

Once we have each column of $\mbf M^{-1}$ encoded in a ciphertext, we want to also obtain the columns of ${\mbf M'}^{-1}$ from~\eqref{eq:schur} encoded each in a ciphertext. To minimize the multiplicative depth and number of operations, this suggests that we need to obtain $\mbf M^{-1}\mbf m^\intercal$ and $\mbf M^{-1}\mbf m^\intercal \mbf m\mbf M^{-1}$ encoded as columns. 

Let us look first at $\mbf M^{-1}\mbf m^\intercal$. In order to use the efficient method outlined in Figure~\ref{fig:inner_mv}, we need to have a separate ciphertext that encodes each element of $\mbf m$, repeated as many times as the number of columns in $\mbf M^{-1}$. In turn, this suggests that we should use the method outlined in Figure~\ref{fig:inner_vvr} when computing $\mbf m$ as in~\eqref{eq:mbfm}, rather than the method outlined in Figure~\ref{fig:inner_vv} which would require extra masking and rotations afterwards.
Specifically, we have to repeat each element of $h\mbf y$ for $S$ times, encode the resulting vector in a ciphertext and do the same for each column of $H\mbf Y$ and the elements in the diagonal matrix $\mbf Q$ and $\lambda_y$. Then, we perform an element-wise multiplication between the ciphertext encoding $h\mbf y$ and the ciphertext encoding the elements of $\mbf Q$ and $\lambda_y$, and then perform the inner product with the encoded columns of $H\mbf Y$. The same steps are taken for $h\mbf u, H\mbf U, \mbf R$ and $\lambda_u$. 

\begin{remark}\label{rem:repeat} 
To account for all the time steps where new samples will be collected, whenever we encode values repeatedly, we need to encode from the beginning $S+\bar T$ copies of the elements, where $S$ is the initial number of columns in $\mbf M$ and $\bar T$ is the total number of new samples intended to be collected.
\end{remark}

One of the main realizations that makes this more efficient computation possible is how to compute each column $\mr{col}_i((\mbf m \mbf M^{-1})^\intercal(\mbf m \mbf M^{-1}))$ at a depth $d(\mbf M^{-1}) + 2$ (when $d(\mbf M^{-1}) \neq 1$), with $O(S)$ storage and $O(S^2)$ operations. Define $\tilde{\mbf M} := (\mbf m \mbf M^{-1})^\intercal(\mbf m \mbf M^{-1})$. Then: 
$\forall i= 0,\ldots,S-1$:
\begin{align}\label{eq:mMMm}
\begin{split}
	\mbf z&:= \sum_{k = 0}^{S-1} \mr{col}_k (\mbf M^{-1}) \odot \left[ \begin{matrix} \mbf m_{[k]} \\ \vdots \\ \mbf m_{[k]}\end{matrix} \right] \\ 
	\mbf w_i &:= \sum_{k=0}^{S-1}  \mr{col}_k(\mbf M^{-1}) \odot \left(\left[ \begin{matrix} \mbf m_{[k]} \\ \vdots \\ \mbf m_{[k]}\end{matrix} \right] \odot \mbf e_i\right),\\
	\mr{col}_i(\tilde{\mbf M}) &= \mbf z \odot \sum_{j=0}^{S-1} \rho \left(\mbf w_i, i-j \right).
\end{split}
\end{align}	

The next piece is to compute $s$ as follows:
\[s = \mu - \sum_{i=0}^{S-1} \left[ \begin{matrix} \mbf m_{[i]} & \ldots & \mbf m_{[i]}\end{matrix} \right]^\intercal \odot \rho(\mbf M^{-1}\mbf m^\intercal, i).\]
The server asks the client to invert $s$. After the server receives from the client $1/s$, it can compute $\frac{1}{s}\tilde{\mbf M}$, as required in~\eqref{eq:schur}, by replacing $\mbf e_i$ in \eqref{eq:mMMm} by $1/s\,\mbf e_i$.

Finally, we construct the columns of ${\mbf M'}^{-1}$ by adding at the end of columns of $\frac{1}{s}\tilde{\mbf M}$ the corresponding extracted elements of $-\frac{1}{s}\mbf M^{-1}\mbf m^\intercal$. The new column that expands the matrix is obtained by adding $1/s$ at the end of the column $-\frac{1}{s}\mbf M^{-1}\mbf m^\intercal$.

Regarding the computation of the control input $\mbf u$, we again need to reorder the operations in order to have the smallest depth possible, i.e. $d({\mbf M'}) + 1$. For that, we also encode $\bar{\mbf u}$ and $\bar{\mbf y}$ with repeated elements, so we can compute the inner product as in Figure~\ref{fig:inner_vvr}, for the elements of $\mbf Z$ in~\eqref{eq:uZ} stored repeatedly. We also prefer some redundancy in storage to avoid online processing, by having a ciphertext for the first $m$ rows of ${\mbf U^f}'$, encoded without repetition. 
This allows us to compute, for $i = 0,\ldots,m-1$:
\begin{align}\label{eq:u_encoded}
\begin{split}
\boldsymbol \upsilon_i &= \sum_{j=0}^S \sum_{k=0}^S \mr{col}_k {\mbf M'}^{-1}  \odot \left ( \mr{row}_i {\mbf U^f}' \odot \mbf Z_{[j]} \right)\\
{\mbf u}_{[i]} &= ({\mbf U^f}' {\mbf M'}^{-1}\mbf Z)_{[i]} = \mbf 1^\intercal \boldsymbol \upsilon_i.
\end{split}
\end{align}
To send only one ciphertext for the result $\mbf u$ (instead of $m$ ciphertexts), while avoiding an extra masking, the server will pack the vectors $\boldsymbol\upsilon_i$ (which have an encoding with trailing zeros, whereas the encoding of ${\mbf u}_{[i]}$ does not) into one ciphertext and ask the client to perform the corresponding summation. 

The cloud service performs the encrypted version of~\eqref{eq:mbfm}--\eqref{eq:schur}, taking the steps outlined in this subsection, which we describe in detail in Appendix~\ref{app:online} in equation~\eqref{eq:encrypted_comp_vector}--\eqref{eq:encrypted_comp_vector3}, along with the specific encoding of each quantity.

\subsection{Precision discussion}
\label{subsec:precision}
\subsubsection*{Scaling}
As described in Section~\ref{sec:FHE} and Appendix~\ref{app:CKKS}, every encrypted operation introduces some noise that corrupts the least significant bits of the encrypted values, which accumulates with the depth of the circuit. This implies that values with a smaller magnitude are affected more by this noise. In the problem we address, the smallest values are found in $\mbf M_t^{-1}$, hence scaling it by a positive integer factor $\alpha$ and accordingly changing the computations such that at every time step we obtain $\alpha \mbf M_t^{-1}$ and $\alpha \mbf u_t$ (such that the client can remove the scaling) allows us to increase the precision of the result. In order to incorporate this scaling without increasing the depth, the server has to perform more operations than before, e.g., recomputation from scratch of some quantities. 
The scaling factor should be selected by the client, and if its magnitude is sensitive, it can be sent as a ciphertext to the cloud server. We can modify the circuit such that this scaling does not affect the depth. 

\subsubsection*{Precision loss when computing the Schur complement}

Given the particularity of the matrix that has to be inverted, at each time step, some precision bits of the Schur complement are lost (depending on the regularization parameter and the magnitude of the measurements), which incurs a loss in precision in the subsequent computations.

Without the regularization term $\lambda_g \mbf I$ added to the objective function of~\eqref{eq:optimization2}, the resulting matrix $\mathbf M$ does not have full rank in the noiseless case (noise helps, but the inverse is not numerically stable). This suggests that the Schur complement~\eqref{eq:s} will have the same order of magnitude as $\lambda_g$, regardless of the values in $\mbf M$, which we show in the following. 
Let $\mbf H^\intercal := \left[ \begin{matrix} H\mbf Y^\intercal & H\mbf U^\intercal \end{matrix} \right]$ and $\mbf h^\intercal := \left[ \begin{matrix}  h\mbf y^\intercal & h\mbf u^\intercal \end{matrix}\right]$ be a new column obtained by adding new samples, as described in Section~\ref{subsec:computation}, and $\mbf P := \mr{blkdiag}( \lambda_y \mbf I, \mbf Q,\lambda_u \mbf I,\mbf R)$. Then,
\begin{align}
    \begin{split}
        s &= \mu - \mbf m \mbf M^{-1} \mbf m^\intercal = \lambda_g + \mbf h^\intercal \mbf P \mbf h - \mbf h^\intercal \mbf P \mbf H (\mbf H^\intercal \mbf P\mbf H + \lambda_g\mbf I)^{-1}\mbf H^\intercal \mbf P \mbf h.
    \end{split}
\end{align}

We now prove that, in the noiseless case, the Schur complement has the same order of magnitude as $\lambda_g$. Formally:
\begin{lemma}\label{lemma:s}
   The following statements characterize the relation between $s$ and $\lambda_g$:
   \begin{enumerate}
       \item[(a)] $\lim\limits_{\lambda_g\rightarrow 0} s = 0$.
       \item[(b)] $\lim\limits_{\lambda_g\rightarrow 0} \frac{s}{\lambda_g} = 1 + \mbf h^\intercal \mbf P^{1/2}(\mbf H^\intercal \mbf P^{1/2})^\dagger (\mbf P^{1/2} \mbf H)^\dagger \mbf P^{1/2}\mbf h$. 
       \item[(c)] $\lim\limits_{\lambda_g\rightarrow \infty} \frac{s}{\lambda_g} = 1$.
       \item[(d)] The function $f(\lambda_g) = \frac{s}{\lambda_g}$ is monotonously decreasing on $[0,~\infty)$.
   \end{enumerate}
\end{lemma}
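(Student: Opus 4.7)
\textbf{Proof plan for Lemma~\ref{lemma:s}.}

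The strategy is to collapse the $\lambda_g$-dependence of $s$ into a single transparent rational expression via the push-through matrix identity, and then read off all four claims from its spectral form.

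First, I would absorb the weight matrix by setting $\mbf{A}:=\mbf{P}^{1/2}\mbf{H}$ and $\mbf{b}:=\mbf{P}^{1/2}\mbf{h}$ (well-defined since $\mbf{P}$ is block-diagonal with positive-definite blocks), so that $\mbf{H}^\intercal\mbf{P}\mbf{H}=\mbf{A}^\intercal\mbf{A}$, $\mbf{h}^\intercal\mbf{P}\mbf{h}=\lVert\mbf{b}\rVert_2^{2}$, and $\mbf{h}^\intercal\mbf{P}\mbf{H}=\mbf{b}^\intercal\mbf{A}$. Combining the push-through identity $\mbf{A}(\mbf{A}^\intercal\mbf{A}+\lambda_g\mbf{I})^{-1}=(\mbf{A}\mbf{A}^\intercal+\lambda_g\mbf{I})^{-1}\mbf{A}$ (valid for $\lambda_g>0$) with the trivial rewrite $\mbf{A}\mbf{A}^\intercal=(\mbf{A}\mbf{A}^\intercal+\lambda_g\mbf{I})-\lambda_g\mbf{I}$ yields $\mbf{A}(\mbf{A}^\intercal\mbf{A}+\lambda_g\mbf{I})^{-1}\mbf{A}^\intercal=\mbf{I}-\lambda_g(\mbf{A}\mbf{A}^\intercal+\lambda_g\mbf{I})^{-1}$, and plugging this into the definition of $s$ stated just before the lemma collapses it to
\begin{equation*}
s=\lambda_g\bigl(1+\mbf{b}^\intercal(\mbf{A}\mbf{A}^\intercal+\lambda_g\mbf{I})^{-1}\mbf{b}\bigr).
\end{equation*}

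Second, I would diagonalize with the SVD $\mbf{A}=\mbf{U}\mbf{\Sigma}\mbf{V}^\intercal$, letting $\sigma_1,\ldots,\sigma_r$ denote the nonzero singular values and $\tilde{\mbf{b}}:=\mbf{U}^\intercal\mbf{b}$, to obtain $f(\lambda_g):=s/\lambda_g=1+\sum_i \tilde b_i^{\,2}/(\sigma_i^2+\lambda_g)$. The crucial structural input is that in the noiseless case the fresh column $\mbf{h}$ is itself a length-$(M{+}N)$ trajectory of the same LTI system, and by Assumption~\ref{assum:richness} the precollected data span all such trajectories (Willems' fundamental lemma); hence $\mbf{h}=\mbf{H}\mbf{g}$ for some $\mbf{g}$, whence $\mbf{b}=\mbf{A}\mbf{g}\in\mr{range}(\mbf{A})$. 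This forces $\tilde b_i=0$ whenever $\sigma_i=0$, so the sum is effectively over $i\le r$ and every term stays bounded as $\lambda_g\to 0$.

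From this clean form the four claims follow routinely. (a) $s=\lambda_g f(\lambda_g)\to 0$ because $f$ is bounded at the origin. (b) $\lim_{\lambda_g\to 0}f(\lambda_g)=1+\sum_{i\le r}\tilde b_i^{\,2}/\sigma_i^2$, which, using $(\mbf{A}^\intercal)^\dagger\mbf{A}^\dagger=\mbf{U}(\mbf{\Sigma}^\dagger)^\intercal\mbf{\Sigma}^\dagger\mbf{U}^\intercal$ and the definitions of $\mbf{A},\mbf{b}$, I would recognize as exactly the stated pseudoinverse expression $1+\mbf{h}^\intercal\mbf{P}^{1/2}(\mbf{H}^\intercal\mbf{P}^{1/2})^\dagger(\mbf{P}^{1/2}\mbf{H})^\dagger\mbf{P}^{1/2}\mbf{h}$. (c) $\lim_{\lambda_g\to\infty}f(\lambda_g)=1$ since each summand vanishes. (d) Term-by-term differentiation gives $f'(\lambda_g)=-\sum_{i\le r}\tilde b_i^{\,2}/(\sigma_i^2+\lambda_g)^2\le 0$, so $f$ is monotonically decreasing on $(0,\infty)$, and continuity of the reduced sum at the origin extends this to $[0,\infty)$.

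The main obstacle, and the reason parts (a) and (b) are not immediate from the original formula, is that $(\mbf{A}\mbf{A}^\intercal+\lambda_g\mbf{I})^{-1}$ generically blows up as $\lambda_g\to 0$ — the very rank-deficiency that motivated introducing $\lambda_g\mbf{I}$ into $\mbf{M}$ to begin with. The range containment $\mbf{b}\in\mr{range}(\mbf{A})$, derived from the behavioral framework, is precisely what cancels the singular directions and makes those two limits finite; once that step is in place, every remaining computation is elementary linear algebra.
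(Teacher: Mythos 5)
Your proof is correct, and it takes a genuinely different route from the paper's. The paper works directly on the original expression for $s$: it establishes (a) via the pseudo-inverse limit identity $\lim_{\delta\rightarrow 0^+}\mbf A^\ast(\delta\mbf I+\mbf A\mbf A^\ast)^{-1}=\mbf A^\dagger$ together with the same range containment $\mbf P^{1/2}\mbf h\in\mathrm{Range}(\mbf P^{1/2}\mbf H)$ that you invoke, obtains (b) by applying L'H\^ospital's rule to $s/\lambda_g$ (differentiating $s$ in $\lambda_g$ produces the squared resolvent $(\mbf H^\intercal\mbf P\mbf H+\lambda_g\mbf I)^{-2}$, whose limit is again taken via the pseudo-inverse definition), and for (d) it only sketches a sign analysis of $\partial f/\partial\lambda_g$ that it explicitly describes as laborious. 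You instead use the push-through identity to collapse $s$ into the exact closed form $s=\lambda_g\bigl(1+\mbf b^\intercal(\mbf A\mbf A^\intercal+\lambda_g\mbf I)^{-1}\mbf b\bigr)$ and then diagonalize, after which all four claims are read off from $f(\lambda_g)=1+\sum_i\tilde b_i^2/(\sigma_i^2+\lambda_g)$; I verified that your limit in (b), $1+\sum_{i\le r}\tilde b_i^2/\sigma_i^2$, coincides with the paper's pseudo-inverse expression. Both arguments hinge on the identical structural fact from the behavioral framework — that the fresh column lies in the range of the data matrix, which kills the singular directions — but your factorization buys a uniform treatment: (a) and (b) become immediate consequences of one algebraic identity rather than two separate limit computations, and (d), the part the paper declines to write out, reduces to a one-line term-by-term differentiation. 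The paper's approach has the mild advantage of reusing the pseudo-inverse limit machinery already deployed in its closeness proof, but yours is the more transparent and self-contained derivation.
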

The proof is given in Appendix~\ref{app:proof_precision}. The term in (b) $\mbf h^\intercal \mbf P^{1/2}(\mbf H^\intercal \mbf P^{1/2})^\dagger (\mbf P^{1/2} \mbf H)^\dagger \mbf P^{1/2}\mbf h$ is small for slowly varying systems. 
Under small random noise, the results in Lemma~\ref{lemma:s} are not exact, but empirically follow closely. 

This analysis is important from an encrypted implementation perspective. Lemma~\ref{lemma:s} shows that, for fixed cost values in $\mbf P$ but regardless of the values in $\mbf H$ and $\mbf h$, i.e., the values of the $\mbf y$ measurements and of the $\mbf u$ control actions, $s$ will be close to the regularization parameter $\lambda_g$. In the encrypted computations, we will compute $\mu = \lambda_g + \mbf h^\intercal \mbf P \mbf h $ and $\mbf m \mbf M^{-1}\mbf m^\intercal$ each with a fixed precision of $x$ bits. For large values of the measurements, we obtain that $s\approx \lambda_g << \mbf h^\intercal \mbf P \mbf h$, which means that there will be a cancellation in the most significant bits (e.g., the first $y$ MSBs), leading to a loss of precision in $s$, which will now have only $x-y$ bits of precision. This is particularly crucial since such a  cancellation of the MSBs will happen at every time step we accumulate new data samples, leading to a cascading loss of precision. Furthermore, due to using the Residue Number System implementation of the encryption scheme, which improves the efficiency by orders of magnitude by using native sizes of 64 bits to store the ``residue ciphertexts" rather than multi-precision arithmetic of arbitrary size, the precision cannot be increased indefinitely (see~\cite{Cheon2018full,Halevi2019improved} for technical details). 

As a side note, decreasing or increasing the values in the cost matrix $\mbf P$ independently from $\lambda_g$ does not solve this precision issue, because it implies changing the problem (it is equivalent to increasing or decreasing $\lambda_g$).

The conclusion of this discussion is that, apart from the role it has for regularization and noise robustness, the parameter $\lambda_g$ also affects the precision of the solution (meaning the difference between the encrypted solution and the unencrypted solution), which suggests we need to pick $\lambda_g$ to not be too small. 
This introduces a precision versus accurate convergence trade-off. We will comment more on this trade-off in Section~\ref{subsec:78bits}.

\subsection{Considerations for continuous running}
Accumulating new samples serves to robustify the algorithm and adapt to new disturbances. However, an important caveat is that adding too many new samples damages the performance of the algorithm, both because of overfitting and because of the problem becoming intractable as the number of variables in~\eqref{eq:optimization3} grows. In future work, we will investigate this issue more, along with having a sliding window of sample collection. 

Nevertheless, there are cases where the number of new samples to be collected leads to a circuit of a depth higher than the preselected multiplication budget. Our options for continuing the computations are:

(i) Restore the initial precollected Hankel matrices which bypasses the refreshing step altogether. 

\noindent Advantages: no extra computations needed. Disadvantages: this causes oscillations in the control actions. 

(ii) Stop adding new information to the matrices after the multiplication budget is exhausted. 

\noindent Advantages: no extra computations needed. Disadvantages: the multiplicative budget has to be large enough such that \textit{enough} samples are collected. 

(iii) Pack the matrix into a single ciphertext as described in~\eqref{eq:packingM_1} and ask the client to refresh it. 
{ \medmuskip=0mu\thickmuskip=1mu
\begin{equation}\label{eq:packingM_1}
    \mr{E_{v0}}(\mbf M^{-1}_t) = \sum_{i=0}^{S+t-1} \rho\left ( \mr{E_{v0}}(\mr{col}_i\mbf M^{-1}_t), -i(S+t-1)\right).
\end{equation}
}
\noindent Advantages: the server can continue collecting values for any desired time, without extra multiplication depth. Disadvantages: the client has to decrypt, encrypt and send another ciphertext; the rotation keys necessary for packing can occupy a lot of storage (but compared to~\cite{Alexandru2020towards}, here we only need $S+t$ such rotations, not $(S+t)^2/2$).

(iv) Bootstrap the ciphertext of $\mbf M^{-1}_t$. The computation advancements~\cite{Chen2019improved} regarding the bootstrapping procedure suggest that it is likely to locally resolve the refreshing step. 

\noindent Advantages: the server can continue collecting values for any desired time without the client's intervention. Disadvantages: the initial multiplication budget has to be larger to also allow for the bootstrapping circuit.

In the solution we implemented, we chose option (iii). This gives us flexibility on the maximum multiplicative depth of the circuit, which will now be $2t_{refresh} + 4$, at little extra cost for the client. 

We can pack as many values as half the ring dimension $\mr{ringDim}$ in one ciphertext.  
Then, the number of ciphertexts the server can pack $\mbf M^{-1}_t$ into is $ \lceil(S+t)(S+t)/(\mr{ringDim}/2) \rceil$. This is viable since the ring dimension is in general large (e.g. $\geq 4092$), in order to accommodate a reasonable multiplicative budget, plaintext precision and standard security parameter. The client only has to decrypt, re-encrypt and send back this number of ciphertexts. The server then uses one extra level to perform the reverse of~\eqref{eq:packingM_1} to unpack $\mr{E_{v0}}(\mbf M^{-1}_t)$ into ciphertexts $\mr{E_{v0}}(\mr{col}_i(\mbf M^{-1}_t))$. These multiplications can be absorbed in the same initial multiplicative depth.

\section{Privacy of the encrypted algorithms}
\label{sec:privacy}

\begin{definition}\label{def:client_privacy}(Client privacy w.r.t. \mbox{semi-honest} behavior of server) 
Let $x_C$ be the private input of a client and $x_S$ be the input of a server. The client wants the server to evaluate a functionality $f$ and return the result $f(x_C,x_S)$. We will denote by $f_S(x_C,x_S)$ the corresponding result at the server. Let $\Pi$ be a \mbox{two-party} protocol for computing $f$. The \textbf{view} of the server during an execution of $\Pi$ on the inputs $(x_C,x_S)$, denoted $V^\Pi_S (x_C,x_S)$, is $(x_S,r,\mr{msg})$, where $r$ represents the outcome of the server's internal coin tosses, and $\mr{msg}$ represents all the messages it has received from the client. 
For a deterministic functionality $f$, we say that $\Pi$ \textbf{privately computes} $f$ w.r.t. to the server if there exists a probabilistic polynomial-time algorithm, called \textbf{simulator} and denoted $\mr{S_S}$, such that, for any inputs $x_C,x_S$:
 \[\{\mc{S}_S(x_S,f_S(x_C,x_S))\}_{x_C,x_S\in\{0,1\}^\ast} \stackrel{c}{\equiv} \{V^\Pi_S (x_C,x_S)\}_{x_C,x_S\in\{0,1\}^\ast},\]
where $\stackrel{c}{\equiv}$ denotes computational indistinguishability
\end{definition}

It is common for a server to be only employed for computing a result that is then sent to the client, and the server has no output formally, i.e., $f_S(x_C,x_S) = \emptyset$.

In our case, the functionality $f$ in Definition~\ref{def:client_privacy} represents the data-driven algorithm, and $f_S(x_C,x_S)$ represents the output of the server after evaluating the functionality on the given inputs. 
Fix a number of time steps $\mc T$. We want to prove a secure evaluation of the data-driven control functionality for $\mc T$ steps, after which we assume the protocol ends. In reality, because the security of the encryption scheme used is based on computational problems, after a long time (years), it is recommended that the secret key is changed; so $\mc T$ represents a natural stopping point and is not restrictive.

Under the aforementioned condition, the output of the protocol at the server, $f_S(x_C,x_S)$, is actually the empty set. 
Furthermore, all the intermediate messages in the view of the server will be ciphertexts encrypted with the client's public key using the CKKS scheme. Finally, the ring dimension of the CKKS scheme is selected such that, for the multiplicative budget of the desired functionality, an adversary cannot brute force the encryption. These are the main observations that will be used in proving the subsequent theorems. 

There is a preprocessing protocol, where the server gets and computes the encrypted information it will have to use for the online control of the system, i.e., the encrypted data trajectories, costs and parameters. This preprocessing protocol is proven secure under the same Definition~\ref{def:client_privacy}, hence, there is an ideal functionality $\mc F^{pre}$, which is essentially indistinguishable from the real-world functionality of the preprocessing protocol. We work in a $\mc F^{pre}$-hybrid model, which means that we can securely compose this functionality with the functionality of the online protocols (see~\cite{Lindell17} for more details about simulation proofs in the hybrid model and composition theorems).

\begin{theorem}
\label{thm:nas}
The encrypted offline data-driven control algorithm in Section~\ref{sec:naSolution} achieves privacy with respect to the server.
\end{theorem}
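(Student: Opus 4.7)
The plan is to verify Definition~\ref{def:privacy} by explicitly constructing a polynomial-time simulator $\mr{Sim}$ whose output is computationally indistinguishable from the server's view, relying solely on the semantic security of the CKKS scheme. First I would inventory precisely what the server receives during the protocol of Section~\ref{sec:naSolution}: an initial batch of fresh ciphertexts encoding the precomputed diagonals of $\mbf A_r, \mbf A_y, \mbf A_u$ (delivered once, after the offline phase), and at every time step $t$, the fresh encryptions $\mr{E_{v0}}(\bar{\mbf u}_t), \mr{E_{v0}}(\bar{\mbf y}_t), \mr{E_{v0}}(\mbf r_t)$ (or the refreshed variants $\mr{E_{vv}}(\bar{\mbf u}_t)$, etc., whenever the multiplicative budget is exhausted, as discussed at the end of Section~\ref{sec:naSolution}). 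All rotations, SIMD additions and multiplications, masking and diagonal-method products executed by the server are deterministic functions of these ciphertexts together with $x_S$ and the server's coin tosses $r$, so they need not be separately simulated.

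Next I would define $\mr{Sim}(x_S,f_S(x_C,x_S))$ to proceed as follows: sample fresh randomness $r'$ for the server; for each ciphertext the real server would receive from the client, produce an encryption, under the client's public key $pk$, of a fixed dummy plaintext (for concreteness, the zero vector) of matching dimension, level and scaling depth; and append $f_S(x_C,x_S) = \mr{E}^{pk}(f(x_C,x_S))$ as the designated output ciphertext. The simulator knows $pk$ because $pk$ is part of $x_S$ (the server must hold it to homomorphically evaluate); it does not need the secret key. The output tuple $(x_S, r', \mr{msg}')$ has exactly the same syntactic form as $V^{\Pi}(x_C,x_S)$.

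The computational indistinguishability step then proceeds by a standard hybrid argument over the polynomially many ciphertexts in $\mr{msg}$. In hybrid $H_i$, the first $i$ ciphertexts are the real client messages and the remaining ones are encryptions of zero; $H_0$ matches $\mr{Sim}$'s output and $H_{|\mr{msg}|}$ matches $V^{\Pi}$. Any distinguisher between $H_i$ and $H_{i+1}$ would yield, by the usual reduction, a distinguisher against the semantic security of CKKS on a single ciphertext, contradicting the hardness of Decisional Ring-LWE at the security parameter $\kappa$ chosen in Section~\ref{sec:FHE}. Since the server's internal computations on ciphertexts are deterministic given $r$ and the received messages, no additional hybridization is needed for the intermediate homomorphic operations.

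The main obstacle I anticipate is bookkeeping rather than mathematical difficulty: I must ensure that the simulated ciphertexts faithfully match the real ones in \emph{every observable parameter} (ring dimension, level, number of moduli, scaling depth, slot layout $\mr{E_{v0}}$ vs.\ $\mr{E_{vv}}$), since the server may inspect these to decide which homomorphic sub-routine to run; otherwise indistinguishability could trivially fail on syntactic grounds. A secondary point to address carefully is the occasional refresh message, when the client resends a fresh encryption of $\bar{\mbf u}_t$ after the budget is exhausted: this must be simulated as another independent encryption of zero at the \emph{fresh} level, not at the degraded level of the ciphertext it replaces. Once these parameters are matched, the hybrid argument closes and gives $\mr{Sim}(x_S,f_S(x_C,x_S)) \stackrel{c}{\equiv} V^{\Pi}(x_C,x_S)$, proving the theorem.
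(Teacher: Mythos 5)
Your proposal is correct and follows essentially the same route as the paper's proof: both observe that the server's view consists only of fresh ciphertexts under the client's public key plus deterministic homomorphic computations on them, and both invoke the semantic security of CKKS to replace the client's messages with dummy encryptions of matching size and parameters. Your version merely makes explicit the hybrid argument and the parameter-matching bookkeeping that the paper leaves implicit.
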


\begin{proof}
At time $t\leq \mc T$, the client's input is $H\mbf U$, $H\mbf Y$, $\mbf u_{0:t-1}$, $\mbf y_{0:t-1}$, $\mbf r_{0:t-1}$ and $\mbf{pk}$, where $\mbf{pk}$ is the public key of the leveled homomorphic scheme used. We assume that, after the preprocessing protocol, the server has input $\mbf Q$, $\mbf R$, $\lambda_g$, $\lambda_u$, $\lambda_y$, $\mbf{pk}$ and receives as messages $\mr{E}_{\mr{v0}}(\mr{diag}_i\mbf A_r)$, $\mr{E}_{\mr{v0}}(\mr{diag}_i\mbf A_y)$, $\mr{E}_{\mr{v0}}(\mr{diag}_i\mbf A_u)$. At time $t$, the server receives from the client either  $\mr{E}_{\mr{vv}}(\bar{\mbf u}_t),\mr{E}_{\mr{vv}}(\bar{\mbf y}_t), \mr{E}_{\mr{vv}}(\mbf r_t)$ or $\mr{E}_{\mr{vv}}(\bar{\mbf y}_t), \mr{E}_{\mr{vv}}(\mbf r_t)$. After applying the functionality as described in Section~\ref{sec:naSolution}, the server obtains $\mr{E}_{\mr{v0}}(\mbf u_t)$ or $\mr{E}_{\mr{v\ast}}(\mbf u_t)$ and sends it to the client.

The proof immediately follows from the fact that the server only receives fresh ciphertexts from the client. Since the homomorphic encryption scheme is semantically secure, these ciphertexts are computationally indistinguishable from random encryptions, so we can construct a simulator for the server that replaces the true messages from the client by random encryptions of the same size. The input-output distribution of such a simulator will be indistinguishable from the input-output distribution of the true protocol.
\end{proof}

\begin{theorem}
\label{thm:adaptive}
The encrypted online data-driven control algorithm in Section~\ref{sec:solution} achieves privacy with respect to the server.
\end{theorem}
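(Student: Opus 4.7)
The plan is to mirror the proof of Theorem~\ref{thm:nas} and extend it to the additional interactions unique to the online algorithm, namely the outsourced scalar inversion and the optional ciphertext-refreshing step (option (iii) in Section~\ref{sec:solution}). First I would enumerate exhaustively, over the execution described in Algorithm~\ref{alg:online}, the full list of messages the server receives from the client: (a) the initial encryptions of the offline precollected data $H\mbf U$, $H\mbf Y$, $\mbf M^{-1}$ and the public key $pk$ produced during the setup phase; (b) at each time step $t \ge M$, the fresh encryptions of $\bar{\mbf u}_t$, $\bar{\mbf y}_t$, and $\mbf r_t$ (in the various $\mr{E_{v0}}$, $\mr{E_{vv}}$, $\mr{E_{vr0}}$ encodings described in Section~\ref{subsec:reducing}); (c) the returned encryption $\mr{E}(1/s_t)$ after the server outsources the scalar division of the Schur complement; and (d) whenever the multiplicative budget is exhausted, the refreshed ciphertexts packing $\mbf M^{-1}_t$ obtained via the mechanism in~\eqref{eq:packingM_1}.

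The key observation is that every message in (a)--(d) is a \emph{fresh} CKKS ciphertext, produced by the client's encryption routine using fresh randomness under $pk$. For messages in (c) and (d), although the underlying plaintexts $1/s_t$ and $\mbf M^{-1}_t$ are causally related to the ciphertexts the server already holds, the client decrypts, processes in the clear, and re-encrypts with independent randomness, so the resulting ciphertexts have the standard CKKS freshly-encrypted distribution. Next I would construct the simulator $\mr{Sim}$: on input $(x_S, f_S(x_C, x_S))$, it runs the server's deterministic arithmetic circuit honestly from $x_S$ alone, and for every incoming message of type (a)--(d) it substitutes an encryption of zero (or any fixed dummy plaintext) at the same level, scaling depth, and modulus count as the real message. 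Finally, where the protocol would return the output, $\mr{Sim}$ plants $f_S(x_C, x_S) = \mr{E_{v0}}(\mbf u_t)$ (up to the scaling factor discussed in Section~\ref{subsec:precision}).

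To conclude, I would invoke the semantic security of the CKKS scheme at the chosen security parameter $\kappa$, combined with a standard hybrid argument indexed over the time steps $t = 0, 1, \ldots$ and the refresh instances. In each hybrid, replacing one real ciphertext by an encryption of zero changes the distribution by a negligible amount; summing over the polynomially many hybrids still yields a negligible distinguishing advantage, so the output of $\mr{Sim}$ is computationally indistinguishable from the server's real view $V^\Pi(x_C, x_S)$.

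The main obstacle is ensuring that the scalar-inversion exchange leaks nothing beyond $f_S(x_C,x_S)$: the server transiently handles the encrypted Schur complement $\mr{E}(s_t)$, which depends nontrivially on $\mbf h$, $\mbf H$ and $\lambda_g$. I would resolve this by noting that $\mr{E}(s_t)$ is itself produced deterministically by the server from ciphertexts already present in its view, and is therefore simulable from $x_S$ plus the substituted ciphertexts; the corresponding reply $\mr{E}(1/s_t)$ is the only new object returned to the server and, being a fresh encryption, is indistinguishable from an encryption of zero by semantic security. The same reasoning covers option (iii), completing the reduction.
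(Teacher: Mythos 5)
Your proposal is correct and follows essentially the same route as the paper's proof: enumerate the server's view, observe that every incoming message (including $\mr{E}(1/s_t)$ and the refreshed $\mr{E_{v0}}(\mbf M_t^{-1})$) is a freshly encrypted ciphertext under the client's public key, build a simulator that substitutes dummy fresh encryptions of matching size, and conclude by the semantic security of CKKS. The only cosmetic differences are that you spell out the hybrid argument over time steps explicitly and that the paper's final output is $\mr{E_{v\ast}}(\boldsymbol\upsilon_t)$ rather than $\mr{E_{v0}}(\mbf u_t)$, neither of which affects the argument.
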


\begin{proof}
The input of the client is the same as in the proof of Theorem~\ref{thm:nas}. The input of the server is $\mbf Q, \mbf R, \lambda_g, \lambda_u, \lambda_y, \mbf{pk}$ and after the preprocessing protocol, it has the corresponding ciphertexts for $H\mbf U_0,H\mbf Y_0,\mbf M_0^{-1}, \mbf U^f_0$. At time $t<\bar T$, while collecting new samples, the server has ciphertexts corresponding to $H\mbf U_{t-1},H\mbf Y_{t-1},\mbf M_{t-1}^{-1}, \bar{\mbf u}_{t-1}, \bar{\mbf y}_{t-1}$, receives from the client $\mr{E}_{\mr{vr0}}(\mbf u_t),\mr{E}_{\mr{vr0}}(\mbf y_t), \mr{E}_{\mr{vr0}}(\mbf r_t)$, then computes $\mr{E}_{\mr{vr\ast}}(s_t)$ and receives from the client $\mr{E}_{\mr{vr0}}(1/s_t)$. The analysis in Section~\ref{subsec:precision} does not reveal private information: although the cloud server knows $\lambda_g$, it never gets $s_t$ in cleartext. Finally, after applying the functionality described in Section~\ref{sec:solution} and Section~\ref{app:online}, the server obtains the corresponding ciphertexts of $H\mbf U_{t},H\mbf Y_{t},\mbf M_{t}^{-1},\mbf U^f_t, \bar{\mbf u}_{t}, \bar{\mbf y}_{t}$ and $\mr{E}_{\mr{v\ast}}(\boldsymbol \upsilon_t)$, and sends the latter to the client, which decrypts it and computes $\mbf u_t$. During the times designated for refreshing $\mbf M_t^{-1}$, the server packs it into a single ciphertext, sends it to the client and receives $\mr{E}_{\mr{v0}}(\mbf M_t^{-1})$, which uses new uncorrelated randomness. After the collection of the new samples ends, i.e., $\bar T\leq t \leq \mc T$, the server only updates the ciphertexts corresponding to the quantities $\bar{\mbf u}_t$ and $\bar{\mbf y}_t$, but otherwise computes the same output. 

Since the server only receives fresh encryptions of the private data, we can construct a simulator for the server that replaces the true messages from the client by random encryption of corresponding size and apply the server's functionality. Privacy follows because of the semantic security of the underlying CKKS scheme.
\end{proof}

\section{Extensions}
\label{sec:extensions}
We now briefly describe extensions and possible modifications to the online algorithm.

\subsection{Inputting blocks of control inputs}
For simplicity of exposition, we illustrated the encrypted computations outline for when the server sends only the first component of $\mbf u^{\ast,t}$ to the client. In practice, it is common that more consecutive components from the $N$ computed control inputs are applied. This is also beneficial for the encrypted solution, since the most expensive computations, the inversion of $\mbf M_t$ and its subsequent update, need to be computed only once for multiple time steps. Similarly, the communication rounds between the client and the server would be reduced. We recommend this for larger systems.

\subsection{Batch collection of new samples}
We can still use the Schur complement to compute the inverse of a rank-$x$ update of the matrix $\mbf M_t$, where $x$ is how many new data samples we want to accumulate at once. The server then packs the matrix representing the Schur complement into one ciphertext and asks the client to invert it and send it back. The circuit depth will increase accordingly. 

\subsection{Sliding window of sample collection}

In order to manage the growth in the number of variables, i.e., number of columns of $H\mbf U_t$ and $H\mbf Y_t$, after collecting a sample and updating $\mbf M_t^{-1}$, we would like to remove the first sample collected. We can again achieve this by using Schur's complement and the Woodbury matrix inversion lemma. We revert to the notation used in Section~\ref{sec:main_idea} for simplicity of the exposition and we show how to remove the first sample, i.e., the first row and first column of $\mbf M'$ (for removing rows and columns inside the matrix, we will have to multiply by permutation matrices). 
\begin{align*}
\mbf M' &= \left[ \begin{matrix} 	\mbf M & \mbf m^\intercal \\
						\mbf m & \mu \end{matrix} \right]
		=: \left[ \begin{matrix} 	\nu & \mbf n \\
						\mbf n^\intercal & \mbf N \end{matrix} \right].
\end{align*}
Given ${\mbf M'}^{-1}$, we would like to extract $\mbf N^{-1}$. 
The Schur complements expressions for the two matrices are:
\begin{align*}
    \mbf M'/\mbf M &:= \mu - \mbf m \mbf M^{-1}\mbf m^\intercal = m_{S} , \quad \mbf M'/\mbf N := \nu - \mbf n \mbf N^{-1}\mbf n^\intercal, \quad \mbf M'/\nu := \mbf N - \mbf n^\intercal \nu^{-1}\mbf n.
\end{align*}

This gives the expression in~\eqref{eq:schur} and the following equal expression for $(\mbf M')^{-1}$:
\begin{align*}
	{\mbf M'}^{-1} &= \left[ \begin{matrix} 	(\mbf M'/\mbf N)^{-1} & - (\mbf M'/\mbf N)^{-1} \mbf n\mbf N^{-1} \\
								- \mbf N^{-1}\mbf n^\intercal(\mbf M'/\mbf N)^{-1} & (\mbf M'/\nu)^{-1} \end{matrix} \right] =: \left[\begin{matrix} l_1 & \mbf L_2 \\ \mbf L_2^\intercal & \mbf L_3. \end{matrix}\right].
\end{align*}
By the matrix inversion lemma,
\begin{align*}
    \mbf L_3 &= \mbf N^{-1} + \mbf N^{-1}\mbf n^\intercal(\mbf M'/\mbf N)^{-1} (\mbf M'/\mbf N)(\mbf M'/\mbf N)^{-1}\mbf n\mbf N^{-1}= \mbf N^{-1} + \mbf L_2^\intercal l_1^{-1} \mbf L_2\\
    \mbf N^{-1} &= \mbf L_3 - \mbf L_2^\intercal l_1^{-1} \mbf L_2.
\end{align*}

Removing a sample after collecting one sample incurs an increase by two in the total depth of the computation and asking the client to perform the division of $l_1$. 

\subsection{Reducing accumulated noise}

As discussed in Section~\ref{subsec:precision}, noise accumulates in the stored ciphertexts of $\mbf M_t^{-1}$ at the server, because of the fixed precision imposed by the native dimension of 64 bits (in the Residue Number System implementation of the encryption scheme).
However, before the noise grows too much, the server could try to keep track of it and reduce it by using the fact it can also easily compute $\mbf M_t$ as in~\eqref{eq:M'}. Denote the error on $\mbf M_t^{-1}$ by $\mbf X$. Then:
\begin{align}\label{eq:error_M_1}
\begin{split}
    \boldsymbol \Lambda := \mbf M_t (\mbf M^{-1}_t + \mbf X) &= \mbf I + \mbf M_t \mbf X\\
    (\mbf M^{-1}_t + \mbf X)(\boldsymbol \Lambda - \mbf I) &= \mbf X - \mbf X \mbf M_t \mbf X.
\end{split}
\end{align}

It is safe to assume $|| \mbf X ||_\infty \leq \epsilon ||\mbf M_t||_\infty$ (other norms can be chosen), with $\epsilon < 1$. Furthermore, if $\epsilon$ satisfies the relation $\epsilon || \mbf M_t ||^2_\infty \leq 1$, then $||\mbf X\mbf M_t \mbf X||_\infty \leq ||\mbf X||_\infty$. This would mean that the server can compute a better approximation of $\mbf M_t^{-1}$ than $\mbf M^{-1}_t + \mbf X$ as $\mbf M^{-1}_t + \mbf X\mbf M_t\mbf X$. However, this second approximation would imply a larger depth for $\mbf M^{-1}_t$, so the best option is for the server to compute this approximation right before it asks the client to refresh the ciphertext.

\subsection{Function privacy}\label{sec:function_privacy}

The algorithm run at the CaaS service provider might be proprietary. In that case, the client should not get extra information about that algorithm, apart from what it agreed to receive. This notion is captured by the \textit{function privacy} property of a homomorphic encryption scheme, which states that a ciphertext should not expose information about the function that was evaluated in order to create that ciphertext. Inherently, the CKKS scheme does not preserve function privacy because of the partial information stored in the junk elements. The common ways to achieve function privacy is for the server to mask out the irrelevant slots before handing a ciphertext to the client to decrypt, either by a multiplicative mask (zeroing out the junk elements) or by adding large enough randomness to them. The multiplicative mask increases the depth by one, whereas the randomness masking can lead to an increase in the plaintext modulus since the randomness has to be large enough to drown the relevant information. 

In our scenario, the client is allowed to obtain the numerator and denominator of the control input at every time step and also the inverse matrix $\mbf M^{-1}$, which is only determined from the inputs and outputs that the client knows. The server can zero out the extra slots in $\mr{E_{vr\ast}}(s_t)$ and $ \mr{E_{v\ast}}(\boldsymbol \upsilon_t)$. This masking would add one extra level to the respective ciphertext, but the level increase would not build up, since these masks are not used in posterior computations. Further investigation on keeping the service provider's algorithm and costs private from the client will be subject to future work.
\section{Implementation and evaluation}
\label{sec:simulations}
We focused on testing out our proposed algorithms on a temperature control problem, because first, smart buildings can support flexible encryption methods for the collected data, and second, the sampling time is of the order of minutes, which allows for complex computations and communication between the client and the server to take place.

\subsection{Comparison with algorithm in~\cite{Alexandru2020towards}}
We first want to underline the improvements in both runtime and memory that the algorithm presented in the current work brings compared to the algorithm in~\cite{Alexandru2020towards}. For the same parameters considered in the zone temperature problem exemplified in~\cite{Alexandru2020towards}, which had one input and one output, used a ring dimension of $2^{12}$ and 21 levels in the ciphertext, we used the same commodity laptop with Intel Core i7, 8 GB of RAM and 8 virtual cores at 1.88 GHz frequency. We obtained an improvement of 2.4x in terms of memory: 2.25 GB compared to 5.48 GB. We obtained an improvement of 4x in terms of total running time, and specifically, the maximum runtime per step dropped from 308.8 s (previous algorithm) to 75 s (current algorithm). Some of the changes (e.g., in terms of better thread parallelization) we applied to our current algorithm can also be applied to the algorithm in~\cite{Alexandru2020towards}. These improvements allowed us to simulate the encrypted control for larger systems with better security parameters, as described in the next parts.

\subsection{2x2 system with 73 bits of security}
\label{subsec:78bits}

\subsubsection{Setup}
We considered a temperature control problem for the ground floor of a building with two zones. The~sampling time is of $T_s = 420$ s. The system parameters are: $n = 2, m = 2, p = 2, M = 4, N = 4, T = 32$ and the system is stable. 
To mimic a realistic example, we considered two types of disturbances: unknown, characterized by a zero mean Gaussian process noise with covariance $0.001\mbf I$ and a zero mean Gaussian measurement noise with covariance $0.01\mbf I$, and known, slowly varying disturbances caused by the exterior temperature and the ground temperature, which we simulate as varying uniformly at random between $[24.5^\circ, 25.5^\circ]$ and $[9.5^\circ, 10.5^\circ]$, respectively. We choose the cost matrices and regularization terms $\mbf Q = \mbf I, \mbf R = 10^{-3}\mbf I, \lambda_g = 5, \lambda_y = \lambda_u = 1$. 

For the offline data collection, we assume a different initial point than for the online computation, and randomly sample the offline input trajectory so that the corresponding output trajectory lies in the interval $[10^\circ,40^\circ]$.
The $M,N$ and $T$ parameters mean that we start from 25 columns in the Hankel matrices. We note that, due to noise, the offline feedback control algorithm has suboptimal performance for these parameters--see Figure~\ref{fig:comparison}. 
Thus, we collect data online for 15 more time steps, which means adding another 15 columns to the Hankel matrices. Afterwards, we run the system with fixed gains. Figure~\ref{fig:comparison} shows the performance of the setpoint tracking with these parameters and under noise and disturbances.

\begin{figure}[ht]
	\centering
	\includegraphics[width=0.55\columnwidth]{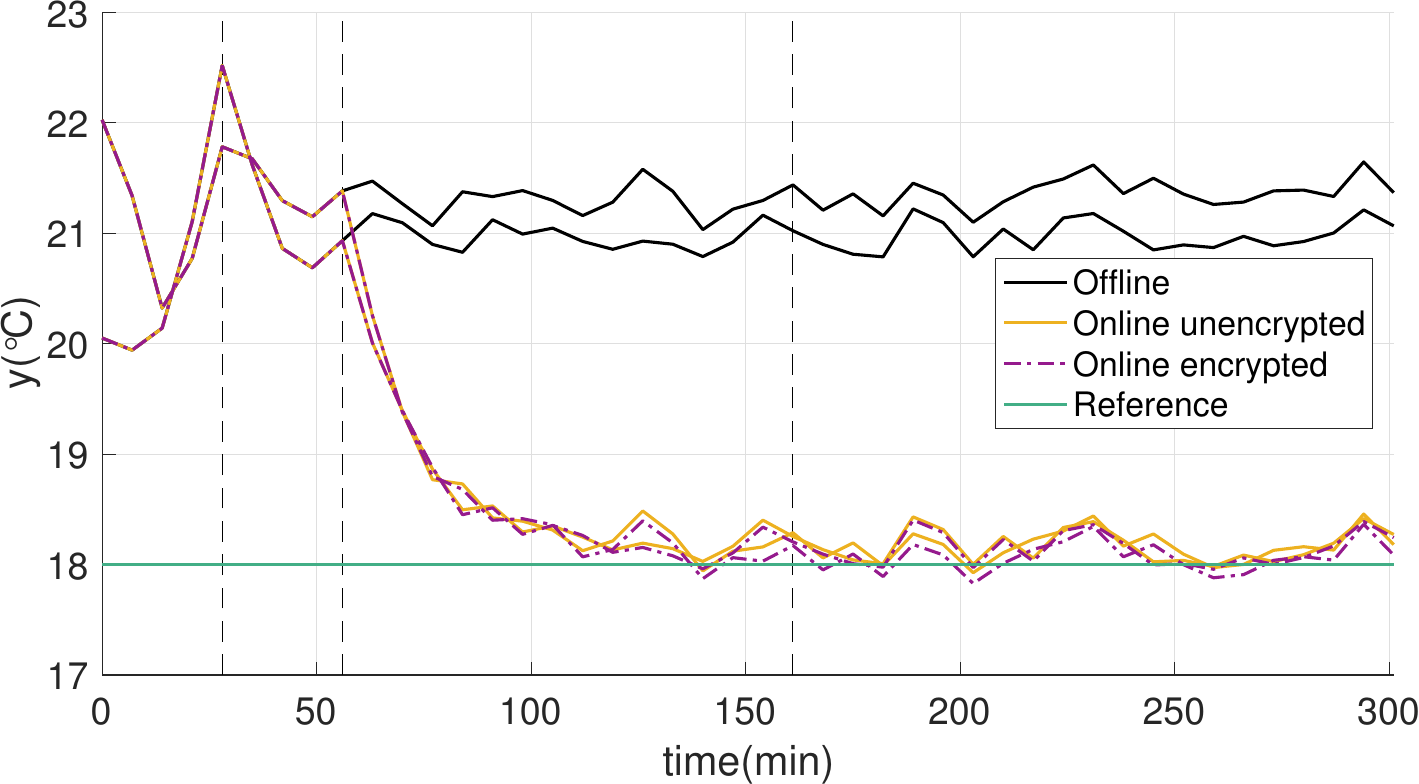}
	\caption{Tracking performance of the online versus offline feedback in the presence of noise. To make the comparison meaningful, we use the same noise sequence for both schemes, in both the encrypted and unencrypted runs of the algorithms. The first vertical dashed line marks the first $M$ time steps, corresponding to the initial offline data, the second vertical dashed line marks the following $N$ time steps, corresponding to the trajectory concatenation, and the last vertical dashed line marks the end of sample collection.}
	\label{fig:comparison}
\end{figure}

\subsubsection{Implementation details}
For better benchmarking capabilities (less variability between runs), for these experiments we used an AWS EC2 c5.2xlarge machine, with 8 virtual cores, 3.4GHz frequency and 16 GB of RAM. 
We implemented\footnote{https://github.com/andreea-alexandru/swift-private-data-driven-control} the encrypted solutions~proposed in this paper using the PALISADE library \cite{PALISADE}. 

We use a ring dimension of $2^{14}$ and 15 levels (for a refresh step after 5 collected samples).  
The resulting ciphertext modulus of a fresh ciphertext is of 760 bits (the first plaintext modulus has 60 bits and the following plaintext moduli have 50 bits). This gives a security size of 73 bits, according to the LWE estimator~\cite{Albrecht15}. Choosing a smaller refreshing time increases the communication rounds between the client and the server (one extra exchange at every 35 minutes in this case), but at the same time decreases the multiplicative depth of the circuit, reducing the computational complexity and the size of the parameters. 

\subsubsection{Precision}
Figure~\ref{fig:comparison} shows the tracking performance for the encrypted and unencrypted versions of the online and offline feedback algorithms. Due to the precision loss incurred by working with a deep circuit on encrypted data, we see a small offset in the measured output compared to the unecrypted measured output for the online feedback algorithm. Since the offline feedback algorithm  is a very shallow circuit (depth 1) when the client returns a fresh encryption of the control input to the server, it incurs no precision loss. 

We obtain a maximum magnitude of  $0.129^\circ$C and an average of $0.045^\circ$C for the difference between the~unencrypted measured output and the encrypted output during the online simulation of a noisy process. For the control input, the maximum difference is $0.248$ kWatts and an average difference of $0.053$ kWatts. A typical such sample is depicted in Figure~\ref{fig:comparison}. Notice that the maximum magnitude of the difference is smaller than the variation in the disturbances and does not affect the performance and stability of the control algorithm. One can increase the plaintext modulus in order to reduce the error introduced by encryption (e.g. 53 bits instead of 50 bits in the moduli will lead to an almost~tenfold decrease in the differences). However, if we keep the ring dimension constant, this reduces the security level to 69 bits. In Section~\ref{subsec:more_security}, we show results for better precision at a standard security level. As described in Section~\ref{subsec:precision}, the loss of precision can be alleviated by a larger regularization parameter $\lambda_g$. For example, keeping all other parameters the same but increasing $\lambda_g = 10$ will lead to a maximum magnitude of the difference in the measurements of $0.037^\circ$ and the maximum magnitude of the difference in the inputs of $0.043$ kWatts, but the tracking performance is slightly reduced (from $0.06^\circ$C to $0.12^\circ$C deviation from the reference in the noiseless case).

\subsubsection{Offline}
The encrypted offline feedback algorithm is very fast, since it has only depth 1 when the client sends to the server encryptions of $\bar {\mbf u}_t$ and $\bar {\mbf y}_t$. A ring dimension of $2^{12}$ gives a security parameter of 126 bits, with a ciphertext having a size of 110 KB. The peak online RAM is 42 MB. The average runtime for a time step is 57 ms for the server and 12 ms for the client. 

\subsubsection{Online} We now examine the encrypted online control algorithm simulation adding details that complement the information in Section~\ref{sec:solution}.
\subsubsection*{Memory}
The peak RAM for both the client and the server incurred during the three phases for 45 time steps was 6.26 GB, out of which, 3.9 MB is the public key, 2 MB is the secret key, 15.7 MB is the relinearization key and 4.6 GB are the evaluation rotation keys (corresponding to a key for each of the 336 rotation indices), generated offline. The client discards the evaluation keys after it sends them to the server. At every iteration, the client receives two ciphertexts from the server (the Schur complement and control action) and sends three ciphertexts (inverse of the Schur complement, measurement, control action--we assumed here that the setpoint is constant). During the refresh time steps, one extra ciphertext is sent and received (the inverse matrix). The size of these ciphertexts depends on the level where the ciphertext is at. Specifically, for our selected parameters described in the beginning of the subsection, the ciphertexts are 2.77 MB at maximum size and 240 KB at the minimum size.

\subsubsection*{Time}
The runtimes for the encrypted computation are given in Figure~\ref{fig:running_time}, where three different phases are depicted: trajectory concatenation, online sample collection and static update. The total offline initialization time for key generation and ciphertext encryption is 60 s for both the client and the server. In all online phases, the client only performs cheap and fast computations that take less that 0.3~s. 

\begin{figure}[ht]
	\centering
	\includegraphics[width=0.5\columnwidth]{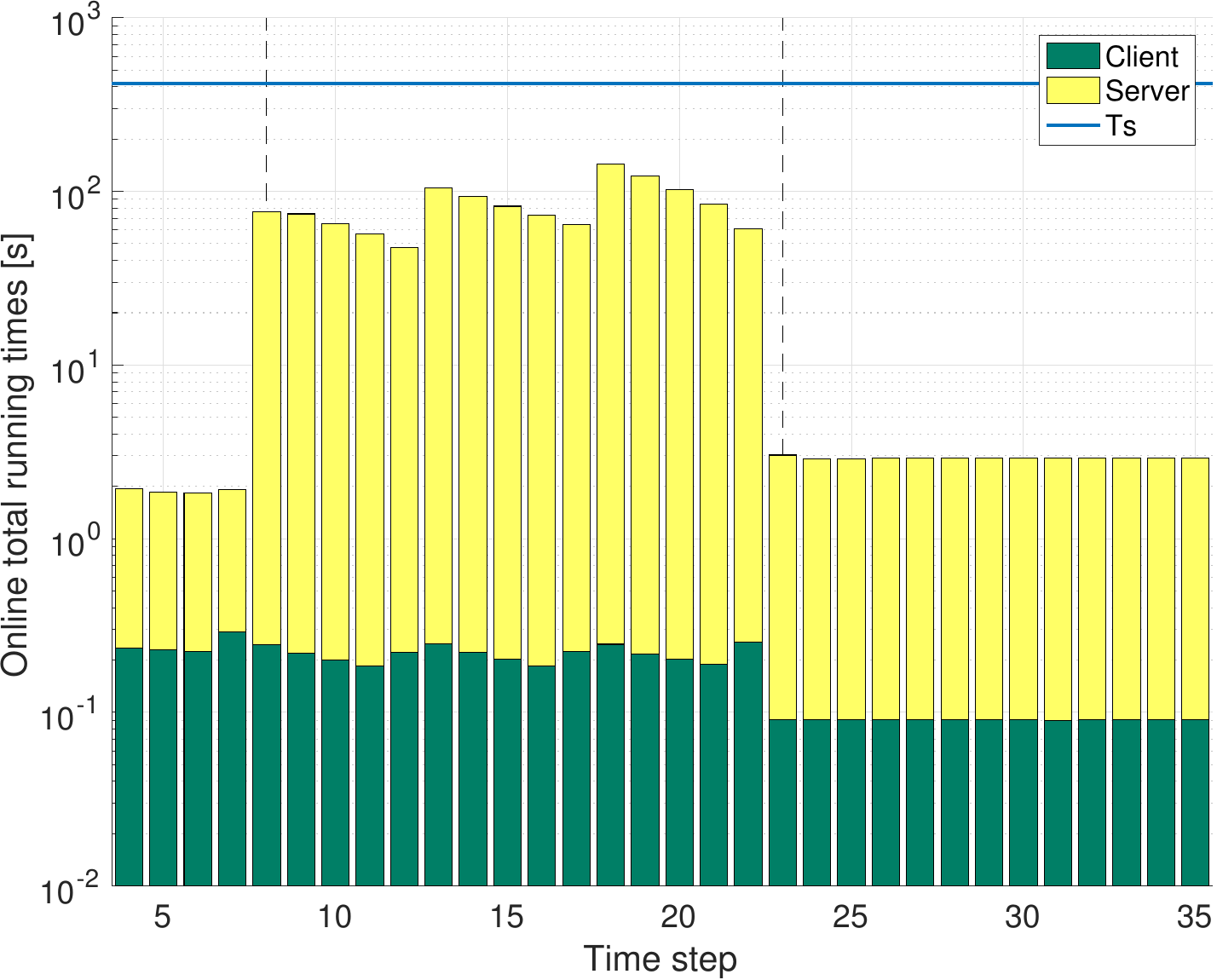}
	\caption{Running times for the computations performed at the client and the cloud server for the encrypted online control algorithm. The plot is semi-logarithmic and the amounts of time required by the client and the cloud server are stacked. 
	The first $M$ time steps are not depicted, because the server has no computational load. The first vertical dashed line marks the following $N$ time steps, corresponding to the trajectory concatenation, and the last vertical dashed line marks the end of sample collection.}
	\label{fig:running_time}
\end{figure}

The first online phase is the trajectory concatenation phase, as described in Section~\ref{subsec:online}, from $t = M-1 = 3$ to $t = M + N - 1 = 7$, where the server computes the encrypted control action only with the precollected data. This is very efficient, despite computing on the maximum ciphertext sizes, and results in a total computation time for one time step of 2~s. Before $t = M-1$, the client applies random inputs. 

The second phase is the online collection of new input-output samples, which implies modifying the Hankel matrices and computing the inverse matrix via the Schur complement at every time step $\mbf M^{-1}_t$ from step 8 to step 22. This phase is split in as many parts as  the refreshing time dictates. In the simulation depicted in Figure~\ref{fig:running_time}, this corresponds to three parts. At the established refresh times ($t = 12, 17$), the server packs the matrix $\mbf M^{-1}_t$ into one ciphertext, sends it to the client to re-encrypt it with the maximum number of moduli, and unpacks it back into column-wise ciphertexts. First, the increase in runtime between the subsequent time steps 13 and 12, respectively 18 and 17, is given by the fact the ciphertexts are returned to the maximum number of moduli. Second, the increase in computation time from the first refresh at time 12 and 13 (47.23 s and 104.6 s) to the second refresh at time 17 and 18 (63.9 s and 144.2 s) is given by the fact that the server has to deal with more collected samples ciphertexts than in the beginning of phase two. The intermediate decrease in the computation time is given by the decrease in the ciphertext size (we make sure to compute with only the minimal number of moduli required). 

The third phase corresponds to the computations after stopping the collection of new samples, which starts from time step 23 and can go for the rest of the desired simulation time (accounting for Remark~\ref{rem:ringv0_vector} in Appendix~\ref{app:online}). The third phase is only slightly more computationally intensive than the first. The time for the client halves compared to the previous two phases, because the client has to decrypt and encrypt ciphertexts with two moduli. The running time for the server substantially decreases compared to phase two but doubles compared to phase one, because we use a less efficient matrix-vector multiplications in order to minimize the multiplicative depth. Nevertheless, the running time required for computing the control input at one time step is around 3~s. 

\subsection{More security}
\label{subsec:more_security}

In practice, the desired security parameter for a cryptographic application is at least 100 bits. The refresh time after 5 time steps (35 minutes) gives a ciphertext modulus of 802 bits and the refresh time after 8 time steps (56 minutes), which means a deeper circuit, gives a ciphertext modulus of 1120 bits. In Table~\ref{tab:simulations}, we present simulation results where we choose the ring dimension $2^{15}$ such that we satisfy this security requirement for the chosen ciphertext modulus. We simulate for the above 2x2 system and for a 4x4 system (where $M = 4, N = 4, T = 64$ and we again collect 15 more online samples, which gives a maximum number of 72 unknowns). We use 53 bits of precision for the plaintext moduli, such that we obtain $0.034^\circ$C and $0.008^\circ$C maximum, respectively average difference between the encrypted and unencrypted measurements, and $0.05$ kWatts and $0.008$ kWatts maximum, respectively average difference for the input.  

We use different AWS cloud machines in order to satisfy the RAM and sampling time requirements. 
The c5.2xlarge machine (16 GB RAM) used 8 threads, the c5.4xlarge machine (32 GB RAM ) used 13 threads and the c5.9xlarge machine (72 GB RAM) used 28 threads. 

From Table~\ref{tab:simulations}, we see the trade-off between a longer refresh time (the server asks the client to refresh a ciphertext after more time, which implies a deeper circuit and less security for the same ring dimension) and the memory consumption and computation time. The conclusion is that \textit{the refresh time is a tuning knob between security level, communication and computational complexity which should be determined according to the application's specifications.}

Note that as the performance of the underlying homomorphic encryption library improves, these improvements will reflect in our computation times as well.

\begin{table*}[ht]
\centering
\caption{
Simulation results of the encrypted online feedback algorithm for systems of two different sizes, at different security levels and different refresh times. In all entries, the ring dimension is $2^{15}$, to ensure a security level of at least 100 bits. The online times for the client were obtained from a commodity laptop-like machine, c5.2xlarge.}
 \begin{tabular}{| c | c | c | c | c | c | c | c | c | c | c | c |} 
 \hline
  System & \# of steps & Security & Ciphertext & \multicolumn{2}{c|}{Max. runtime } & \multicolumn{2}{c|}{Max. runtime} & \multicolumn{2}{c|}{Max. runtime} & Max. & Server \\ [0.5ex]
  dim. & until refresh & level [bits] & modulus & \multicolumn{2}{c|}{phase I [s]} & \multicolumn{2}{c|}{phase II [s]} & \multicolumn{2}{c|}{phase III [s]} & RAM [GB] & machine \\ [0.5ex]
  \cline{5-10}
  & & & & Client & Server & Client & Server  & Client & Server & & \\  [0.5ex] \hline
  \hline
 $2\times2$ & 8 & 100 & 1120 & 0.88 & 3.48 & 0.84 & 220.2 & 0.23 & 4.36 & 17.19 & c5.4xlarge\\
\hline
 $2\times2$ & 5 & 142 & 802 & 0.6 & 3.23 & 0.5 & 298.1 & 0.17 & 6.21 & 12.57 & c5.2xlarge \\ \hline
 $4\times4$ & 8 & 100 & 1120 & 0.88 & 4.06 & 0.84 & 413.07& 0.23 & 4.94 & 34.32 & c5.9xlarge \\ \hline
 $4\times4$ & 5 & 142 & 802 & 0.6 & 2.95 & 0.5 & 325.72 & 0.17 & 4.42 & 27.29 & c5.9xlarge \\ \hline
\end{tabular}
\label{tab:simulations}
\end{table*}

\section{Conclusions and future work}
\label{sec:discussion}
We proposed an online data-driven control algorithm designed to be encryption-friendly, i.e., allow both good setpoint tracking and efficient encrypted computations for privacy-preserving control. We achieve this by choosing a regularized relaxed convex formulation of the control optimization problem, such that we obtain a closed form solution for the control input, which we prove to be close to the solution of the noiseless original problem, under conditions on the regularization parameters. Moreover, this solution can be computed using a low-depth arithmetic circuit using the Schur's complement formula for matrix inversion and manipulating the sequential matrix-vector operations into different forms. These computations are amenable for an efficient encrypted solution using leveled homomorphic encryption and carefully encoding vectors into ciphertexts. The client encrypts its measurements with such an encryption scheme and sends them to a cloud server, which returns the encrypted control input. We discussed how to achieve an optimized implementation of the encrypted computations and how to make sure that the computations can run for a desired number of time steps. Finally, we discussed extensions that can be implemented to enhance the functionality of this encrypted control algorithm.
These tools can be used to tackle other related problems, such as adaptive control or that involve sequential matrix updates.

The LQR data-driven approximation algorithm we presented in this paper is the first step towards private data-driven control. We plan to include constraints on the inputs and the outputs. This will substantially increase the complexity of the encrypted computations, since comparisons are not inherently polynomial operations. 
We plan to explore other encryption-friendly methods for private data-driven control, that are less data-hungry than the behavioral framework and more amenable for larger systems.

\section*{Acknowledgements}
This work was supported by the NSF under grant CPS 1837253. A. Alexandru would like to thank Yuriy Polyakov, Manfred Morari and Matei Ionita for helpful discussions.

\bibliographystyle{IEEEtran}
\bibliography{IEEEabrv,biblo}

\appendices
\section{Proof of closeness}
\label{app:theory}
In this section, we present a detailed proof of Theorem~\ref{thm:closeness}. First, we characterize the solution set of the original behavioral problem~\eqref{eq:optimization0}. Then, we characterize the solution of the approximate problem~\eqref{eq:optimization2}. To prove closeness, we make suitable use of the inversion lemma, the pseudo-inverse (denoted by $\dagger$) limit definition, and Singular Value Decomposition (SVD).
\subsection{Notation and preliminaries}
Before we proceed with the proof, we consider the following definitions:
\begin{align}
&\mbf{\hat{M}}:={\mbf Y^f}^\intercal\mbf{Q}\mbf{Y}^{f}+{\mbf U^f}^\intercal\mbf{R}\mbf{U}^{f}+\lambda_g \mbf{I}\label{eq:app_hat_M} \\ 
&\mbf{\hat{M}}_0:={\mbf Y^f}^\intercal\mbf{Q}\mbf{Y}^{f}+{\mbf U^f}^\intercal\mbf{R}\mbf{U}^{f}, \label{eq:app_hat_M_0}
\end{align}
where $\mbf{\hat{M}}$ is matrix $\mbf{M}$ defined in~\eqref{eq:M}, without the penalty terms. Similarly, $\mbf{\hat{M}}_0$ is matrix $\mbf{M}$ without any regularization or penalty terms. 

For compactness, we denote the matrix of past precollected data as $\mbf D_p$ and the vector of the last $M$ outputs and inputs as $\mbf d_t$:
\begin{equation}\label{eq:app_D_p}
    \mbf{D}_{p}:=\left[\begin{matrix}  \mbf U^p \\ \mbf Y^p \end{matrix}\right], \qquad
    \mbf{d}_t:=\left[\begin{matrix}  \mbf{\bar{u}}_t \\ \mbf{\bar{y}}_t \end{matrix}\right].
\end{equation}
Both matrices $\mbf{\hat{M}}_0,$ $\mbf{D}_p$ are singular. 
Let the SVD of the past precollected data $\mbf{D}_p$ be:
\begin{equation}\label{eq:app_svd_past} 
  \mbf{D}_p=\left[\begin{matrix}  \mbf{E}&\mbf{E}_{\perp} \end{matrix}\right] \left[\begin{matrix}  \mbf{\Sigma}&\mbf{0}\\\mbf{0}&\mbf{0} \end{matrix}\right] \left[\begin{matrix}  \mbf{F}^{\intercal}\\{\mbf{F}_{\perp}}^{\intercal} \end{matrix}\right] ,
\end{equation}
where $\mbf\Sigma\in\mathbb{R}^{q\times q}$ is an invertible diagonal matrix, containing the non-zero singular values of $\mbf{D}_p$ with $q=\mathrm{rank}(\mbf D_p)$, while matrices $\left[\begin{matrix}  \mbf{E}&\mbf{E}_{\perp} \end{matrix}\right]$, $\left[\begin{matrix}  \mbf{F}&\mbf{F}_{\perp} \end{matrix}\right]$ are orthonormal.
Recall that the set of optimal solutions of the original behavioral problem~\eqref{eq:optimization0} is denoted by $\mathcal{G}_{\mathrm{opt}}$--see~\eqref{eq:set_optimal_solutions}.
The minimum norm element of $ \mathcal{G}_{\mathrm{opt}}$ is denoted by $\mbf{g}_{\min}$--see~\eqref{eq:minimum_norm_g}. 

For completeness we present some inversion formulae.
\begin{lemma}[Inversion formulae]\label{lem:inversions}
Let $\mbf K,\mbf L,\mbf V$ be matrices of appropriate dimensions. Then:\\
a) Provided the respective inverses exist:
\begin{align*}
       & (\mbf K+\mbf{V}^{\intercal}\mbf L\mbf V)^{-1}=\mbf K^{-1}-\mbf K^{-1}\mbf V^{\intercal}(\mbf L^{-1}+\mbf V\mbf K^{-1}\mbf V^{\intercal})^{-1}\mbf V\mbf K^{-1}\\
        &(\mbf K+\mbf{V}^{\intercal}\mbf L\mbf V)^{-1}\mbf V^{\intercal}\mbf L=\mbf K^{-1}\mbf V^{\intercal}(\mbf L^{-1}+\mbf V\mbf K^{-1}\mbf V^{\intercal})^{-1}.
    \end{align*}
\noindent b) Let $\left[\begin{matrix}  \mbf{V}&\mbf{V}_{\perp} \end{matrix}\right]$ be an orthonormal matrix. Then, if $\mbf L$ is invertible:
    \begin{align}\label{eq:app_inversion_subspace}
     & (\mbf{V}^{\intercal}\mbf{L}^{-1}\mbf{V})^{-1}=\mbf{V}^{\intercal}\mbf{L}\mbf{V}-\mbf{V}^{\intercal}\mbf{L}\mbf{V}_{\perp}(\mbf{V}_{\perp}^{\intercal}\mbf{L}\mbf{V}_{\perp})^{-1}\mbf{V}_{\perp}^{\intercal}\mbf{L}\mbf{V} \\\label{eq:app_inversion_equivalence}
 &\mbf{I}-\mbf{V}_{\perp}(\mbf{V}^{\intercal}_{\perp}\mbf{L}\mbf{V}_{\perp})^{-1}\mbf{V}_{\perp}^\intercal\mbf L=\mbf L^{-1}\mbf{V}(\mbf{V}^{\intercal}\mbf{L}^{-1}\mbf{V})^{-1}\mbf{V}^{\intercal} .
    \end{align}
\end{lemma}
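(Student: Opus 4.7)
The four identities can all be established by direct algebraic manipulation, with the first formula of part (a) playing the role of the foundational Woodbury matrix identity; the other three reduce to it with varying amounts of care.

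For the first formula of (a), my plan is direct verification: multiply the proposed inverse on the right by $(\mbf K+\mbf V^\intercal \mbf L\mbf V)$ and check that the product is $\mbf I$. After distributing, the key simplification is to factor the two cross terms involving $\mbf V^\intercal$ so that $(\mbf L^{-1}+\mbf V\mbf K^{-1}\mbf V^\intercal)^{-1}$ can be pulled out; the remaining bracket reduces to $(\mbf L^{-1}+\mbf V\mbf K^{-1}\mbf V^\intercal)$, causing the inverse to cancel. For the second formula of (a), I would substitute the expression from the first formula into $(\mbf K+\mbf V^\intercal \mbf L\mbf V)^{-1}\mbf V^\intercal \mbf L$ and factor out $\mbf K^{-1}\mbf V^\intercal(\mbf L^{-1}+\mbf V\mbf K^{-1}\mbf V^\intercal)^{-1}$ on the left; the leftover bracket simplifies to $(\mbf L^{-1}+\mbf V\mbf K^{-1}\mbf V^\intercal)\mbf L - \mbf V\mbf K^{-1}\mbf V^\intercal \mbf L = \mbf I$, yielding the claim.

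For the first identity of (b), the plan is to exploit that $\mbf W:=[\mbf V~\mbf V_\perp]$ is orthogonal, so $\mbf W^\intercal \mbf L^{-1}\mbf W = (\mbf W^\intercal\mbf L\mbf W)^{-1}$. Writing $\mbf W^\intercal\mbf L\mbf W$ as a $2\times 2$ block matrix with $(1,1)$ block $\mbf V^\intercal \mbf L \mbf V$ and applying the standard block-inversion/Schur complement formula, the $(1,1)$ block of $(\mbf W^\intercal\mbf L\mbf W)^{-1}$ equals $[\mbf V^\intercal\mbf L\mbf V - \mbf V^\intercal\mbf L\mbf V_\perp(\mbf V_\perp^\intercal\mbf L\mbf V_\perp)^{-1}\mbf V_\perp^\intercal\mbf L\mbf V]^{-1}$. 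Equating this to the $(1,1)$ block of $\mbf W^\intercal\mbf L^{-1}\mbf W$, which is $\mbf V^\intercal\mbf L^{-1}\mbf V$, and inverting both sides gives the identity.

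For the second identity of (b), I would show equality by right-multiplying both sides by $\mbf V$ and by $\mbf V_\perp$ separately; since the columns of $[\mbf V~\mbf V_\perp]$ span the ambient space, equality on these two subspaces suffices. The $\mbf V_\perp$ case is immediate: the LHS collapses to $\mbf V_\perp-\mbf V_\perp=\mbf 0$, and the RHS vanishes because $\mbf V^\intercal \mbf V_\perp=\mbf 0$. For the $\mbf V$ case, I would left-multiply $\text{LHS}\cdot\mbf V$ by $\mbf L$ and then project the result onto $\mbf V$ and onto $\mbf V_\perp$ separately; the $\mbf V_\perp$ projection vanishes identically, while the $\mbf V$ projection equals $(\mbf V^\intercal\mbf L^{-1}\mbf V)^{-1}$ by the first identity of (b) just proved. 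Reassembling via the orthonormal decomposition $\mbf V\mbf V^\intercal+\mbf V_\perp\mbf V_\perp^\intercal=\mbf I$ gives $\mbf L\cdot(\text{LHS}\cdot\mbf V) = \mbf V(\mbf V^\intercal\mbf L^{-1}\mbf V)^{-1}$, which after left-multiplication by $\mbf L^{-1}$ matches $\text{RHS}\cdot\mbf V$. The main obstacle is this last identity: unlike the other three, it requires the ``verify on a basis'' strategy together with a nontrivial use of (b) first, and the conceptual subtlety is that both sides are only evidently equal once one recognizes them as two representations of the same $\mbf L$-oblique projection onto $\mathrm{range}(\mbf V)$ along $\mathrm{range}(\mbf V_\perp)$.
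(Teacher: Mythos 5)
Your proposal is correct, and its overall architecture matches the paper's (very terse) proof: part (a) is the standard Woodbury identity, the first identity of part (b) rests on the orthogonality of $\left[\begin{matrix}\mbf V & \mbf V_{\perp}\end{matrix}\right]$, and the second identity of part (b) is derived from the first. The one place where your mechanism differs is the first identity of (b): the paper simply multiplies the claimed inverse against $\mbf V^{\intercal}\mbf L^{-1}\mbf V$ and verifies the product is the identity using the completeness relation $\mbf V\mbf V^{\intercal}+\mbf V_{\perp}\mbf V_{\perp}^{\intercal}=\mbf I$, whereas you obtain it by identifying the $(1,1)$ block of $(\mbf W^{\intercal}\mbf L\mbf W)^{-1}$ (via the Schur-complement block-inversion formula) with the $(1,1)$ block of $\mbf W^{\intercal}\mbf L^{-1}\mbf W$ for $\mbf W=\left[\begin{matrix}\mbf V & \mbf V_{\perp}\end{matrix}\right]$. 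Both routes exploit exactly the same orthogonality; yours is slightly more structural and makes the Schur-complement origin of the formula visible, while the paper's is a one-line verification. Your basis-verification argument for the second identity of (b) (checking the action on $\mbf V$ and on $\mbf V_{\perp}$ separately and invoking the first identity for the $\mbf V$-component) is correct and supplies the detail the paper omits; the projection interpretation you note at the end is a fair way to see why the two expressions agree.
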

\begin{proof}
Part a) is standard, also known as the Woodbury matrix identity. To prove~\eqref{eq:app_inversion_subspace}, we use the identity $\mbf{V}_{\perp}\mbf{V}^{\intercal}_{\perp}+\mbf{V}\mbf{V}^{\intercal}=\mbf{I}$ and verify that the properties of the inverse hold.
To prove~\eqref{eq:app_inversion_equivalence}, we use~\eqref{eq:app_inversion_subspace}. 
\end{proof}

\subsection{Original behavioral problem}
In the following lemma, we characterize the solution set $\mathcal{G}_{\mathrm{opt}}$ of the original constrained behavioral problem~\eqref{eq:optimization0}. 
In general, there are infinite optimal solutions since $\mbf{\hat{M}}_0$ and $\mbf{D}_p$ have low rank.
\begin{lemma}\label{lem:solution_set_original}
    Consider optimization problem~\eqref{eq:optimization0}. Recall the SVD decomposition of $\mbf{D}_p$ in~\eqref{eq:app_svd_past}. Then
    \begin{itemize}
        \item[a)] $\mbf{g}\in \mathcal{G}_{\mathrm{opt}}$ is an optimal solution of~\eqref{eq:optimization0} if and only if $\mbf g$ lies in the following affine subspace:
\begin{equation}\label{eq:app_solution_linear_system}
\left[\begin{matrix} \mbf{\hat{M}}_0& \mbf{F}\\ \mbf{F}^{\intercal}&\mbf{0} \end{matrix}\right]\left[\begin{matrix} \mbf g\\ \bm{\mu} \end{matrix}\right]=\left[\begin{matrix} {\mbf Y^f}^\intercal\mbf{Q}\mbf{r}_t\\\mbf{\Sigma}^{-1}\mbf{E}^{\intercal}\mbf{d}_t \end{matrix}\right]
\end{equation}
for some $\bm{\mu}\in\mathbb{R}^{q},$
where $q$ is the rank of the past precollected data $\mbf{D}_p$.
\item[b)] Vector ${\mbf Y^f}^\intercal\mbf{Q}\mbf{r}_t$ lies in the column space of $\mbf{\hat{M}}_0$.
\item[c)] The above affine subspace can be equivalently described by the closed form expression:
\begin{align}
    \mbf g&=\mbf{g}_1+\mbf{g}_2+\mbf{g}_3,
\end{align}
where $\mbf{g}_i$, $i=1,2,3$ are orthogonal with each other and
\begin{align}\label{eq:app_g_1}
   \mbf{g}_1&=\mbf{F}\mbf{\Sigma}^{-1}\mbf{E}^{\intercal}\mbf{d}_t\\\label{eq:app_g_2}
   \mbf{g}_2&=\mbf{F}_{\perp}({\mbf{F}_{\perp}}^{\intercal}\mbf{\hat{M}}_0\mbf{F}_{\perp})^{\dagger}{\mbf{F}_{\perp}}^{\intercal}\left( {\mbf Y^f}^\intercal\mbf{Q}\mbf{r}_t-\mbf{\hat{M}}_0\mbf{g}_1\right)\\\label{eq:app_g_3}
   \mbf{g}_3&=\mbf{F}_{\perp}(\mbf I-({\mbf{F}_{\perp}}^{\intercal}\mbf{\hat{M}}_0\mbf{F}_{\perp})^{\dagger}{\mbf{F}_{\perp}}^{\intercal}\mbf{\hat{M}}_0\mbf{F}_{\perp})\mbf{s},
\end{align}
where $\mbf{s}$ is arbitrary.
\item[d)] As a result, the minimum norm element of $\mathcal{G}_{\mathrm{opt}}$ is
\begin{align}\label{eq:app_minimum_norm_g_alternative}
    \mbf{g}_{\min}=\mbf{g}_1+\mbf{g}_2.
\end{align}
    \end{itemize}
\end{lemma}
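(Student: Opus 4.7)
The plan is to address the four parts in order, all unified by the SVD decomposition~\eqref{eq:app_svd_past} of $\mbf D_p$. For part (a), I would first eliminate $\mbf u$ and $\mbf y$ from~\eqref{eq:optimization0} using the last two blocks of the equality constraint, leaving the reduced quadratic program $\min_{\mbf g}\tfrac{1}{2}\|\mbf Y^f \mbf g - \mbf r_t\|_{\mbf Q}^2 + \tfrac{1}{2}\|\mbf U^f \mbf g\|_{\mbf R}^2$ subject to $\mbf D_p \mbf g = \mbf d_t$. Inserting~\eqref{eq:app_svd_past} and using feasibility (so $\mbf d_t\in\mathrm{range}(\mbf D_p)=\mathrm{range}(\mbf E)$, which holds since $\mbf d_t$ comes from a valid trajectory of~\eqref{eq:system}) collapses the constraint to $\mbf F^\intercal \mbf g = \mbf \Sigma^{-1}\mbf E^\intercal \mbf d_t$. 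Writing the stationarity conditions of the Lagrangian with multiplier $\bm\mu$ then recovers exactly~\eqref{eq:app_solution_linear_system}, with necessity and sufficiency following from convexity of the reduced objective together with linearity of the constraint.

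For part (b), I would write $\hat{\mbf M}_0 = (\mbf Q^{1/2}\mbf Y^f)^\intercal(\mbf Q^{1/2}\mbf Y^f) + (\mbf R^{1/2}\mbf U^f)^\intercal(\mbf R^{1/2}\mbf U^f)$ and combine the two standard identities $\mathrm{range}(\mbf A^\intercal \mbf A)=\mathrm{range}(\mbf A^\intercal)$ and $\mathrm{range}(\mbf A+\mbf B)=\mathrm{range}(\mbf A)+\mathrm{range}(\mbf B)$ for positive semi-definite $\mbf A,\mbf B$, yielding $\mathrm{range}(\hat{\mbf M}_0)\supseteq\mathrm{range}(\mbf Y^{f\intercal}\mbf Q^{1/2})\ni \mbf Y^{f\intercal}\mbf Q\mbf r_t$. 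For part (c), I would split $\mbf g = \mbf F\mbf a + \mbf F_\perp \mbf b$: the second block row of~\eqref{eq:app_solution_linear_system} directly gives $\mbf a = \mbf \Sigma^{-1}\mbf E^\intercal \mbf d_t$, so $\mbf F\mbf a = \mbf g_1$. Pre-multiplying the first block row by $\mbf F_\perp^\intercal$ annihilates $\mbf F\bm\mu$ and yields $\mbf F_\perp^\intercal \hat{\mbf M}_0 \mbf F_\perp \mbf b = \mbf F_\perp^\intercal(\mbf Y^{f\intercal}\mbf Q\mbf r_t - \hat{\mbf M}_0 \mbf g_1)$; the general solution is $\mbf b = (\mbf F_\perp^\intercal \hat{\mbf M}_0 \mbf F_\perp)^\dagger \mbf v + (\mbf I - (\mbf F_\perp^\intercal \hat{\mbf M}_0 \mbf F_\perp)^\dagger(\mbf F_\perp^\intercal \hat{\mbf M}_0 \mbf F_\perp))\mbf s$, which upon premultiplying by $\mbf F_\perp$ reproduces $\mbf g_2+\mbf g_3$. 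Pairwise orthogonality follows since $\mbf g_1\in\mathrm{range}(\mbf F)\perp\mathrm{range}(\mbf F_\perp)\ni \mbf g_2,\mbf g_3$, and within $\mathrm{range}(\mbf F_\perp)$ the components $\mbf g_2$ and $\mbf g_3$ correspond respectively to the range and kernel of the symmetric positive semi-definite matrix $\mbf F_\perp^\intercal \hat{\mbf M}_0 \mbf F_\perp$, which are orthogonal complements.

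Part (d) is then immediate: orthogonality in (c) gives $\|\mbf g\|_2^2 = \|\mbf g_1\|_2^2 + \|\mbf g_2\|_2^2 + \|\mbf g_3\|_2^2$, and since $\mbf g_1,\mbf g_2$ are uniquely determined while $\mbf g_3$ can be driven to zero by choosing $\mbf s=\mbf 0$, the minimum-norm element is $\mbf g_{\min}=\mbf g_1+\mbf g_2$.

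The main obstacle is the consistency check concealed inside part (c): the pseudo-inverse expression $(\mbf F_\perp^\intercal \hat{\mbf M}_0 \mbf F_\perp)^\dagger \mbf v$ actually solves $\mbf F_\perp^\intercal \hat{\mbf M}_0 \mbf F_\perp \mbf b = \mbf v$ only when $\mbf v := \mbf F_\perp^\intercal(\mbf Y^{f\intercal}\mbf Q\mbf r_t - \hat{\mbf M}_0 \mbf g_1)$ lies in $\mathrm{range}(\mbf F_\perp^\intercal \hat{\mbf M}_0 \mbf F_\perp)$. I would resolve this by combining part (b), which furnishes $\mbf c$ with $\mbf Y^{f\intercal}\mbf Q\mbf r_t = \hat{\mbf M}_0 \mbf c$, with a short kernel-chasing argument: decomposing $\mbf c - \mbf g_1 = \mbf F\bm\alpha + \mbf F_\perp \bm\beta$ gives $\mbf v = \mbf F_\perp^\intercal \hat{\mbf M}_0 \mbf F\bm\alpha + \mbf F_\perp^\intercal \hat{\mbf M}_0 \mbf F_\perp \bm\beta$, whose second term is trivially in the desired range; for the first, any $\mbf z\in\ker(\mbf F_\perp^\intercal \hat{\mbf M}_0 \mbf F_\perp)$ satisfies $\|\hat{\mbf M}_0^{1/2}\mbf F_\perp \mbf z\|_2^2=0$ and hence $\hat{\mbf M}_0 \mbf F_\perp \mbf z = 0$, so in particular $(\mbf F_\perp^\intercal \hat{\mbf M}_0 \mbf F)^\intercal \mbf z = 0$, establishing $\mathrm{range}(\mbf F_\perp^\intercal \hat{\mbf M}_0 \mbf F)\subseteq\mathrm{range}(\mbf F_\perp^\intercal \hat{\mbf M}_0 \mbf F_\perp)$ by orthogonal complementation and symmetry.
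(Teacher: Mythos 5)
Your proposal is correct and follows essentially the same route as the paper's proof: eliminating $\mbf u,\mbf y$, replacing the singular constraint by $\mbf F^{\intercal}\mbf g=\mbf\Sigma^{-1}\mbf E^{\intercal}\mbf d_t$ via the SVD, writing the KKT system, splitting $\mbf g$ along $\mathrm{range}(\mbf F)\oplus\mathrm{range}(\mbf F_{\perp})$, parametrizing all solutions of the reduced equation with the pseudo-inverse, and concluding part d) from the Pythagorean decomposition with $\mbf s=\mbf 0$. The one place you go beyond the paper is the explicit verification that $\mbf F_{\perp}^{\intercal}({\mbf Y^f}^{\intercal}\mbf Q\mbf r_t-\mbf{\hat M}_0\mbf g_1)$ lies in $\mathrm{range}(\mbf F_{\perp}^{\intercal}\mbf{\hat M}_0\mbf F_{\perp})$ --- the paper simply invokes the ``standard property of the pseudo-inverse'' without checking this consistency condition, so your kernel-chasing argument is a correct and welcome refinement rather than a deviation.
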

\begin{proof}
\textbf{Proof of a)}
Since $\mbf{d}_t$ lies in the range space of $\mbf{D}_p$, we can replace the singular equality constraint $\mbf{D}_p\mbf{g}=\mbf{d}_t$ by $\mbf{F}^{\intercal}\mbf{g}=\mbf{\Sigma}^{-1}\mbf{E}^{\intercal}\mbf{d}_t$, where now $\mbf{F}$ has full rank. If we also replace $\mbf{u}$, $\mbf{y}$ with $\mbf{U}^{f}\mbf{g}$ and $\mbf{Y}^{f}$ respectively, and we ignore the constant terms, then we obtain the optimization problem:
\begin{align}\label{eq:app_optimization_nonsingular}
\begin{split}
\min_{\mbf g}~&~\tfrac{1}{2}\mbf{g}^{\intercal}\mbf{\hat{M}}_0\mbf{g}-\mbf{g}^{\intercal}{\mbf Y^f}^\intercal\mbf{Q}\mbf{r}_t \\
s.t.~&~\mbf{F}^{\intercal}\mbf{g}=\mbf{\Sigma}^{-1}\mbf{E}^{\intercal}\mbf{d}_t,
\end{split}
\end{align}
which is equivalent to~\eqref{eq:optimization0}.
The Lagrangian of~\eqref{eq:app_optimization_nonsingular} is
\[
\mathcal{L}=\tfrac{1}{2}\mbf{g}^{\intercal}\mbf{\hat{M}}_0\mbf{g}-\mbf{g}^{\intercal}{\mbf Y^f}^\intercal\mbf{Q}\mbf{r}_t+\bm{\mu}^{\intercal}(\mbf{F}^{\intercal}\mbf{g}-\mbf{\Sigma}^{-1}\mbf{E}^{\intercal}\mbf{d}_t),
\]
where $\bm \mu\in\mathbb{R}^{q}$ are the Lagrange multipliers. The result follows by applying standard first order optimality conditions.

\textbf{Proof of b)} Observe that 
\[\mbf{\hat{M}}_0=\left[\begin{matrix} {\mbf Y^f}^\intercal\mbf{Q}^{1/2}& {\mbf U^f}^\intercal\mbf{R}^{1/2} \end{matrix}\right]\left[\begin{matrix} {\mbf Y^f}^\intercal\mbf{Q}^{1/2}& {\mbf U^f}^\intercal\mbf{R}^{1/2} \end{matrix}\right]^{\intercal}.\]
It is clear that ${\mbf Y^f}^\intercal\mbf{Q}\mbf{r}_t$ lies in the column space of $\left[\begin{matrix} {\mbf Y^f}^\intercal\mbf{Q}^{1/2}& {\mbf U^f}^\intercal\mbf{R}^{1/2} \end{matrix}\right]$. Then, using the fact that for a matrix $\mbf A$, $\mr{Range}(\mbf A) = \mr{Range}(\mbf A \mbf A^\intercal)$ (the proof easily follows from using SVD on $\mbf A$), we also show that ${\mbf Y^f}^\intercal\mbf{Q}\mbf{r}_t$ lies in the column space of $\hat{\mbf M}_0$.  

\textbf{Proof of c)}
First, we solve subequation:
\[\mbf{F}^{\intercal}\mbf{g}=\mbf{\Sigma}^{-1}\mbf{E}^{\intercal}\mbf{d}_t.\]
Since matrix $\left[\begin{matrix} \mbf{F}& \mbf{F}_{\perp}\end{matrix}\right]$ is an orthonormal basis, we can express all solutions $\mbf g$ as:
\begin{align}\label{eq:app_first_subequation}
\mbf{g}=\mbf{F}\mbf{\Sigma}^{-1}\mbf{E}^{\intercal}\mbf{d}_t+\mbf{F}_{\perp}\bm{\xi}=\mbf{g}_1+\mbf{F}_{\perp}\bm{\xi},
\end{align}
where $\mbf{F}_{\perp}\bm{\xi}$ captures the components of $\mbf{g}$ in the kernel of $\mbf{F}^{\intercal}$, and $\bm{\xi}$ is to be determined.

Second, we solve subequation:
\[
\mbf{\hat{M}}_0\mbf{g}+\mbf{F}\bm{\mu}={\mbf Y^f}^\intercal\mbf{Q}\mbf{r}_t.
\]
Multiplying from the left by the orthonormal matrix $\left[\begin{matrix}  \mbf{F}&\mbf{F}_{\perp} \end{matrix}\right]^{\intercal}$ results in the equivalent system of subequations:
\begin{align}
\mbf{F}^{\intercal}\mbf{\hat{M}}_0\mbf{g}+\bm{\mu}&=\mbf{F}^{\intercal}{\mbf Y^f}^\intercal\mbf{Q}\mbf{r}_t\label{eq:app_second_subequation_a}\\
{\mbf{F}_{\perp}}^{\intercal}\mbf{\hat{M}}_0\mbf{g}&={\mbf{F}_{\perp}}^{\intercal}{\mbf Y^f}^\intercal\mbf{Q}\mbf{r}_t\label{eq:app_second_subequation_b}.
\end{align}
The former subequation~\eqref{eq:app_second_subequation_a} just determines the Lagrange multiplier $\bm{\mu}$.  Replacing~\eqref{eq:app_first_subequation} into~\eqref{eq:app_second_subequation_b} gives:
\[
{\mbf{F}_{\perp}}^{\intercal}\mbf{\hat{M}}_0\mbf{F}_{\perp}\bm{\xi}={\mbf{F}_{\perp}}^{\intercal}{\mbf Y^f}^\intercal\mbf{Q}\mbf{r}_t-{\mbf{F}_{\perp}}^{\intercal}\mbf{\hat{M}}_0\mbf{g}_1.
\]
By the standard property of the pseudo-inverse~\cite[Ch.~2]{ben2003generalized}, we can write all possible solutions $\bm{\xi}$ as:
\begin{align}\label{eq:app_all_xi}
    \bm{\xi}=&\underbrace{({\mbf{F}_{\perp}}^{\intercal}\mbf{\hat{M}}_0\mbf{F}_{\perp})^{\dagger}{\mbf{F}_{\perp}}^{\intercal}\left( {\mbf Y^f}^\intercal\mbf{Q}\mbf{r}_t-\mbf{\hat{M}}_0\mbf{g}_1\right)}_{\bm{\xi}_1}+\underbrace{\left(\mbf{I}- ({\mbf{F}_{\perp}}^{\intercal}\mbf{\hat{M}}_0\mbf{F}_{\perp})^{\dagger}{\mbf{F}_{\perp}}^{\intercal}\mbf{\hat{M}}_0\mbf{F}_{\perp}\right)\mbf{s}}_{\bm{\xi}_2},
\end{align}
where $\mbf{s}$ is arbitrary. Moreover, the two components $\bm{\xi}_1$, $\bm{\xi}_2$ are orthogonal.
Combining~\eqref{eq:app_first_subequation} and~\eqref{eq:app_all_xi} gives:
\[
\mbf{g}=\mbf{g}_1+\mbf{F}_{\perp}\bm{\xi}_1+\mbf{F}_{\perp}\bm{\xi}_2,
\]
where all components are orthogonal to each other.

\textbf{Proof of d)}
Follows from b), orthogonality of the three summands and setting the only free parameter $\mbf s$ to zero. 
\end{proof}

\subsection{Approximate problem}
Due to the regularization term, the solution of the approximate problem~\eqref{eq:optimization2} is unique--see~\eqref{eq:opt_g}. We can show that if we chose large enough $\lambda$, then we can show that the solution of~\eqref{eq:optimization2} converges to an intermediate one.
\begin{lemma}\label{lem:solution_approximate}
    Consider optimization problem~\eqref{eq:optimization2} with $\lambda_u=\lambda_y=\lambda>0$ and $\lambda_g>0$ and let $\mbf{g}^\ast$ be the optimal solution. Define the intermediate solution:
\begin{align}\label{eq:app_g_bar}
&\mbf{\bar{g}}:=\mbf{\hat{M}}^{-1}\mbf{F}\left( \mbf{F}^{\intercal}\mbf{\hat{M}}^{-1}\mbf{F}\right)^{-1}\mbf{\Sigma}^{-1}\mbf{E}^{\intercal}\mbf{d}_t+\left[\mbf{\hat{M}}^{-1}-\mbf{\hat{M}}^{-1}\mbf{F}\left(\mbf{F}^{\intercal}\mbf{\hat{M}}^{-1}\mbf{F} \right)^{-1}\mbf{F}^{\intercal}\mbf{\hat{M}}^{-1}\right]{\mbf Y^f}^\intercal\mbf{Q}\mbf{r}_t.    
\end{align}
Then,
\begin{equation}\label{eq:app_g_bar_closeness}
    \left\lVert\mbf{g}^\ast-\mbf{\bar{g}}\right\rVert_2\le O(\lambda^{-1}),
\end{equation}
as $\lambda\rightarrow\infty$ and $\lambda_g\rightarrow 0^+$, where we used big-$O$ notation to hide quantities which do not depend on $\lambda,\lambda_g$.
\end{lemma}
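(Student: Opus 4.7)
The plan is to apply the Woodbury inversion lemma (Lemma~\ref{lem:inversions}a) to $\mbf M=\hat{\mbf M}+\lambda\mbf D_p^\intercal\mbf D_p$ in a way that isolates the $\lambda$-dependence of $\mbf g^*$, and then to identify $\bar{\mbf g}$ as the leading order of a first-order expansion in $\lambda^{-1}$.

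First, I would use the SVD~\eqref{eq:app_svd_past} to write $\mbf D_p^\intercal\mbf D_p=\mbf F\mbf\Sigma^2\mbf F^\intercal$ and $\lambda\mbf D_p^\intercal\mbf d_t=\mbf F(\lambda\mbf\Sigma^2)\mbf\Sigma^{-1}\mbf E^\intercal\mbf d_t$ (using that $\mbf d_t\in\mathrm{Range}(\mbf D_p)$). Since $\lambda_g>0$, $\hat{\mbf M}$ is invertible. Applying both identities of Lemma~\ref{lem:inversions}a with $\mbf K=\hat{\mbf M}$, $\mbf V=\mbf F^\intercal$, $\mbf L=\lambda\mbf\Sigma^2$ yields
\begin{align*}
\mbf M^{-1}&=\hat{\mbf M}^{-1}-\hat{\mbf M}^{-1}\mbf F\,\mbf T_\lambda^{-1}\,\mbf F^\intercal\hat{\mbf M}^{-1},\\
\mbf M^{-1}\mbf F(\lambda\mbf\Sigma^2)&=\hat{\mbf M}^{-1}\mbf F\,\mbf T_\lambda^{-1},
\end{align*}
where $\mbf T_\lambda:=\lambda^{-1}\mbf\Sigma^{-2}+\mbf F^\intercal\hat{\mbf M}^{-1}\mbf F$. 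Substituting into~\eqref{eq:opt_g} gives a closed form for $\mbf g^*$ whose only $\lambda$-dependence is through $\mbf T_\lambda^{-1}$.

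Next, I would compare with~\eqref{eq:app_g_bar} and observe that $\bar{\mbf g}$ has exactly this structure with $\mbf T_\lambda^{-1}$ replaced by $\mbf T_0^{-1}:=(\mbf F^\intercal\hat{\mbf M}^{-1}\mbf F)^{-1}$, which is well-defined because $\mbf F$ has full column rank and $\hat{\mbf M}^{-1}\succ 0$. Hence
\[
\mbf g^*-\bar{\mbf g}=\hat{\mbf M}^{-1}\mbf F\bigl(\mbf T_\lambda^{-1}-\mbf T_0^{-1}\bigr)\mbf v,
\]
where $\mbf v$ is assembled from $\mbf\Sigma^{-1}\mbf E^\intercal\mbf d_t$ and $\mbf F^\intercal\hat{\mbf M}^{-1}{\mbf Y^f}^\intercal\mbf Q\mbf r_t$ and does not depend on $\lambda$. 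Using the first-resolvent identity
\[
\mbf T_\lambda^{-1}-\mbf T_0^{-1}=-\mbf T_\lambda^{-1}\bigl(\lambda^{-1}\mbf\Sigma^{-2}\bigr)\mbf T_0^{-1},
\]
together with $\mbf T_\lambda\succeq\mbf T_0\succ 0$ (so $\|\mbf T_\lambda^{-1}\|_2\le\|\mbf T_0^{-1}\|_2$), I would conclude $\|\mbf T_\lambda^{-1}-\mbf T_0^{-1}\|_2\le \lambda^{-1}\|\mbf\Sigma^{-2}\|_2\|\mbf T_0^{-1}\|_2^2$, which gives~\eqref{eq:app_g_bar_closeness}.

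The main obstacle is the bookkeeping: verifying that Woodbury's identity produces exactly the coefficient pattern appearing in~\eqref{eq:app_g_bar}, and tracking that the hidden constants (involving $\|\hat{\mbf M}^{-1}\mbf F\|_2$, $\|\mbf T_0^{-1}\|_2$, $\|\mbf\Sigma^{-2}\|_2$ and the data norms) stay bounded for each fixed $\lambda_g>0$. Strictly speaking these constants do depend on $\lambda_g$ through $\hat{\mbf M}^{-1}$, but not on $\lambda$, which is what is needed so that the subsequent step in the proof of Theorem~\ref{thm:closeness} can first push $\lambda\to\infty$ for fixed $\lambda_g$ and then take $\lambda_g\to 0^+$ while invoking the continuity of the pseudo-inverse limit.
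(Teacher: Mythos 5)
Your Steps coincide with the paper's Steps a) and b): the same Woodbury application with $\mbf K=\hat{\mbf M}$, the same matrix $\mbf T_\lambda=\lambda^{-1}\mbf\Sigma^{-2}+\mbf F^\intercal\hat{\mbf M}^{-1}\mbf F$, and the same resolvent identity to extract the factor $\lambda^{-1}$. The gap is in the final bounding step. The lemma asserts that the constant hidden in $O(\lambda^{-1})$ is independent of \emph{both} $\lambda$ and $\lambda_g$, and Theorem~\ref{thm:closeness} takes the \emph{joint} limit $(\lambda_g,\lambda)\to(0,\infty)$, not an iterated one. Your bound carries the prefactor $\hat{\mbf M}^{-1}\mbf F$ and the vector $\mbf v$ containing $\mbf F^\intercal\hat{\mbf M}^{-1}{\mbf Y^f}^\intercal\mbf Q\mbf r_t$; since $\hat{\mbf M}=\hat{\mbf M}_0+\lambda_g\mbf I$ with $\hat{\mbf M}_0$ singular, $\lVert\hat{\mbf M}^{-1}\rVert_2=O(\lambda_g^{-1})$, and each of these factors can blow up as $\lambda_g\to 0^+$. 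Your closing remark concedes this and proposes to take $\lambda\to\infty$ first for fixed $\lambda_g$; that only establishes an iterated limit, which does not prove the joint convergence claimed in the theorem (along a path such as $\lambda=\lambda_g^{-1/2}$ your bound need not vanish).

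The paper closes this gap with two uniformity facts that your proposal does not supply. First, the product $\hat{\mbf M}^{-1}\mbf F\left(\mbf F^\intercal\hat{\mbf M}^{-1}\mbf F\right)^{-1}$ is bounded uniformly as $\lambda_g\to 0^+$ even though $\hat{\mbf M}^{-1}\mbf F$ alone is not; this is Lemma~\ref{lem:solution_intermediate}~b), proved via the identity~\eqref{eq:app_inversion_equivalence}, which rewrites the product as $\mbf I-\mbf F_{\perp}(\mbf F_{\perp}^{\intercal}\hat{\mbf M}\mbf F_{\perp})^{-1}\mbf F_{\perp}^{\intercal}\hat{\mbf M}$ and then invokes the pseudo-inverse limit. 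You must therefore keep $\hat{\mbf M}^{-1}\mbf F$ glued to $\mbf T_0^{-1}$ (using the appropriate ordering of the resolvent identity) rather than bounding them separately. Second, $\hat{\mbf M}^{-1}{\mbf Y^f}^\intercal\mbf Q\mbf r_t$ is bounded uniformly because ${\mbf Y^f}^\intercal\mbf Q\mbf r_t$ lies in the column space of $\hat{\mbf M}_0$ (Lemma~\ref{lem:solution_set_original}~b)), so one can write it as $\hat{\mbf M}_0\hat{\mbf M}_0^{\dagger}{\mbf Y^f}^\intercal\mbf Q\mbf r_t$ and use $\lVert\hat{\mbf M}^{-1}\hat{\mbf M}_0\rVert_2\le 1$. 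Without this range argument the term in $\mbf v$ can grow like $\lambda_g^{-1}$. Adding these two observations (the paper's Steps c--e) turns your sketch into a complete proof.
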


\begin{proof}
\textbf{Step a). }
Recall that $\mbf{g}^\ast$ is given by~\eqref{eq:opt_g}. We will use the inversion lemma to bring $\mbf{g}^\ast$ in a form closer to $\mbf{\bar{g}}$.
Using the notation of this section,~\eqref{eq:opt_g} can be written as:
\begin{align*}
\mbf{g}^\ast&=\mbf{M}^{-1}\left({\mbf Y^f}^\intercal\mbf{Q}\mbf{r}_t+\lambda\mbf{D}_p \right)=(\mbf{\hat{M}}+\lambda \mbf{F}\mbf{\Sigma}^{2}\mbf{F}^{\intercal})^{-1}\left({\mbf Y^f}^\intercal\mbf{Q}\mbf{r}_t+\lambda\mbf{F}\mbf{\Sigma}\mbf{E}^{\intercal}\mbf{d}_t\right)\\
&=(\mbf{\hat{M}}+\lambda \mbf{F}\mbf{\Sigma}^{2}\mbf{F}^{\intercal})^{-1}\left({\mbf Y^f}^\intercal\mbf{Q}\mbf{r}_t+\lambda\mbf{F}\mbf{\Sigma}^2\mbf{F}^{\intercal}\mbf{g}_1\right),
\end{align*}
where the second equality follows from $\mbf{D}^{\intercal}_p\mbf{D}_p=\mbf{F}\mbf{\Sigma}^2\mbf{F}^{\intercal}$ and the last equality follows from $\mbf{F}^{\intercal}\mbf{F}=\mbf{I}$. By Lemma~\ref{lem:inversions} a):
\begin{align*}
(\mbf{\hat{M}}+\lambda \mbf{F}\mbf{\Sigma}^{2}\mbf{F}^{\intercal})^{-1}&=\mbf{\hat{M}}^{-1}-\mbf{\hat{M}}^{-1}\mbf{F}\left(\lambda^{-1}\mbf{\Sigma}^{-2}+\mbf{F}^{\intercal}\mbf{\hat{M}}^{-1}\mbf{F} \right)^{-1}\mbf{F}^{\intercal}\mbf{\hat{M}}^{-1},\\
(\mbf{\hat{M}}+\lambda \mbf{F}\mbf{\Sigma}^{2}\mbf{F}^{\intercal})^{-1}\mbf{F}\lambda\mbf{\Sigma}^2&=\mbf{\hat{M}}^{-1}\mbf{F}\left(\lambda^{-1}\mbf{\Sigma}^{-2}+\mbf{F}^{\intercal}\mbf{\hat{M}}^{-1}\mbf{F} \right)^{-1}.
\end{align*}

\textbf{Step b). } Next, define the difference:
\begin{align*}
\bm\Delta&:=\left(\lambda^{-1}\mbf{\Sigma}^2+\mbf{F}^{\intercal}\mbf{\hat{M}}^{-1}\mbf{F}\right)^{-1}-\left(\mbf{F}^{\intercal}\mbf{\hat{M}}^{-1}\mbf{F}\right)^{-1} =\lambda^{-1}\left(\mbf{F}^{\intercal}\mbf{\hat{M}}^{-1}\mbf{F}\right)^{-1}\mbf{\Sigma}^{2}\left(\lambda^{-1}\mbf{\Sigma}^2+\mbf{F}^{\intercal}\mbf{\hat{M}}^{-1}\mbf{F}\right)^{-1},
\end{align*}
where the equality follows from $\mbf A^{-1}+\mbf B^{-1} = \mbf B^{-1}(\mbf A+\mbf B)\mbf A^{-1}$. The error  $\mbf{g}^\ast-\mbf{\bar{g}}$ can be written as:
\begin{align*}
    \mbf{g}^\ast-\mbf{\bar{g}}=&-\mbf{\hat{M}}^{-1}\mbf{F}\bm{\Delta}\mbf{F}^{\intercal}\mbf{\hat{M}}^{-1}{\mbf Y^f}^\intercal\mbf{Q}\mbf{r}_t+\mbf{\hat{M}}^{-1}\mbf{F}\bm{\Delta}\mbf{\Sigma}^{-1}\mbf{E}^{\intercal}\mbf{d}_t.
\end{align*}
To complete the proof, it is sufficient to show that the matrices $\mbf{\hat{M}}^{-1}\mbf{F}\left(\mbf{F}^{\intercal}\mbf{\hat{M}}^{-1}\mbf{F}\right)^{-1}$, $\left(\lambda^{-1}\mbf{\Sigma}^2+\mbf{F}^{\intercal}\mbf{\hat{M}}^{-1}\mbf{F}\right)^{-1}$, and $\mbf{\hat{M}}^{-1}{\mbf Y^f}^\intercal\mbf{Q}\mbf{r}_t$ are bounded as $\lambda_g\rightarrow 0^+$, $\lambda\rightarrow \infty$.

\textbf{Step c)} Boundedness of $\mbf{\hat{M}}^{-1}\mbf{F}\left(\mbf{F}^{\intercal}\mbf{\hat{M}}^{-1}\mbf{F}\right)^{-1}$ follows by Lemma~\ref{lem:solution_intermediate}.

\textbf{Step d)}
The following matrix is always bounded: 
\[
\left\lVert\left(\lambda^{-1}\mbf{\Sigma}^2+\mbf{F}^{\intercal}\mbf{\hat{M}}^{-1}\mbf{F}\right)^{-1}\right\rVert_2\le O(1).
\]
It follows from $\mbf{\hat{M}}\preceq (\lambda_g+\lVert \mbf{\hat{M}}_0 \rVert_{2})\mbf{I}$, which in turn implies
\[\lambda^{-1}\mbf{\Sigma}^2+\mbf{F}^{\intercal}\mbf{\hat{M}}^{-1}\mbf{F}\succeq \mbf{F}^{\intercal}\mbf{\hat{M}}^{-1}\mbf{F}\succeq  (\lambda_g+\lVert \mbf{\hat{M}}_0 \rVert_{2})^{-1}\mbf{I},\]
where $\preceq,\succeq$ denote comparison with respect to the positive semi-definite cone.

\textbf{Step e). } The norm of $\mbf{\hat{M}}^{-1}{\mbf Y^f}^\intercal\mbf{Q}\mbf{r}_t$ is $O(1)$. From Lemma~\ref{lem:solution_set_original} b), ${\mbf Y^f}^\intercal\mbf{Q}\mbf{r}_t$ lies in the range space of $\mbf{\hat{M}}_0$, which allows us to write
$
{\mbf Y^f}^\intercal\mbf{Q}\mbf{r}_t
=\mbf{\hat{M}}_0\mbf{\hat{M}}^{\dagger}_0{\mbf Y^f}^\intercal\mbf{Q}\mbf{r}_t$.
Hence, 
\[
\lVert\mbf{\hat{M}}^{-1}{\mbf Y^f}^\intercal\mbf{Q}\mbf{r}_t\rVert_2\le O(\lVert\mbf{\hat{M}}^{-1}\mbf{\hat{M}}_0\rVert_2)=O(1),
\]
since $\mbf{\hat{M}}=\mbf{\hat{M}}_0+\lambda_g\mbf{I}\succeq \mbf{\hat{M}}_0$, and $\lVert\mbf{\hat{M}}^{-1}\mbf{\hat{M}}_0\rVert_2\le 1$.
\end{proof}

\subsection{Proof of Theorem~\ref{thm:closeness}}
Consider the intermediate solution $\mbf{\bar{g}}$ defined in~\eqref{eq:app_g_bar}. By the triangle inequality we have:
\[
\lVert \mbf{g}_{\min}-\mbf{g}^*\rVert_2\le\lVert \mbf{g}_{\min}-\mbf{\bar{g}}\rVert_2+\lVert \mbf{\bar{g}}-\mbf{g}^*\rVert_2.
\]
In Lemma~\ref{lem:solution_approximate}, we show that the second error term decays to zero as fast as $\lambda^{-1}$. To complete the proof, we invoke the following Lemma, which states that the first error term decays to zero as well.\hfill $\qedsymbol$
\begin{lemma}\label{lem:solution_intermediate} 
Consider the minimum norm solution $\mbf{g}_{\min}$ defined in~\eqref{eq:minimum_norm_g} and the intermediate solution $\mbf{\bar{g}}$ defined in~\eqref{eq:app_g_bar}.  Let $\lambda_g>0$. Then:\\
a) The following limit holds
\begin{equation}\label{eq:app_pseudo_inverse_limit}
    \lim_{\lambda_g\rightarrow 0^+}({\mbf{F}_{\perp}}^{\intercal}\mbf{\hat{M}}\mbf{F}_{\perp})^{-1}{\mbf{F}_{\perp}}^{\intercal}\mbf{\hat{M}}_0=({\mbf{F}_{\perp}}^{\intercal}\mbf{\hat{M}}_0\mbf{F}_{\perp})^{\dagger}{\mbf{F}_{\perp}}^{\intercal}\mbf{\hat{M}}_0.
\end{equation}
b) The following matrix is bounded
\begin{equation}\label{eq:app_boundedness}
   \left\lVert\mbf{\hat{M}}^{-1}\mbf{F}\left(\mbf{F}^{\intercal}\mbf{\hat{M}}^{-1}\mbf{F}\right)^{-1}\right\rVert_2=O(1).
\end{equation}
c) The intermediate solution converges to the minimum norm solution as $\lambda_g$ goes to zero
\begin{equation}\label{eq:app_intermediate_convergence}
    \lim_{\lambda_g\rightarrow 0^+}\lVert \mbf{g}_{\min}-\mbf{\bar{g}}\rVert_2=0.
\end{equation}
\end{lemma}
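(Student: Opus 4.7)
The plan is to prove (a) first, as it underlies both (b) and (c). For (a), I decompose $\mbf{\hat{M}} = \mbf{\hat{M}}_0 + \lambda_g \mbf{I}$, so that ${\mbf{F}_\perp}^\intercal \mbf{\hat{M}} \mbf{F}_\perp = {\mbf{F}_\perp}^\intercal \mbf{\hat{M}}_0 \mbf{F}_\perp + \lambda_g \mbf{I}$. Diagonalizing the symmetric positive semi-definite matrix ${\mbf{F}_\perp}^\intercal \mbf{\hat{M}}_0 \mbf{F}_\perp$ and applying the standard Tikhonov-limit identity $\lim_{\lambda_g \to 0^+}(\mbf{D} + \lambda_g \mbf{I})^{-1} \mbf{D} = \mbf{D}^\dagger \mbf{D}$ in the diagonal basis reduces the claim to verifying that each column of ${\mbf{F}_\perp}^\intercal \mbf{\hat{M}}_0$ lies in the range of ${\mbf{F}_\perp}^\intercal \mbf{\hat{M}}_0 \mbf{F}_\perp$. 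To establish this range-inclusion, I would factor $\mbf{\hat{M}}_0 = \mbf{A} \mbf{A}^\intercal$ with $\mbf{A} := [{\mbf{Y}^f}^\intercal \mbf{Q}^{1/2},\, {\mbf{U}^f}^\intercal \mbf{R}^{1/2}]$ as in the proof of Lemma~\ref{lem:solution_set_original} b), and invoke $\mr{Range}(\mbf{B}\mbf{B}^\intercal) = \mr{Range}(\mbf{B})$ with $\mbf{B} := {\mbf{F}_\perp}^\intercal \mbf{A}$.

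For (b), I would specialize identity~\eqref{eq:app_inversion_equivalence} from Lemma~\ref{lem:inversions} with $\mbf{V} = \mbf{F}$, $\mbf{V}_\perp = \mbf{F}_\perp$, $\mbf{L} = \mbf{\hat{M}}$, and right-multiply by $\mbf{F}$ to get $\mbf{\hat{M}}^{-1} \mbf{F} (\mbf{F}^\intercal \mbf{\hat{M}}^{-1} \mbf{F})^{-1} = \mbf{F} - \mbf{F}_\perp ({\mbf{F}_\perp}^\intercal \mbf{\hat{M}} \mbf{F}_\perp)^{-1} {\mbf{F}_\perp}^\intercal \mbf{\hat{M}} \mbf{F}$. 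Since ${\mbf{F}_\perp}^\intercal \mbf{F} = \mbf{0}$, the mixed term collapses to ${\mbf{F}_\perp}^\intercal \mbf{\hat{M}}_0 \mbf{F}$, which does not depend on $\lambda_g$; combined with the convergent limit from (a) this shows the whole expression is $O(1)$.

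For (c), I would rewrite $\mbf{\bar g}$ in terms of $\mbf{F}$ and $\mbf{F}_\perp$ using two consequences of Lemma~\ref{lem:inversions}. The identity from (b) handles the first summand of $\mbf{\bar g}$ and, after subtracting $\mbf{F}$, yields $\mbf{F}_\perp ({\mbf{F}_\perp}^\intercal \mbf{\hat{M}} \mbf{F}_\perp)^{-1} {\mbf{F}_\perp}^\intercal \mbf{\hat{M}}_0 \mbf{g}_1$ as the only $\lambda_g$-dependent part. Right-multiplying~\eqref{eq:app_inversion_equivalence} by $\mbf{\hat{M}}^{-1}$ produces $\mbf{\hat{M}}^{-1} - \mbf{\hat{M}}^{-1}\mbf{F}(\mbf{F}^\intercal \mbf{\hat{M}}^{-1}\mbf{F})^{-1}\mbf{F}^\intercal \mbf{\hat{M}}^{-1} = \mbf{F}_\perp ({\mbf{F}_\perp}^\intercal \mbf{\hat{M}} \mbf{F}_\perp)^{-1} {\mbf{F}_\perp}^\intercal$, so the second summand simplifies to $\mbf{F}_\perp({\mbf{F}_\perp}^\intercal \mbf{\hat{M}} \mbf{F}_\perp)^{-1}{\mbf{F}_\perp}^\intercal {\mbf{Y}^f}^\intercal \mbf{Q}\mbf{r}_t$. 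Passing to the limit $\lambda_g \to 0^+$ via (a), after using Lemma~\ref{lem:solution_set_original} b) to write ${\mbf{Y}^f}^\intercal \mbf{Q}\mbf{r}_t = \mbf{\hat{M}}_0 \mbf{v}$, I would obtain $\mbf{g}_1 + \mbf{F}_\perp({\mbf{F}_\perp}^\intercal \mbf{\hat{M}}_0 \mbf{F}_\perp)^\dagger {\mbf{F}_\perp}^\intercal ({\mbf{Y}^f}^\intercal \mbf{Q}\mbf{r}_t - \mbf{\hat{M}}_0 \mbf{g}_1)$, which by~\eqref{eq:app_g_2} and~\eqref{eq:app_minimum_norm_g_alternative} equals $\mbf{g}_{\min}$.

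The main obstacle is the pseudoinverse limit in (a): because $({\mbf{F}_\perp}^\intercal \mbf{\hat{M}}_0 \mbf{F}_\perp + \lambda_g \mbf{I})^{-1}$ blows up on the kernel of ${\mbf{F}_\perp}^\intercal \mbf{\hat{M}}_0 \mbf{F}_\perp$, the limit of $({\mbf{F}_\perp}^\intercal \mbf{\hat{M}} \mbf{F}_\perp)^{-1} \mbf{C}$ is well-defined only when the columns of $\mbf{C}$ are orthogonal to that kernel, equivalently lie in the range of ${\mbf{F}_\perp}^\intercal \mbf{\hat{M}}_0 \mbf{F}_\perp$. Establishing this range inclusion for $\mbf{C} = {\mbf{F}_\perp}^\intercal \mbf{\hat{M}}_0$ and $\mbf{C} = {\mbf{F}_\perp}^\intercal \mbf{\hat{M}}_0 \mbf{F}$ is the delicate step. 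Once (a) is secured, (b) and (c) reduce to algebraic manipulations using the inversion identities already catalogued in Lemma~\ref{lem:inversions}.
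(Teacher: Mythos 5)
Your proposal is correct, and for parts (b) and (c) it follows the paper's own proof essentially step for step: the same identity~\eqref{eq:app_inversion_equivalence} is used to rewrite $\mbf{\bar g}$ as $\mbf{g}_1$ plus two $\mbf{F}_{\perp}$-terms, the mixed term ${\mbf{F}_{\perp}}^{\intercal}\mbf{\hat{M}}\mbf{F}$ is reduced to ${\mbf{F}_{\perp}}^{\intercal}\mbf{\hat{M}}_0\mbf{F}$ by orthogonality, and Lemma~\ref{lem:solution_set_original}~b) supplies the range condition needed to pass the reference term to the limit and match~\eqref{eq:app_g_2} and~\eqref{eq:app_minimum_norm_g_alternative}. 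The only substantive difference is in part (a): you diagonalize ${\mbf{F}_{\perp}}^{\intercal}\mbf{\hat{M}}_0\mbf{F}_{\perp}$ and then separately verify that the columns of ${\mbf{F}_{\perp}}^{\intercal}\mbf{\hat{M}}_0$ lie in its range (via $\mbf{\hat{M}}_0=\mbf{A}\mbf{A}^{\intercal}$ and $\mathrm{Range}(\mbf{B}\mbf{B}^{\intercal})=\mathrm{Range}(\mbf{B})$ with $\mbf{B}={\mbf{F}_{\perp}}^{\intercal}\mbf{A}$), whereas the paper sets $\mbf{T}:=\mbf{\hat{M}}_0^{1/2}\mbf{F}_{\perp}$, writes the left-hand side of~\eqref{eq:app_pseudo_inverse_limit} as $(\lambda_g\mbf{I}+\mbf{T}^{\intercal}\mbf{T})^{-1}\mbf{T}^{\intercal}\mbf{\hat{M}}_0^{1/2}$, and invokes the pseudo-inverse limit definition together with $\mbf{T}^{\dagger}=(\mbf{T}^{\intercal}\mbf{T})^{\dagger}\mbf{T}^{\intercal}$. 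The factorization makes the ``delicate step'' you flag disappear, since $\mathrm{Range}(\mbf{T}^{\intercal})=\mathrm{Range}(\mbf{T}^{\intercal}\mbf{T})$ holds automatically; your explicit range-inclusion argument is nonetheless valid and reaches the same conclusion, so both routes go through, yours at the cost of one extra verification and the paper's at the cost of introducing $\mbf{\hat{M}}_0^{1/2}$.
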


\begin{proof}
\textbf{Proof of a). } Let $\mbf T:=\mbf{\hat{M}}_0^{1/2}\mbf{F}_{\perp}$. Since ${\mbf{F}_{\perp}}^{\intercal}\mbf{F}_{\perp}=\mbf{I}$:
\[
({\mbf{F}_{\perp}}^{\intercal}\mbf{\hat{M}}\mbf{F}_{\perp})^{-1}{\mbf{F}_{\perp}}^{\intercal}\mbf{\hat{M}}_0=(\lambda_g\mbf{I}+\mbf{T}^{\intercal}\mbf{T})^{-1}\mbf{T}^{\intercal}\mbf{\hat{M}}_0^{1/2}.
\]
The result follows from the pseudo-inverse limit definition and $\mbf T^\dagger = (\mbf T^{\intercal}\mbf{T})^{\dagger}\mbf T^{\intercal}$~\cite[Ch~1]{ben2003generalized}.

\textbf{Proof of b). } Note that the norm remains unchanged if we multiply from the right by $\mbf{F}^{\intercal}$. 
\begin{align*}
& \left\lVert\mbf{\hat{M}}^{-1}\mbf{F}\left(\mbf{F}^{\intercal}\mbf{\hat{M}}^{-1}\mbf{F}\right)^{-1}\right\rVert_2= \left\lVert\mbf{\hat{M}}^{-1}\mbf{F}\left(\mbf{F}^{\intercal}\mbf{\hat{M}}^{-1}\mbf{F}\right)^{-1}\mbf{F}^{\intercal}\right\rVert_2
\end{align*}
By~\eqref{eq:app_inversion_equivalence} and the triangle inequality
\begin{align*}
&\left\lVert\mbf{\hat{M}}^{-1}\mbf{F}\left(\mbf{F}^{\intercal}\mbf{\hat{M}}^{-1}\mbf{F}\right)^{-1}\right\rVert_2=\left\lVert\mbf{I}-\mbf{F}_{\perp}(\mbf{F}_{\perp}^{\intercal}\mbf{\hat{M}}\mbf{F}_{\perp})^{-1}\mbf{F}_{\perp}^{\intercal}\mbf{\hat{M}}\right\rVert_2\\
&\le 1+\left\lVert\mbf{F}_{\perp}({\mbf{F}_{\perp}}^{\intercal}\mbf{\hat{M}}\mbf{F}_{\perp})^{-1}{\mbf{F}_{\perp}}^{\intercal}\mbf{\hat{M}}\right\rVert_2=1+\left\lVert\mbf{F}_{\perp}({\mbf{F}_{\perp}}^{\intercal}\mbf{\hat{M}}\mbf{F}_{\perp})^{-1}{\mbf{F}_{\perp}}^{\intercal}\mbf{\hat{M}}\left[\begin{matrix} \mbf F& \mbf{F}_{\perp} \end{matrix}\right]\right\rVert_2,
\end{align*}
where the last equality follows from the fact that multiplying from the right by $\left[\begin{matrix} \mbf F& \mbf{F}_{\perp} \end{matrix}\right]$ leaves the norm unchanged.
Submatrix $\mbf{F}_{\perp}({\mbf{F}_{\perp}}^{\intercal}\mbf{\hat{M}}\mbf{F}_{\perp})^{-1}{\mbf{F}_{\perp}}^{\intercal}\mbf{\hat{M}}\mbf{F}_{\perp}=\mbf{F}_{\perp}$ has bounded norm. For the other submatrix:
\begin{align*}
\mbf{F}_{\perp}({\mbf{F}_{\perp}}^{\intercal}\mbf{\hat{M}}\mbf{F}_{\perp})^{-1}{\mbf{F}_{\perp}}^{\intercal}\mbf{\hat{M}}\mbf{F}=\mbf{F}_{\perp}({\mbf{F}_{\perp}}^{\intercal}\mbf{\hat{M}}\mbf{F}_{\perp})^{-1}{\mbf{F}_{\perp}}^{\intercal}\mbf{\hat{M}}_0\mbf{F},
\end{align*}
since by orthogonality ${\mbf{F}_{\perp}}^{\intercal}\mbf{\hat{M}}\mbf{F}={\mbf{F}_{\perp}}^{\intercal}\mbf{\hat{M}}_0\mbf{F}$. By~\eqref{eq:app_pseudo_inverse_limit}, we immediately obtain boundedness since:
\begin{align}\label{eq:app_limit_component_of_g}
&\lim_{\lambda_g\rightarrow 0^+}({\mbf{F}_{\perp}}^{\intercal}\mbf{\hat{M}}\mbf{F}_{\perp})^{-1}{\mbf{F}_{\perp}}^{\intercal}\mbf{\hat{M}}\mbf{F}= ({\mbf{F}_{\perp}}^{\intercal}\mbf{\hat{M}}_0\mbf{F}_{\perp})^{\dagger}{\mbf{F}_{\perp}}^{\intercal}\mbf{\hat{M}}_0\mbf{F}.
\end{align}
Hence, both submatrices have bounded norm, which implies boundedness for the original matrix.

\textbf{Proof of c). }Using~\eqref{eq:app_inversion_equivalence}, we can rewrite $\mbf{\bar{g}}$ in a form that resembles $\mbf{g}_{\min}$:
\begin{align}\label{eq:app_g_bar_alternative}
&\mbf{\bar{g}}:=\mbf{F}\mbf{\Sigma}^{-1}\mbf{E}^{\intercal}\mbf{d}_t-{\mbf{F}_{\perp}}^{\intercal}({\mbf{F}_{\perp}}^{\intercal}\mbf{\hat{M}}\mbf{F}_{\perp})^{-1}{\mbf{F}_{\perp}}^{\intercal}\mbf{\hat{M}}\mbf{F}\mbf{\Sigma}^{-1}\mbf{E}^{\intercal}\mbf{d}_t+{\mbf{F}_{\perp}}^{\intercal}({\mbf{F}_{\perp}}^{\intercal}\mbf{\hat{M}}\mbf{F}_{\perp})^{-1}{\mbf{F}_{\perp}}^{\intercal}{\mbf Y^f}^\intercal\mbf{Q}\mbf{r}_t.
\end{align}
By~\eqref{eq:app_limit_component_of_g}, we get that the second term converges to $
-{\mbf{F}_{\perp}}^{\intercal}({\mbf{F}_{\perp}}^{\intercal}\mbf{\hat{M}}_0\mbf{F}_{\perp})^{\dagger}{\mbf{F}_{\perp}}^{\intercal}\mbf{\hat{M}}_0\mbf{F}\mbf{\Sigma}^{-1}\mbf{E}^{\intercal}\mbf{d}_t$.
For the third term, recall from Lemma~\ref{lem:solution_set_original} b), that ${\mbf Y^f}^\intercal\mbf{Q}\mbf{r}_t$ lies in the column space of $\mbf{\hat{M}}_0$, which implies that 
${\mbf Y^f}^\intercal\mbf{Q}\mbf{r}_t=\mbf{\hat{M}}_0\mbf{\hat{M}}^{\dagger}_0{\mbf Y^f}^\intercal\mbf{Q}\mbf{r}_t$. Using this identity and~\eqref{eq:app_pseudo_inverse_limit}, we also get that the third term converges to ${\mbf{F}_{\perp}}^{\intercal}({\mbf{F}_{\perp}}^{\intercal}\mbf{\hat{M}}_0\mbf{F}_{\perp})^{\dagger}{\mbf{F}_{\perp}}^{\intercal}{\mbf Y^f}^\intercal\mbf{Q}\mbf{r}_t$.

The result follows by~\eqref{eq:app_minimum_norm_g_alternative},~\eqref{eq:app_g_bar_alternative} and the convergence of the aforementioned terms.
\end{proof}

\allowdisplaybreaks
\section{CKKS scheme}
\label{app:CKKS}

We present a schematic view of the CKKS leveled homomorphic cryptosystem, introduced in~\cite{Cheon2017homomorphic}. 
The plaintext space is the ring of polynomials $\mcR := \mbZ[X]/\langle \Phi_M(X)\rangle$, where $\Phi_M(X)$ is the $M$-th cyclotomic polynomial of degree $N:=\phi(M)/2$, i.e., $X^N+1$. $N$ is also called the ring dimension for the plaintext space. The ciphertext space for level $l$ is the ring $(\mbZ_{Q_l}[X]/\langle \Phi_M(X)\rangle)^2$, where $Q_l$ is the ciphertext modulus corresponding to level $l$. Define also the residue ring $\mcR_l := \mcR/Q_l\mcR$.

The CKKS scheme supports an efficient packing of up to $N/2$ real values in a single plaintext, respectively a single ciphertext. A real-valued message in $\mbC^{N}$ is encoded in a plaintext in $\mbZ[X]/\langle \Phi_M(X)\rangle$ via the inverse canonical embedding map and rounding--which corresponds to evaluation of the inverse Discrete Fourier Transform (with the primitive $M$-th roots of unity $\mr{exp}(-2\pi i/M)$). The decoding procedure is given by the forward Discrete Fourier Transform. This batching technique allows performing Single Instruction Multiple Data (SIMD) operations on the ciphertexts.

Each operation introduces some noise in the ciphertext, see~\cite{Cheon2017homomorphic} for the magnitude of the different noises. The multiplication operation (along with rescaling) introduces the largest noise. Hence, the multiplicative budget is set from the beginning as a parameter of the scheme, and correct decryption is only guaranteed for the evaluation of circuits of smaller multiplicative depth. 
To deal with precision, in the CKKS scheme, a scaling factor is multiplied to each value. After a multiplication (including multiplication by plaintexts), one scaling factor needs to be removed in order to keep the correctness. This is done through a rescaling procedure which decreases the ciphertext modulus. 

The semantic security of the CKKS scheme assumes that the Decisional Ring Learning with Errors problem is computationally hard~\cite{Lyubashevsky2010ideal}.

We next sketch the algorithms in the basic CKKS scheme. The implementation we use~\cite{PALISADE} employs the Residue Number System optimization detailed in~\cite{Cheon2018full,Halevi2019improved,Blatt2020optimized}.

$\texttt{Setup}(1^\kappa,Q_L)$ Given the security parameter $\kappa$, for the largest ciphertext modulus $Q_L$, output the ring dimension $N$. Set the small distributions $\chi_{key}, \chi_{err}, \chi_{enc}$ over $\mcR$ for the secret, error and encryption.

$\texttt{KeyGen}(\mcR,\chi_{key}, \chi_{err}, \chi_{enc})$: Sample a secret $s\leftarrow \chi_{key}$, a random $a\leftarrow \mcR_L$ and error $e\leftarrow \chi_{err}$. Set the secret key $\mbf{sk} \leftarrow (1,s)$ and output the public key $\mbf{pk}\leftarrow (b,a)\in\mcR^2_L$, where $b\leftarrow -as+e(\bmod Q_L)$.

$\texttt{KSGen}_{\mbf{sk}}(\mcR,\chi_{key}, \chi_{err}, \chi_{enc})$: For $s'\in\mcR$, sample a random $a'\leftarrow \mcR_{2L}$ and error $e'\leftarrow \chi_{err}$. Output the switching key $\mbf{swk}\leftarrow (b',a')\in\mcR^2_{2L}$, where $b'\leftarrow -a's+e'+Q_Ls'(\bmod Q_{2L})$. Output the evaluation key $\mbf{evk}\leftarrow \texttt{KSGen}_{\mbf{sk}}(s^2)$. Output the rotation key $\mbf{rk}^{(\kappa)} \leftarrow \texttt{KSGen}_{\mbf{sk}}(\mbf s^{(\kappa)})$, for rotation index $\kappa$.

$\texttt{Enc}_{\mbf{pk}}(m)$: For $m\in\mcR$, sample $v\leftarrow \chi_{enc}$ and $e_0,e_1\leftarrow \chi_{err}$. Output $\mbf{ct}\leftarrow v\cdot\mbf{pk} + (m+e_0,e_1) (\bmod Q_L)$.

$\texttt{Dec}_{\mbf{sk}}(\mbf{ct})$: For $\mbf{ct} \leftarrow (c_0,c_1) \in\mcR_l^2$, output $\tilde m = c_0 + c_1 \cdot s (\bmod Q_l)$.

$\texttt{CAdd}(\mbf{ct},c)$: For $\mbf{ct} = (c_0,c_1) \in\mcR_l^2$ and $c\in\mcR$, output $\mbf{ct}_{CAdd} \leftarrow (c_0 + c, c_1) (\bmod Q_l)$.

$\texttt{Add}(\mbf{ct}_1,\mbf{ct}_2)$: For $\mbf{ct}_1, \mbf{ct}_2 \in \mcR^2_l$, output $\mbf{ct}_{Add} \leftarrow \mbf{ct}_1 + \mbf{ct}_2 (\bmod Q_l)$.

$\texttt{CMult}(\mbf{ct},c)$: For $\mbf{ct}\in\mcR_l^2$ and $c\in\mcR$, output $\mbf{ct}_{CMult} \leftarrow c\cdot \mbf{ct} (\bmod Q_l)$.

$\texttt{Mult}_{\mbf{evk}}(\mbf{ct}_1,\mbf{ct}_2)$: For $\mbf{ct}_i = (b_i,a_i)\in \mcR^2_l$, for $i=1,2$, let $(d_0,d_1,d_2) = (b_1b_2, a_1b_2 + a_2b_1, a_1a_2) (\bmod Q_l)$. Output $\mbf{ct}_{Mult} \leftarrow (d_0,d_1) + \lfloor Q_L^{-1}\cdot d_2 \cdot \mbf{evk} \rceil (\bmod Q_l)$.

$\texttt{Rotate}_{\mbf{rk}^{(\kappa)}}(\mbf{ct})$: For $\mbf{ct} = (c_0,c_1) \in \mcR^2_l$ and rotation~index $\kappa$, output $\mbf{ct}_{rot} \leftarrow (c_0^{(\kappa)}, 0) + \lfloor Q_L^{-1}\cdot c_1^{(\kappa)} \cdot \mbf{rk}^{(\kappa)} \rceil (\bmod Q_l)$.

$\texttt{ReScale}(\mbf{ct},p)$: For a ciphertext $\mbf{ct}\in\mcR^2_l$ and an integer $p$, output $\mbf{ct}'\leftarrow \lfloor 2^{-p}\cdot \mbf{ct}\rceil(\bmod Q_l/2^p)$.
\allowdisplaybreaks
\section{Implementation details for online solution}
\label{app:online}

\subsection{Ciphertext packing}
\label{appsub:packing}

Assume without loss of generality that we shift the time axis to the left by $M+N$, such that the trajectory concatenation mentioned in Section~\ref{subsec:online} is performed before $t=0$. This will simplify the circuit depth expression. 
In the following, we will address some of the encoding methods we investigated, with the goals to minimize the depth of the resulting circuit and to reduce the memory consumption and time complexity.

\subsubsection{Scalar encoding}
\label{subsubsec:unpacked}
each ciphertext only encodes and encrypts a scalar, i.e., element of a vector or matrix. This version has the smallest multiplicative depth compared to the other packing methods, no rotation key storage, but the largest ciphertext storage and the largest number of operations. It returns $m$ ciphertexts instead of one to the client, unless we add more processing to mask and rotate the elements of $\mbf u_t$ to obtain a single ciphertext. 

\subsubsection{``Hybrid" encoding}
\label{subsubsec:hybrid}
some ciphertexts encode and encrypt individual elements while other ciphertexts encode and encrypt vectors. Specifically, we pack the input and output vectors and columns of input and output matrices, but hold the entries of the matrix $\mbf M_t^{-1}$ and intermediate vectors $\mbf m_t$ as individual ciphertexts. This version returns only one vector for the client to decrypt and involves the circuit depth given in~\eqref{eq:depthRule}. 

The ``hybrid" encoding was implemented in~\cite{Alexandru2020towards}. The conclusions there were that for the best compromise of depth versus client computation and communication, we ask the client to send back an encryption of $1/s_t$, along with an encryption of the input $\mbf u_t$ and output $\mbf y_t$. This gives the depth expression of the denominator and numerator of $\mbf u_t$, for $t>1$):
\begin{align}\label{eq:depthRule}
\begin{split}
	d(D\mbf u_t) = d(s_t) &= 2t + 1, \quad
	d(N\mbf u_t) = 2t + 4.
\end{split}
\end{align}

However, this ``hybrid" solution still incurred a large storage: namely $O((S+t)^2)$ ciphertexts and $O((S+t)^2)$ rotation keys (only $O(S+t)$ rotation keys when there is no refreshing and packing of the matrix $\mbf M_t^{-1}$ required), and $O((S+t)^2)$ computations at the cloud, with large hidden constants. This suggests the need for a more tractable solution, involving a more efficient encoding.

\subsubsection{Vector encoding}
\label{subsubsec:packed}
each ciphertext encodes and encrypts multiple values, i.e. a vector or a column matrix. Via this type of encoding, we aim to keep the minimum possible depth as in the previous two versions, but minimize the memory requirements for ciphertexts, from $O((S+t)^2)$ to $O(S+t)$. At the same time, we want to decrease the rotation key storage from $O((S+t)^2)$ to $O(S+t)$ and keep the number of computations at number of computations at $O((S+t)^2)$.

With the previous ``hybrid" encoding on the inputs in Appendix~\ref{subsubsec:hybrid}, used in~\cite{Alexandru2020towards}, we found it preferable to encode the entries of $\mbf M_t^{-1}$ in individual ciphertexts in order to avoid masking when computing the desired result, which would have increased the depth. However, we found that using redundancy in the stored variables, i.e., repeating the elements in vector encoding of the input columns, along with completely rewriting the way the operations are performed, can alleviate this issue and keep the same depth as before.

We will use the encryption of the repeated elements encoding, $\mr{E_{vr0}}(\cdot)$, for the columns of the Hankel matrices $H\mbf U_t$ and $H\mbf Y_t$, the vectors $\bar{\mbf u}_t, \bar{\mbf y}_t$, the inputs and outputs $\mbf u_t, \mbf y_t$, the reference $\mbf r_t$, and to encode the costs. In the encoding, each element is repeated for $S+\bar T$ times, where $\bar T$ is the maximum number of online collected samples. 

For $t=-M-N+1:0$, follow the steps in the offline feedback solution, cf. Section~\ref{sec:naSolution}.

Let $S' := S + t-1$, for $t\geq 1$. The following plaintexts and ciphertexts are stored at the cloud: 
$\lambda_y$ and $\mbf Q$ encoded as $\mr{e_{vr0}}(\lambda\mbf Q):=$ $\mr{e_{vr0}}\left(\left[\begin{matrix}{\lambda_y} & \ldots & {\lambda_y} & q_{[0]} & \ldots & q_{[pN-1]} \end{matrix}\right]\right)$, $\lambda_u$ and $\mbf R$ encoded as $\mr{e_{vr0}}(\lambda\mbf R):=\mr{e_{vr0}}(\left[\begin{matrix}{\lambda_u} & \ldots & {\lambda_u} \end{matrix} \right. $ $\left. \begin{matrix} r_{[0]} & \ldots & r_{[mN-1]} \end{matrix}\right])$, $\lambda_g$ encoded as $\mr{e_{vr0}}(\lambda_g)$,
$\mr{E_{vr0}}(\mbf y_t)$, 
$\mr{E_{vr0}}(\mbf u_t)$, 
$\mr{E_{vr0}}(\mbf r_t)$,
$\mr{E_{vr0}}(\bar{\mbf y}_{t-1})$,
$\mr{E_{vr0}}(\bar{\mbf u}_{t-1})$ 
and for $i\in [S'+1]$:  
$\mr{E_{v0}}(\mr{col}_i(\mbf M_{t-1}^{-1}))$, 
$\mr{E_{vr0}}(\mr{col}_i(H\mbf Y_{t})) $,
$\mr{E_{vr0}}(\mr{col}_i(H\mbf U_{t})) $. 
To avoid some masking operations, the cloud also stores the rows of $\mbf U_t^f$: $\mr{E_{v0}}(\mr{row}_i(\mbf U_t^f))$ for $i\in[mN]$.

The cloud service then locally computes the following, corresponding to lines 9 and 13 of Algorithm~\ref{alg:online}. 
\begin{align*}
 	&\mr{E_{vr0}}(\bar{\mbf y}_t) = \rho(\mr{E_{vr0}}(\bar{\mbf y}_{t-1}), p(S'+1)) + \rho(\mr{E_{vr0}}(\mbf y_t), (1-p)M(S'+1)) \\
 	&\mr{E_{vr0}}(\bar{\mbf u}_t) = \rho(\mr{E_{vr0}}(\bar{\mbf u}_{t-1}), m(S'+1)) + \rho(\mr{E_{vr0}}(\mbf u_t), (1-m)M(S'+1))\\
	&\mr{E_{vr0}}(\mr{col}_{S'+1}H\mbf Y_t) = \rho(\mr{E_{vr0}}(\mr{col}_{S'}H\mbf Y_t), p(S'+1))+ \rho(\mr{E_{vr0}}(\mbf y_{t}), -p(M+N-1)(S'+1)) \\
	&\mr{E_{vr0}}(\mr{col}_{S'+1}H\mbf U_t) = \rho(\mr{E_{vr0}}(\mr{col}_{S'}H\mbf U_t), m(S'+1)) + \rho(\mr{E_{vr0}}(\mbf u_{t}), -m(M+N-1)(S'+1))\\
	&\mr{E_{v0}}(\mr{row}_i(\mbf U_t^f))= \mr{E_{v0}}(\mr{row}_i(\mbf U_t^f)) +\rho(\mr{E_{vr0}}(\mr{col}_{S'+1}(H\mbf U_t))\odot \mbf e_{i(S+\bar T)}, S'+1), i\in[m]\\
	&\mr{E_{vr0}}(\mbf {\bar y}_t, \mbf r_t) = \mr{E_{vr0}}(\mbf {\bar y}_t) + \rho(\mr{E_{vr0}}(\mbf r_t), -pM(S'+1)).
\end{align*}

The precollected input and output measurements have a multiplicative depth of 0. To reduce the total circuit depth, at time $t+1$, although the cloud service has computed the encryption of $\mbf u_t$, the client sends a fresh encryption of it. This allows us to have, $ \forall t \geq 0$:
\[d(\mr{cols}H\mbf U_t) = d(\mr{cols}H\mbf Y_t) = d(\bar{\mbf u}_t) = d(\bar{\mbf y}_t) = d(\mbf r_t) = 0.\]
The update of the row representation of $\mbf U^f$ needs a masking, so $d(\mr{rows}\mbf U_f) = 1$. 
From the offline computations, we assume that we have fresh encryptions for the columns of $\mbf M_0^{-1}$, hence $d(\mr{cols}\mbf M_0^{-1}) = 0$.

The encrypted operations, their flow and their motivations are described in Section~\ref{subsec:reducing}. 
Compared to the previous approach in~\cite{Alexandru2020towards}, we observed that we can avoid extra computations and a possible extra level in the computations if the server asks the client to compute $1/s_t$ immediately after it computes $s_t$. 
Hence, the server can compute directly the encryptions of $\mbf M_t^{-1}$ and $\mbf u_t$ instead of the numerators $N\mbf M_t^{-1}$ and $N\mbf u_t$. The encrypted operations are described in~\eqref{eq:encrypted_comp_vector}--\eqref{eq:encrypted_comp_vector3}, which correspond to lines 6 and 7 from Algorithm~\ref{alg:online}. 
$\mr{EvalSum}_{S+\bar T}$ refers to rotating and summing batches of $S+\bar T$ slots in the ciphertexts and $\mbf v_t := (\mbf m_t\mbf M_{t-1}^{-1})^\intercal$.
\begin{flalign*}
 	&\mr{E_{vr\ast}}(\mu_t) = \mr{EvalSum}_{S+\bar T} \big(\mr{E_{vr0}}(\mr{col}_{S'+1}H\mbf Y_{t-1})\odot \mr{e_{vr0}}(\lambda\mbf Q) \odot \mr{E_{vr0}}(\mr{col}_{S'+1}H\mbf Y_{t-1}) +\\
 	&\quad+\mr{E_{vr0}}(\mr{col}_{S'+1}H\mbf U_{t-1})\odot \mr{e_{vr0}}(\lambda\mbf R)\odot\mr{E_{vr0}}(\mr{col}_{S'+1}H\mbf U_{t-1}) + \mr{e_{vr0}}(\lambda_g) \\
 	&\mr{E_{vr\ast}}({\mbf m_t}_{[i]}) =  \mr{EvalSum}_{S+\bar T} \big(\mr{E_{vr0}}(\mr{col}_{S'+1}H\mbf Y_{t-1})\odot \mr{e_{vr0}}(\lambda\mbf Q)\odot \mr{E_{vr0}}(\mr{col}_iH\mbf Y_{t-1}) + \\
 	&\quad+\mr{E_{vr0}}(\mr{col}_{S'+1}H\mbf U_{t-1})\odot \mr{e_{vr0}}(\lambda\mbf R) \odot\mr{E_{vr0}}(\mr{col}_iH\mbf U_{t-1}),~i \in [S'+1] \\
	&\mr{E_{v0}(\mbf v_t)} = \sum_{i = 0}^{S'} \mr{E_{v0}}(\mr{col}_i(\mbf M_{t-1}^{-1})) \odot \mr{E_{vr\ast}}({\mbf m_t}_{[i]}) \numberthis\label{eq:encrypted_comp_vector}\\
 	&\mr{E_{vr\ast}}(s_t) =\mr{E_{vr\ast}}(\mu_t)- \sum_{i=0}^{S'} \mr{E_{vr\ast}}({\mbf m_t}_{[i]}) \odot \rho\left(\mr{E_{v0}}(\mbf v_t),i\right). \\[0.5ex]
	&\text{After receiving a fresh encryption of $1/s_t$ from the client:}\\[0.5ex]
 	&\mr{E_{v0}}(\mr{col}_{S'+1}(\mbf M_{t}^{-1})) = \left(-\mr{E_{v0}}(\mbf v_t) + \mbf{e}_{S'+1}\right)\odot \mr{E_{vr0}}(1/s_t) \\
	&\mr{E_{v0}}(\mbf w_i) = \sum_{j=0}^{S'} \mr{E_{v0}}(\mr{col}_j \mbf M_{t-1}^{-1}) \odot \left(\mr{E_{vr\ast}}({\mbf m_t}_{[i]})  \odot \mr{E_{v0}}(1/s_t\,\mbf e_i )\right)\\
 	&\mr{E_{v0}}(\mr{col}_i1/s_t\mbf v_t\mbf v_t^\intercal)= \mr{E_{v0}}(\mbf v_t) \odot \sum_{j=0}^{S'}  \rho\left(\mr{E_{v0}}(\mbf w_i), i-j \right),~i \in [S'+1] \\
	&\mr{E_{v0}}(\mr{col}_{i}(\mbf M_{t}^{-1})) = \mr{E_{v0}}(\mr{col}_{i}\mbf M_{t}^{-1}) - \mr{E_{v0}}(\mr{col}_i(1/{s_t}\,\mbf v_t \mbf v_t^\intercal)) + \numberthis\label{eq:encrypted_comp_vector2}\\
	&\quad+\rho \left( \mr{E_{v0}(\mbf v_t)} \odot (-\mbf e_i\odot\mr{E_{vr0}}(1/s_t)), S'+1\right),~i \in[S'+2]  \\
 	&\mr{E_{vr\ast}} ({\mbf Z_t}_{[j]}) = \mr{EvalSum}_{S+\bar T} \big ( \mr{E_{vr0}}(\mr{col}_{j}H\mbf Y_t )\odot \mr{e_{vr0}}(\lambda\mbf Q)\odot\mr{E_{vr0}}(\mbf{\bar y}_t, \mbf r_t)+ \mr{E_{vr0}}(\mr{col}_{j} H\mbf U_{t})\odot \mr{e_{vr0}}(\lambda\mbf R)\odot \mr{E_{vr0}}(\bar{\mbf u}_{t})  \big)\\ 
    &\mr{E_{v0}}(\boldsymbol\upsilon_{t,i}) = \sum_{k=0}^{S'+1} \sum_{j=0}^{S'+1} \mr{E_{v0}}(\mr{col}_k(\mbf M_{t}^{-1}))  \odot\left (\mr{E_{v0}}(\mr{row}_i \mbf U^f_{t}) \odot \mr{E_{vr\ast}}({\mbf Z_t}_{[j]}) \right),~i\in[m]\\
	&\mr{E_{v\ast}}(\boldsymbol\upsilon_{t}) = \sum_{i=0}^{m-1} \rho \left(\mr{E_{v0}}(\boldsymbol\upsilon_{t,i}, i(S+\bar T) \right). \\[0.5ex]
    &\text{After decryption:}\\[0.5ex]
	&{\mbf u_{t}}_{[i]} = \sum_{j=i(S+\bar T)}^{i(S+\bar T)+S'+1}  \boldsymbol\upsilon_{t,i},~i\in[m]. \numberthis\label{eq:encrypted_comp_vector3}
 \end{flalign*}

Following~\eqref{eq:encrypted_comp_vector}--\eqref{eq:encrypted_comp_vector3}, we get the depth expressions for $t\geq 1$:
 \begin{align}\label{eq:vector_packing_depth}
 \begin{split}
	d(\mbf M_t^{-1}) &= 2t + 3, \quad d(\mbf u_t) = d(\boldsymbol\upsilon_t) = 2t + 4.
 \end{split}
 \end{align}

\begin{remark}\label{rem:ringv0_vector}
	We make an abuse of notation when writing that we obtain $\mr{E_{v0}}(\cdot)$ and $\mr{E_{vr0}}(\cdot)$ after the rotation of a ciphertext with trailing zeros. Actually, the initial non-zero values will be shifted to the end of the encoded vector and we need to maintain a counter on how many times we can perform these rotations. 
	In the vector packing case, we need the parameters to satisfy the following rules, before a masking is necessary:
\begin{align}\label{eq:rules_vector}
\begin{split}
   \mr{ringDim/2} &>  \max(m,p)(N+M+\bar T)(S+\bar T)\\
   \mr{ringDim/2} &>  (S+\bar T)^2 \\
   \mr{ringDim/2} &> \max(mM,p(N+M))t(S+\bar T),
\end{split}
\end{align}
where the first line is for the online collection of samples ($\bar T$ is the total number of samples collected online), the second line is for the correct packing of the matrix into one ciphertext and the last line is the rule for after finalizing the collection of new samples $t\geq \bar T$ (instead of masking, here we can ask the client to prune the junk elements in $\bar{\mbf u}, \bar{\mbf y}$). 
\end{remark}

If the rules in Remark~\ref{rem:ringv0_vector} are not satisfied for the desired parameters of an application, the values can be split into the necessary number of ciphertexts (each packing up to $\mr{ringDim}/2$ values) and the computations can be readily extended to deal with this. 

\subsection{Proof of Lemma~\ref{lemma:s}}
\label{app:proof_precision}

The following identities can be found in~\cite[Ch.~1,3]{ben2003generalized}:
\begin{align}
&\lim\limits_{\delta\rightarrow 0^+} \mbf A^\ast(\delta \mbf I +\mbf A\mbf A^\ast)^{-1} =\mbf A^{\dagger}\label{eq:limit}\\
&\mbf A^\dagger = \mbf A^\ast {\mbf A^\dagger}^\ast \mbf A^\dagger.\label{eq:pseudo_id}
\end{align}

       (a) Using~\eqref{eq:limit}, we obtain:
\begin{align*}
    &\lim\limits_{\lambda_g\rightarrow 0^+} s = \mbf h^\intercal \mbf P \mbf h  -\mbf h^\intercal \mbf P^{1/2}\big(\lim\limits_{\lambda_g\rightarrow 0^+} \mbf P^{1/2} \mbf H (\mbf H^\intercal \mbf P\mbf H + \lambda_g\mbf I)^{-1}\mbf H^\intercal \mbf P^{1/2}\big) \mbf P^{1/2} \mbf h\\
    &= \mbf h^\intercal \mbf P^{1/2} \left( \mbf I - (\mbf P^{1/2} \mbf H) (\mbf P^{1/2}\mbf H)^\dagger \right) \mbf P^{1/2} \mbf h,
\end{align*}   
Furthermore, $\left( \mbf I - (\mbf P^{1/2} \mbf H) (\mbf P^{1/2}\mbf H)^\dagger \right)$ is orthogonal onto the range of $\mbf P^{1/2} \mbf H$. From the behavioral system definition (under the condition of persistancy of excitation) $\mbf P^{1/2} \mbf h \in \mr{Range}(\mbf P^{1/2} \mbf H)$, so $\lim\limits_{\lambda_g\rightarrow 0^+} s = 0$. 

       (b) We use the L'H\^ospital rule (because $\lim\limits_{\lambda_g\rightarrow 0^+} s = 0$):
\begin{align*}
&\lim\limits_{\lambda_g\rightarrow 0^+} \frac{s}{\lambda_g} = \lim\limits_{\lambda_g\rightarrow 0^+} \frac{\partial s/\partial \lambda_g}{\partial \lambda_g/\partial \lambda_g}  = 1 + \lim\limits_{\lambda_g\rightarrow 0^+} 
\mbf h^\intercal \mbf P \mbf H (\mbf H^\intercal \mbf P\mbf H + \lambda_g\mbf I)^{-2}\mbf H^\intercal \mbf P \mbf h \\
& = 1 + \mbf h^\intercal \mbf P^{1/2} (\mbf H^\intercal \mbf P^{1/2})^\dagger (\mbf P^{1/2}\mbf H)^\dagger \mbf P^{1/2}\mbf h,
\end{align*}  
where we used the pseudoinverse limit definition~\eqref{eq:limit}. 

       (c) The second part goes to 0 as $\lambda_g\rightarrow\infty$:
\begin{align*}
&\lim\limits_{\lambda_g\rightarrow \infty} \frac{s}{\lambda_g} = 1 + \lim\limits_{\lambda_g\rightarrow \infty} \frac{ \mbf h^\intercal \mbf P \mbf h}{\lambda_g} -\frac{\mbf h^\intercal \mbf P \mbf H (\mbf H^\intercal \mbf P\mbf H + \lambda_g\mbf I)^{-1}\mbf H^\intercal \mbf P \mbf h}{\lambda_g}.
\end{align*}

       (d) The proof is laborious so, for conciseness, we do not present the complete derivations. 
       We show that $f(\lambda_g)$ is decreasing by showing that $\frac{\partial f(\lambda_g)}{\partial \lambda_g}:= -\frac{g(\lambda_g)}{\lambda_g^2}$ takes only negative values, which we show by proving $\lim\limits_{\lambda_g\rightarrow 0^+} g(\lambda_g) \geq 0$ and $\frac{\partial g(\lambda_g)}{\partial \lambda_g} \geq 0$. To this end, we used again the pseudoinverse limit definitions and that $(\mbf P^{1/2}\mbf H)(\mbf P^{1/2}\mbf H)^\dagger$ is an orthogonal projection onto the range of $(\mbf P^{1/2}\mbf H)$ and hence, its eigenvalues are in $\{0,1\}$.
\hfill$\qedsymbol$

Let us investigate (b) further. Let $\mbf h = \mbf H\boldsymbol\eta$ for some vector $\boldsymbol\eta$, that in slowly-varying systems has a small magnitude. Then, by equation~\eqref{eq:pseudo_id}, we get that:
\begin{align*}
&\lim\limits_{\lambda_g\rightarrow 0^+} \frac{\partial s/\partial \lambda_g}{\partial \lambda_g/\partial \lambda_g} = 1 + \boldsymbol\eta^\intercal (\mbf P^{1/2}\mbf H)^\dagger (\mbf P^{1/2}\mbf H) \boldsymbol \eta.
\end{align*} 
$(\mbf P^{1/2}\mbf H)^\dagger (\mbf P^{1/2}\mbf H)$ is an orthogonal projection operator, i.e., its eigenvalues are in $\{0,1\}$, meaning that the quantity $\boldsymbol\eta^\intercal (\mbf P^{1/2}\mbf H)^\dagger (\mbf P^{1/2}\mbf H) \boldsymbol \eta$ has a small magnitude. Then, the second term in (b) is small.
\end{document}